\def\Be{{\bf e}}
\def\Bn{{\bf n}}
\def\Bw{{\bf w}}
\def\Bx{{\bf x}}
\DeclareMathAlphabet{\itbf}{OML}{cmm}{b}{it}
\def\by{{{\bf y}}}
\def\bx{{{\bf x}}}
\def\hbx{{\hat{\bx}}}
\def\be{{{\itbf e}}}
\def\bn{{{\bf n}}}
\def\bu{{{\bf u}}}
\def\bv{{{\bf v}}}
\def\bw{{{\bf w}}}
\def\bd{{{\bf d}}}
\newcommand{\bG}{\mathbf{G}}
\newcommand{\bGamma}{\mathbf{\Gamma}}
\def\bgamma{{\boldsymbol{\gamma}}}
\def\bpsi{{\boldsymbol{\psi}}}
\def\bphi{{\boldsymbol{\varphi}}}
\newcommand{\bp}{\mathbf{p}}
 \newcommand{\hbd}{\hat{\bd}}
\newcommand{\RR}{\mathbb{R}}
\newcommand{\ZZ}{\mathbb{Z}}
\newcommand{\R}{\mathbb{R}}
\newcommand{\bS}{\mathbf{S}}
\newcommand{\K}{{\kappa}}
\newcommand{\dis}{\displaystyle}
\newcommand{\OL}{\mathcal{L}}
\newcommand{\bW}{\mathbf{W}}
\newcommand{\I}{\mathbf{I}}
\newcommand{\C}{\mathcal{C}}
\newcommand{\CC}{\mathbb{C}}
\newcommand{\bH}{\mathbf{H}}
\newcommand{\ds}{\displaystyle}
\newcommand{\NN}{\mathbb{N}}
\newcommand{\Kcal}{\mathcal{K}}
\def\nm{\noalign{\medskip}}
\newcommand{\bV}{\mathbf{V}}
\newcommand{\bP}{\mathbf{P}}
\newcommand{\bJ}{\mathbf{J}}
\newcommand{\bF}{\mathbf{F}}
\newcommand{\snr}{{\rm SNR}}
\newcommand{\noise}{{\rm noise}}
\newtheorem{thm}{Theorem}[section]
\newtheorem{cor}[thm]{Corollary}
\newtheorem{lem}[thm]{Lemma}
\newtheorem{defn}[thm]{Definition}
\newtheorem{rem}[thm]{Remark}
\DeclareMathOperator*{\argmin}{arg\,min}
\numberwithin{equation}{section}
\newcommand{\pathfigures}{Figures/}
\begin{document}

\title{
Two-Dimensional  Elastic Scattering Coefficients and Enhancement of Nearly Elastic Cloaking
\thanks{This research was supported by the Ministry of Science, ICT and Future Planning through the National Research Foundation of Korea grant NRF-2015H1D3A106240 (to A.W. through the Korea Research Fellowship Program), by the National Research Foundation of Korea under Grants NRF-2016R1A2B3008104  and NRF-2014R1A2A1A11052491 (A.W. and J.C.Y), and R\&D Convergence Program of National Research Council of Science and Technology of Korea grant no. CAP-13-3-KERI (to A.W. and J.C.Y.). The work of G.H. is partially supported by the NSFC grant no. 11671028 and the 1000-Talent Program of Young Scientists in China. }
}
\author{
Tasawar Abbas
\thanks{\footnotesize Department of Mathematics and Statistics, Faculty of Basic and Applied Sciences, International Islamic University, 44000, Islamabad, Pakistan (tasawar44@hotmail.com).}
\and
Habib Ammari
\thanks{\footnotesize Department of Mathematics, ETH Z\"{u}rich, Ramistrasse 101, CH-8092 Z\"{u}rich, Switzerland (habib.ammari@math.ethz.ch).}
\and
Guanghui Hu
\thanks{\footnotesize  Beijing Computational Science Research Center,
Building 9, East Zone,  ZPark II, No.10 Xibeiwang East Road, Haidian District, 100193 Beijing, China (hu@csrc.ac.cn).}
\and
Abdul Wahab
\thanks{\footnotesize Bio Imaging and Signal Processing Laboratory, Department of Bio and Brain Engineering, Korea Advanced Institute of Science and Technology, 291 Daehak-ro, Yuseong-gu, Daejeon 305-701, Korea (wahab@kaist.ac.kr, jong.ye@kaist.ac.kr).}\,
\thanks{Address all correspondence to A. Wahab at Email: wahab@kaist.ac.kr, Ph. +82-42-350-4360, Fax: +82-350-4310.}
\and
Jong Chul Ye \footnotemark[5]
}
\maketitle
\begin{abstract}
The concept of scattering coefficients has played a pivotal role in a broad range of inverse scattering and imaging problems in acoustic, and electromagnetic media. In view of their promising applications in inverse problems related to mathematical imaging and elastic cloaking, the notion of elastic scattering coefficients  of an inclusion is presented  from the perspective of boundary layer potentials and a few properties are discussed.  A reconstruction algorithm is developed and analyzed for extracting the elastic scattering coefficients  from multi-static response measurements of the scattered field in order to cater to inverse scattering problems. The decay rate, stability and error analyses, and the estimate of maximal resolving order in terms of the signal-to-noise ratio are discussed. Moreover, scattering-coefficients-vanishing structures are designed and their utility for enhancement of nearly elastic cloaking is elucidated.
\end{abstract}

\noindent {\footnotesize {\bf AMS subject classifications 2000.} Primary: 35L05, 35R30, 74B05, 74J20, 78A46.}

\noindent {\footnotesize {\bf Key words.} Elastic scattering, Scattering coefficients, Elastic cloaking, Inverse scattering.}

\section{Introduction}\label{s:intro}

The matrix theory for scattering of waves by obstacles of various geometric nature (e.g., crack, inclusions, cavities, etc.) has been the subject of numerous investigations since 1960's. The transformation matrix (or so-called T-matrix) approach has been of particular interest. It was introduced by Waterman \cite{Waterman-em} in 1965  in order to study the scattering of electromagnetic waves and later on to discuss the scattering of acoustic \cite{Waterman-a} and elastic waves \cite{Waterman}. The method is based on the fact that incident and scattered fields  both admit multipolar expansions  in terms of wave functions thanks to the Jacobi-Anger decomposition and  wave addition theorems. The coefficients of the expansion of the field scattered by a given obstacle are connected to those of the incident field by an infinite matrix, coined as T-matrix. Such a matrix is independent of the choice of incident field, and dependents only on the morphology of the obstacles and the frequency of incidence. Therefore, the same matrix can be used to connect any incident field to the corresponding scattered field for a given obstacle and frequency of incidence. Since its inception, this concept has received great attention by researchers for various applications of wave scattering. It is especially the case when numerical computations become necessary because it is computationally efficient to use T-matrices for numerical simulations of scattering phenomena \cite{Ganesh10, Martin03}. In addition, this approach has been of great advantage in dealing with multiple scattering since the T-matrices corresponding to different obstacles can be combind easily using translation-addition theorems. The interested readers are referred to the monographs \cite{Martin06, Varadan80, Dassios2000} and to the survey articles \cite{Rev1, Rev2, Rev3} for detailed accounts of related work.

In T-matrix approach, the series expansions of the incident and the scattered fields are obtained in relevant complete orthogonal bases. Then, the series coefficients of the scattered field are linked to those of the incident field by T-matrix using Lippmann-Schwinger representation of the scattered field based on the conditions imposed on the boundary of the obstacle \cite{Waterman-em, Waterman-a}.  Most often, especially in elasticity,  the incident field is expanded using a real basis of cylindrical or spherical wave functions (composed of real surface harmonics and Bessel functions).  This renders a symmetric T-matrix and a unitary scattering matrix  (or simply  S-matrix, defined in terms of T-matrix) thanks to the principles of  reciprocity and the conservation of energy in loss-less media \cite{Varath, Waterman}.

In this article, we deal with the elastic scattering by an inclusion using a  complex basis of eigenvectors of the Lam\'{e} equation  (defined in terms of complex vector harmonics) for incident field and a rigorous  integral representation of the scattered field in terms of layer potentials, unlike the standard T-matrix approach  presented in \cite{Varath, Waterman}.  The elements of the resulting matrix connecting the coefficients of the scattered and incident fields  are coined as elastic scattering coefficients (ESC).  The impetus behind this study is the enhancement of nearly elastic cloaking and promising applications of the ESC in mathematical imaging and inverse scattering.

The ESC can be perceived as a natural extension of the concept of elastic moment tensor \cite{Princeton}  with respect to frequency dependence.  Thanks to the integral representation of the scattered field in terms of layer potentials, the ESC can be explicitly defined using boundary densities solving a system of boundary integral equations.  In addition, these frequency-dependent geometric objects contain rich information about the contrast of material parameters, high-order shape oscillations, frequency profile, and the maximal resolving power of an imaging setup.  The concept of scattering coefficients in acoustic and electromagnetic media has been effectively used for inverse medium scattering and mathematical understanding of super-resolution phenomena in imaging
\cite{medium, Shape}.

The scattering coefficients were proved to be felicitous to design enhanced near invisibility cloaks  in acoustic and electromagnetic media \cite{Helmholtz, Maxwell}.
The invisibility cloaking is an exciting area of interest nowadays. The idea of invisibility cloaking has been proved to be scientifically realizable in many investigations (see, for instance,  \cite{PenSchSmi, GLU, GLU2, Greenleaf1, Greenleaf2, Leo, MBW}). Significant progress has been made recently on the control of conductivity equations \cite{Ammari1, GLU, GLU2},  acoustic \cite{Helmholtz, BaoLiu, ChenChan},  electromagnetic \cite{Maxwell, BaoLiuZou} and elastic waves \cite{HL2015, Diatta, Diatta2, Farhat} using curvilinear transformations of coordinates. In fact, an invisibility cloak is perceived as a meta-material that maps a concealment region into a surrounding shell by virtue of a transformation and thereby making the material parameters strongly heterogeneous and anisotropic, however fulfilling impedance matching with the surrounding vacuum.

The aim of this article is three-fold:  (a) to present the ESC and discuss some of their properties  indispensable for this investigation, (b) to propose and analyze a reconstruction framework for the recovery of the ESC to cater to direct and inverse scattering problems,  (c) to design scattering coefficients vanishing elastic structures for the enhancement of the nearly elastic cloaking devices.  We restrict ourselves to a two-dimensional case, however, the three-dimensional case is amenable to the same treatment with slight changes. First, we consider elastic wave scattering from an inclusion embedded in an otherwise homogeneous medium and define the associated ESC using the cylindrical eigenfunctions of the Lam\'e equation and the integral representation of the scattered field in terms of layer potentials. Then, a least-squares optimization algorithm is designed for the reconstruction of the significant ESC from the full aperture Multi-Static Response (MSR) data collected using a circular acquisition system. Multistatic imaging involves two steps. The first step consists in recording the waves  generated by point sources on an  array of receivers. The second step consists in processing the recorded matrix data in order to estimate some features of the medium  \cite{iakovleva, MSRI-Book}.  The stability, truncation error and maximal resolving order of the reconstruction procedure are analytically quantified.
 The results contained in this article can cater to many inverse scattering problems, especially for shape identification and classification in elastic media. The interested readers are referred to \cite{LimYu} and articles cited therein for comprehensive details on shape identification in elastic media.
Finally, we design mathematical structures with vanishing scattering coefficients (S-vanishing structures) and elaborate a framework for the enhancement of nearly elastic cloaking.

The contents of this article are organized in the following manner. Some notation and a few preliminary results on layer potential theory of elastic scattering  are collected in Section \ref{s:form}. In Section \ref{s:ESC},  the ESC  are defined and their important features are discussed.   Section \ref{s:recon} is dedicated to the reconstruction framework for the ESC. The enhancement procedure for elastic cloaking is elaborated in Section \ref{s:cloaking}. Finally, in  Section \ref{s:conc}, we sum up the important contributions of this investigation and discuss interesting applications of the ESC in mathematical imaging.

\section{Elements of Layer Potential Theory}\label{s:form}

Since this article is concerned with elastic scattering and the integral formulation of the scattered field is the key component to define the ESC, we feel it best to pause and introduce some background material from layer potential theory for linear time-harmonic elasticity. For details beyond those we provide in this section, please refer to the monographs \cite{Princeton, Kupradze79}.

\subsection{Preliminaries and Notation}\label{ss:pre}

To simplify matters, we confine ourselves to the two-dimensional case throughout this article. However, the three-dimensional case is amenable to the same treatment with appropriate changes.

For any sufficiently smooth, open and bounded domain $\Omega\subset\RR^2$ with $\C^2-$ boundary $\partial \Omega$, we define $L^2(\Omega)$ in the usual way with norm
$$
\|u\|_{L^2(\Omega)}:=\left(\int_\Omega|u|^2d\bx\right)^{1/2},
$$
and the Hilbert space $H^1(\Omega)$ by
$$
H^1(\Omega):=\big\{u\in L^2(\Omega)\,|\, \nabla u\in L^2(\Omega)\big\},
$$
 with norm
$$
\|u\|_{H^1(\Omega)}:=\left(\|u\|_{L^2(\Omega)}^2+\|\nabla u\|_{L^2(\Omega)}^2\right)^{1/2}.
$$
We define $H^2(\Omega)$ as the space of functions $u\in H^1(\Omega)$ such that $\partial_{ij} u\in L^2(\Omega)$  for all $i,j=1,2$, and  $H^{3/2}(\Omega)$ as the interpolation space $[H^1(\Omega), H^2(\Omega)]_{1/2}$. Let $\mathbf{t}$ be the tangent vector to $\partial\Omega$ at point $\bx$ and let $\partial/\partial {\mathbf{t}}$ denote the tangential derivative. Then, we say that $u\in H^1(\partial\Omega)$ if $u\in L^2(\partial \Omega)$ and $\partial u/\partial \mathbf{t}\in L^2(\partial \Omega)$. Refer to \cite{Bergh} for further details.

Consider a homogeneous isotropic elastic material, occupying a bounded domain $D\subset\R^2$ with connected $\C^2-$boundary $\partial D$,  compressional and shear moduli $\lambda_1\in\RR_+$ and $\mu_1\in\RR_+$ respectively, and density $\rho_1\in\RR_+$. Let the exterior domain $\RR^2\setminus \overline{D}$ be loaded with different elastic material having parameters $\rho_0,\lambda_0,\mu_0\in\RR_+$ such that
\begin{equation}\label{Lame-Conditions}
(\lambda_0-\lambda_1)^2+(\mu_0-\mu_1)^2 \neq 0
\quad\text{and}\quad
(\lambda_0-\lambda_1)(\mu_0-\mu_1)\geq 0.
\end{equation}
To facilitate latter analysis,  we introduce piecewise defined parameters
\begin{align*}
\lambda(\bx)&:=\lambda_0\chi_{(\RR^2\setminus D)}(\bx)+\lambda_1\chi_{(D)}(\bx),
\\\nm
\mu(\bx)&:=\mu_0\chi_{(\RR^2\setminus D)}(\bx)+\mu_1\chi_{(D)}(\bx),
\\\nm
\rho(\bx)&:=\rho_0\chi_{(\RR^2\setminus D)}(\bx)+\rho_1\chi_{(D)}(\bx),
\end{align*}
where $\chi_\Omega$ represents the characteristic function of a domain $\Omega$.
We also define the linear elasticity operator  $\OL_{\lambda_0,\mu_0}$  and the surface traction operator (or conormal derivative) ${\partial}/{\partial \nu}$, associated with parameters $(\lambda_0,\mu_0)$ by
\begin{equation*} 
\OL_{\lambda_0,\mu_0} [\bw]:=\left[\mu_0\Delta\bw+(\lambda_0+\mu_0)\nabla\nabla\cdot\bw\right],
\end{equation*}
and
\begin{equation*} 
\dis\frac{\partial \bw}{\partial \nu}:=\lambda_0(\nabla\cdot\bw)\bn+2\mu_0\left(\nabla^s\bw\right)
\bn,
\end{equation*}
for all sufficiently smooth vector fields $\bw:\RR^2\to \RR^2$, where $\bn\in\RR^2$ represents the outward unit normal to $\partial D$, $\nabla^s\bw=(\nabla\bw+(\nabla\bw)^\top)/2$ is the linear elastic strain  and the superscript $\top$ reflects a transpose operation.

Let $\omega>0$ be the angular frequency of the mechanical oscillations. We denote the outgoing fundamental solution of the time-harmonic elasticity equation in $\RR^2$ with parameters $(\lambda_0,\mu_0,\rho_0)$ by $\bGamma^\omega$, i.e., for all $\bx\in\RR^2$,
\begin{equation*}  
(\OL_{\lambda_0,\mu_0}+\rho_0\omega^2\mathcal{I})\bGamma^\omega(\bx)=-\delta_{0}(\bx)\I_2,\qquad\forall \bx\in\RR^2,
\end{equation*}
subject to the \emph{Kupradze's} outgoing radiation conditions. Here $\delta_\by$ is the Dirac mass at $\by$, $\mathcal{I}$ is the identity operator, and $\I_2\in\RR^{2\times 2}$ is the identity matrix.  Let $\K_{\alpha}:={\omega}/{c_{\alpha}}$ for $\alpha=P,S$, where the constants $c_{S}=\sqrt{{\mu_0}/{\rho_0}}$ and $c_{P}=\sqrt{{\lambda_0+ 2 \mu_0}/{\rho_0}}$ refer to background shear and pressure wave speeds respectively.
It is well known that (see, for instance, \cite{Morse}) 
\begin{equation}\label{Green_fun}
\bGamma^\omega(\bx)=\frac{1}{\mu_0}\left[\dis\left(\I_2+\frac{1}{\K_S^2}\nabla\nabla^\top\right)g(\bx,\K_S)-\frac{1}{\K_S^2}\nabla\nabla^\top g(\bx,\K_P)\right],\quad \bx\in\R^2\setminus\{ \mathbf{0}\}.
\end{equation}
The function $g(\cdot,\K_\alpha)$ is the fundamental solution to the Helmholtz operator $-(\Delta+\K_\alpha^2\mathcal{I})$  in $\RR^2$ with wave-number $\K_\alpha\in \RR_+$ ($\K_\alpha=\K_P$ or $\K_\alpha$), i.e.,
$$
(\Delta+\K_\alpha^2\mathcal{I}) g(\bx,\K_\alpha)=-\delta_{\bf 0}(\bx),\quad \bx\in\RR^2,
$$
subject to the \emph{Sommerfeld's} outgoing radiation condition
$$
\lim_{|\bx|\to+\infty}|\bx|^{1/2}\left[\frac{\partial g(\bx,\K_\alpha)}{\partial\bn}-i\K_\alpha g(\bx,\K_\alpha)\right]=0, \qquad \bx\in\RR^2,
$$
where ${\partial}/{\partial\bn}$ represents the normal derivative. In two dimensions,
\begin{equation}
g(\bx,\K_\alpha)=\dis\frac{i}{4} H^{(1)}_0(\K_\alpha|\bx|),\quad\forall\bx\in\RR^2\setminus\{\mathbf{0}\},\label{green-fn-g}
\end{equation}
where $H^{(1)}_0$ is the Hankel function of first kind of order zero (see, for instance, \cite{nedelec}). Throughout this article, we use the convention $\bGamma^\omega(\bx,\by)=\bGamma^\omega(\bx-\by)$, i.e.,
\begin{equation*}
(\OL_{\lambda_0,\mu_0}+\rho_0\omega^2\mathcal{I})\bGamma^\omega(\bx,\by)=-\delta_{\by}(\bx)\I_2,\qquad\forall \bx,\by\in\RR^2.
\end{equation*}
We also reserve the notation $\alpha$ and $\beta$ to represent pressure (P) and shear (S) wave-modes, i.e., $\alpha, \beta \in\{P,S\}$.

\subsection{Scattered Field and Integral Representation}\label{ss:int}

Let us begin this subsection by introducing the elastic single layer potential
\begin{eqnarray*}
\mathcal{S}_D^\omega[\bphi](\bx) :=\dis\int_{\partial D}\bGamma^\omega(\bx,\by)\bphi(\by)d\sigma(\by),\qquad \qquad \bx\in\RR^2\setminus\partial D,
\end{eqnarray*}
for all densities $\bphi\in L^2(\partial D )^2$.
Here and throughout this article $d\sigma$ denotes the infinitesimal boundary differential element. We also need the boundary integral operator
$$
(\Kcal_D^\omega)^*[\bphi](\bx) = {\rm p.v.}\dis\int_{\partial D}\frac{\partial}{\partial\nu_\bx}\bGamma^\omega(\bx,\by)\bphi(\by)d\sigma(\by),\quad{\rm a.e.}\quad\bx\in\partial D,
$$
for all $\bphi\in L^2(\partial D)^2$. Here  p.v.  stands for Cauchy principle value of the integral and the surface traction of matrix  $\bGamma^\omega$ is defined column-wise, i.e., for all constant vectors $\mathbf{p}\in\RR^2$
$$
\left[\frac{\partial\bGamma^\omega}{\partial\nu}\right]\mathbf{p}=\frac{\partial\left[\bGamma^\omega\mathbf{p}\right]}{\partial\nu}.
$$
We recall that the traces $ \mathcal{S}^\omega_D[\bphi]\big|_{\pm}$ and ${\partial(\mathcal{S}_D^\omega[\bphi])}/{\partial \nu}\big|_{\pm}$ are well-defined and satisfy the jump conditions (see, for instance, \cite{Dahlberg})
\begin{eqnarray}
\label{Sjumps}
\begin{cases}
\dis\mathcal{S}_D^\omega[\bphi]\big|_{+}(\bx)=\mathcal{S}_D^\omega[\bphi]\big|_{-}(\bx),
\\\nm
\ds\frac{\partial(\mathcal{S}_D^\omega[\bphi])}{\partial \nu}\Big|_{\pm}(\bx)
=\left(\pm\frac{1}{2}I+(\mathcal{K}_D^\omega)^*\right)\bphi(\bx), \quad{\rm a.e.}\quad\bx\in\partial D.
\end{cases}
\end{eqnarray}
Here and throughout this investigation subscripts $+$ and $-$ indicate the limiting values across the boundary $\partial D$ from outside and from inside domain $D$ respectively, i.e., for any function  $\psi$
$$
\left(\psi(\bx)\right)\big|_\pm=\lim_{\epsilon\to 0^+} \psi(\bx\pm \epsilon\mathbf{n}), \quad \bx\in\partial D.
$$

Consider a time harmonic incident elastic field $\bu^{\rm inc}$ satisfying
\begin{equation}\label{U}
(\OL_{\lambda_0,\mu_0}+\rho_0\omega^2\mathcal{I})\bu^{\rm inc}(\bx) =0, \quad \forall \bx\in\RR^2.
\end{equation}
Then, the total displacement field $\bu^{\rm tot}=\bu^{\rm sc}+\bu^{\rm inc}$ in the presence of inclusion $D$ satisfies the transmission problem
\begin{equation}
\begin{cases}
(\OL_{\lambda,\mu}+\rho\omega^2\mathcal{I})\bu^{\rm tot}(\bx)=0, \quad  \forall\bx\in\RR^2,
\\
\nm
\text{
$(\bu^{\rm tot}-\bu^{\rm inc})(\bx)=: \bu^{\rm sc}(\bx)$  satisfies  Kupradze's radiation condition as $|\bx|\to+\infty$,}\label{u}
\end{cases}
\end{equation}
where $\bu^{\rm sc}$ denotes the scattered field. It is recalled  that the total field $\bu^{\rm tot}$ admits the integral representation (see\cite[Theorem 1.8]{Princeton}) 
\begin{equation}\label{u-int-rep}
\bu^{\rm tot}(\bx,\omega)=
\begin{cases}
\bu^{\rm inc}(\bx,\omega)+\mathcal{S}^\omega_D[\bpsi](\bx,\omega), & \bx\in\RR^2\setminus\overline{D},
\\\nm
\widetilde{\mathcal{S}}^\omega_D[\bphi](\bx,\omega), & \bx\in D,
\end{cases}
\end{equation}
in terms of single layer potentials $\mathcal{S}^\omega_D$ and $\widetilde{\mathcal{S}}^\omega_D$, where the densities $\bphi, \bpsi\in L^2(\partial D)^2$ satisfy the system of integral equations
\begin{equation}\label{integral-system}
\begin{pmatrix}
\widetilde{\mathcal{S}}_D^\omega & -{\mathcal{S}}_D^\omega
\\\nm
\ds
\frac{\partial}{\partial\widetilde{\nu}}\widetilde{\mathcal{S}}_D^\omega\Big|_-
&
\ds
-\frac{\partial}{\partial\nu}{\mathcal{S}}_D^\omega\Big|_+
\end{pmatrix}
\begin{pmatrix}
\bphi
\\\nm
\bpsi
\end{pmatrix}
=
\dis
\begin{pmatrix}
\bu^{\rm inc}
\\\nm
\ds\frac{\partial\bu^{\rm inc}}{\partial\nu}
\end{pmatrix}\Bigg|_{\partial D}.
\end{equation}
Here a superposed $\sim$ is used to distinguish the  single layer potential and the surface traction  defined using interior Lam\'e parameters $(\lambda_1,\mu_1,\rho_1)$. To simply matters, the dependence of $\bu^{\rm inc}$, $\bu^{\rm sc}$, $\bu^{\rm tot}$, $\bphi$ and $\bpsi$ on frequency $\omega$ is suppressed unless it is necessary.

The following result from \cite[Theorem 1.7]{Princeton} guarantees the unique solvability of the system \eqref{integral-system} and consequently that of problems \eqref{u} and \eqref{u-int-rep}.
\begin{thm}\label{thmSolve}
Let $D$ be a Lipschitz bounded domain in $\RR^2$ with parameters $0<\lambda_1,\mu_1,\rho_1<\infty$ satisfying condition \eqref{Lame-Conditions} and let $\omega^2\rho_1$ be different from Dirichlet eigenvalues of the operator $-\OL_{\lambda_1,\mu_1}$ on $D$. For any function $\bu^{\rm inc}\in H^1(\partial D)^2$ there exists a unique solution $(\bphi,\bpsi)\in L^2(\partial D)^2\times L^2(\partial D)^2$ to the integral system \eqref{integral-system}.
Moreover, there exists a constant $C>0$ such that
\begin{align}
\label{stability}
\|\bphi\|_{L^2(\partial D)^2}+\|\bpsi\|_{L^2(\partial D)^2}
\leq C \left(\|\bu^{\rm inc}\|_{H^1(\partial D)^2}+\left\|\frac{\partial\bu^{\rm inc}}{\partial \nu}\right\|_{L^2(\partial D)^2}\right).
\end{align}
\end{thm}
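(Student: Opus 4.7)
My plan is to prove Theorem \ref{thmSolve} by recasting the integral system \eqref{integral-system} as the operator equation $\mathcal{T}^\omega \bXi = \bF$, where
\[
\mathcal{T}^\omega:L^2(\partial D)^2\times L^2(\partial D)^2\longrightarrow H^1(\partial D)^2\times L^2(\partial D)^2
\]
is the bounded matrix operator on the left, $\bXi=(\bphi,\bpsi)^\top$ and $\bF$ is the right-hand Cauchy datum. The strategy splits into three stages: (i) show $\mathcal{T}^\omega$ is Fredholm of index zero as a compact perturbation of a static counterpart, (ii) establish injectivity by reducing the homogeneous system to a uniqueness statement for the elastic transmission problem, and (iii) invoke the Fredholm alternative together with the bounded inverse theorem to obtain existence and the stability bound \eqref{stability}.

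For stage (i), I would write $\mathcal{S}_D^\omega=\mathcal{S}_D^0+\mathcal{R}_D^\omega$ and $\widetilde{\mathcal{S}}_D^\omega=\widetilde{\mathcal{S}}_D^0+\widetilde{\mathcal{R}}_D^\omega$, where $\mathcal{S}_D^0,\widetilde{\mathcal{S}}_D^0$ are the single layer potentials associated with the Kelvin matrix (i.e., the fundamental solution of $\OL_{\lambda_0,\mu_0}$ and $\OL_{\lambda_1,\mu_1}$). The kernels of the remainders $\mathcal{R}_D^\omega$ and $\widetilde{\mathcal{R}}_D^\omega$, as well as their conormal derivatives, possess weaker singularities than the static kernels (the $\omega$-dependent terms in \eqref{Green_fun}--\eqref{green-fn-g} contribute only $\log$-regular and smoother parts, exploiting the small-argument expansion of $H_0^{(1)}$). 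The corresponding perturbation operators are therefore compact on the relevant spaces, and it suffices to verify invertibility of the static operator $\mathcal{T}^0$, which follows from the classical solvability theory of the static Lam\'{e} transmission problem under the contrast hypothesis \eqref{Lame-Conditions}.

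For stage (ii), suppose $(\bphi,\bpsi)\in L^2(\partial D)^2\times L^2(\partial D)^2$ lies in $\ker \mathcal{T}^\omega$ and define
\[
\bu(\bx):=\begin{cases}\widetilde{\mathcal{S}}_D^\omega[\bphi](\bx), & \bx\in D,\\ \mathcal{S}_D^\omega[\bpsi](\bx), & \bx\in\RR^2\setminus\overline{D}.\end{cases}
\]
The first homogeneous equation yields $\bu|_-=\bu|_+$ on $\partial D$, and the second, in conjunction with the jump relations \eqref{Sjumps} applied on both sides of $\partial D$, yields ${\partial \bu}/{\partial\widetilde{\nu}}|_-={\partial \bu}/{\partial\nu}|_+$. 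Hence $\bu$ solves the homogeneous transmission problem \eqref{u} with $\bu^{\rm inc}\equiv 0$ and Kupradze's radiation condition in the exterior. Uniqueness of the exterior radiating solution (via Rellich's lemma for elastic waves and the unique continuation principle, under \eqref{Lame-Conditions}) forces $\bu\equiv 0$ in $\RR^2\setminus\overline{D}$, hence $\bu|_-=\bu|_+=0$. Then $\bu$ satisfies $(\OL_{\lambda_1,\mu_1}+\rho_1\omega^2\mathcal{I})\bu=0$ in $D$ with vanishing Dirichlet data, and the hypothesis that $\omega^2\rho_1$ is not a Dirichlet eigenvalue of $-\OL_{\lambda_1,\mu_1}$ on $D$ gives $\bu\equiv 0$ in $D$. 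The jump relations \eqref{Sjumps} applied to both layer potentials then yield $\bphi=\bpsi=\mathbf{0}$.

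Combining (i) and (ii), the Fredholm alternative makes $\mathcal{T}^\omega$ bijective, and the bounded inverse theorem produces a constant $C>0$ such that \eqref{stability} holds; the $H^1$-norm on $\bu^{\rm inc}$ is the natural companion of the $L^2$ range of the single layer's trace, while the $L^2$-norm on $\partial\bu^{\rm inc}/\partial\nu$ matches the range of the conormal jump. The principal obstacle is the compactness analysis in stage (i): the conormal derivative of the static single layer potential is only bounded (not smoothing) on $L^2(\partial D)^2$, so I would need to decompose the full traction kernel carefully, isolate the principal singular part coming from the static Kelvin matrix, and verify that the frequency-induced remainder gains enough regularity to be a compact operator on $L^2(\partial D)^2$.
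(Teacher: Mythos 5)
The paper does not prove this theorem; it imports it verbatim from \cite[Theorem 1.7]{Princeton}, and your Fredholm-alternative strategy (static decomposition plus compact remainder, injectivity via uniqueness of the transmission problem, bounded inverse theorem) is indeed the standard route taken in that reference. Stages (i) and (iii) are fine in outline: in two dimensions the kernel of $\bGamma^\omega-\bGamma^0$ behaves like $O(|\bx|^2\log|\bx|)$, so its traction is continuous and the remainder is compact on $L^2(\partial D)^2$, and invertibility of the static system under \eqref{Lame-Conditions} is classical. One imprecision in stage (ii): you cannot invoke ``uniqueness of the exterior radiating solution'' directly, because the Cauchy data of $\bu$ on $\partial D$ are not known to vanish a priori; what you need is uniqueness for the full transmission problem, obtained by writing the energy-flux identity on a large disk, transferring the $\partial D$ boundary term to a real interior energy integral via the transmission conditions, and only then applying Rellich's lemma and unique continuation.

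The genuine gap is the last sentence of stage (ii): ``the jump relations then yield $\bphi=\bpsi=\mathbf{0}$.'' From $\bu\equiv 0$ you obtain $\widetilde{\mathcal{S}}^\omega_D[\bphi]\equiv 0$ in $D$ and $\mathcal{S}^\omega_D[\bpsi]\equiv 0$ in $\RR^2\setminus\overline{D}$, hence one one-sided conormal trace of each potential vanishes; the jump formula \eqref{Sjumps} recovers the density only as the \emph{difference} of the two one-sided traces, so you still must control each potential on the \emph{other} side of $\partial D$. For $\bphi$ this is free: $\widetilde{\mathcal{S}}^\omega_D[\bphi]$ extended to the exterior is a radiating solution of the Lam\'e equation with parameters $(\lambda_1,\mu_1,\rho_1)$ vanishing on $\partial D$, and exterior Dirichlet uniqueness needs no spectral condition. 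For $\bpsi$ it is not: $\mathcal{S}^\omega_D[\bpsi]$ restricted to $D$ solves $(\OL_{\lambda_0,\mu_0}+\rho_0\omega^2\mathcal{I})v=0$ with $v|_{\partial D}=0$, and concluding $v\equiv 0$ (hence $\bpsi=\mathbf{0}$) requires precisely that $\omega^2\rho_0$ not be a Dirichlet eigenvalue of $-\OL_{\lambda_0,\mu_0}$ on $D$; otherwise the conormal derivative of an interior eigenfunction furnishes a nontrivial element $(\mathbf{0},\bpsi)$ of the kernel. This is the step where the Dirichlet-eigenvalue hypothesis is actually consumed (and note it concerns the \emph{background} operator on $D$, as in \cite[Theorem 1.7]{Princeton}, not the inclusion's operator as the present restatement has it). You instead spend the hypothesis on showing $\bu\equiv 0$ in $D$ — where it is not the bottleneck, since after Rellich and the transmission conditions $\bu$ has vanishing full Cauchy data there and unique continuation already suffices — and never supply the argument needed for $\bpsi$.
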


\section{Elastic Scattering Coefficients}\label{s:ESC}

This section is dedicated to defining ESC in two dimensions. To facilitate the definition of ESC, we first recall some background material on cylindrical eigenfunctions of the Lam\'e equation, and present the multipolar expansions of the exterior scattered elastic field and the Kupradze fundamental solution $\bGamma^\omega$ in the next subsection.

\subsection{Cylindrical Elastic Waves and Multipolar Expansions}\label{ss:wavefun}

In the sequel, $\hbx:=\bx/|\bx|$ for all $\bx\in\RR^2\setminus\{\mathbf{0}\}$ and $\mathbb{S}:=\{\bx\in\RR^2\,|\, \bx\cdot\bx=1\}$. Any position vector $\bx:=(x_1,x_2)\in\RR^2$  is equivalently expressed as $\bx=(|\bx|\cos\varphi_\bx,|\bx|\sin\varphi_\bx)$, where $\varphi_\bx\in[0,2\pi)$ denotes the polar angle of $\bx$.
Denote by  $\{\hat{\mathbf{e}}_r, \hat{\mathbf{e}}_{\theta}\}$ the orthonormal basis vectors for the polar coordinate system in two dimensions, i.e.,
$$
\hat{\mathbf{e}}_r=(\cos\varphi_\bx, \sin \varphi_\bx)^\top 
\quad\text{and}\quad
\hat{\mathbf{e}}_{\theta}=(-\sin\varphi_\bx, \cos \varphi_\bx)^\top.
$$
We will also require the following \emph{curl} operators in $\RR^2$
$$
\vec{\nabla}_\perp\times f := \left(\partial_2 f , -\partial_1 f\right)^\top
\text{and}
\quad
\nabla_\perp\times \bw:= \partial_1 w_2-\partial_2 w_1,
$$
for any smooth scalar function $f$ and vector $\bw:=(w_1,w_2)^\top$.

 Consider the complex surface vector harmonics in two-dimensions defined by
\begin{align*}
\mathbf{P}_m(\hbx):=e^{im\varphi_\bx}\hat{\mathbf{e}}_r
\quad\text{and}\quad
\mathbf{S}_m(\hbx):=e^{im\varphi_\bx}\hat{\mathbf{e}}_{\theta}, \qquad\qquad\forall\,m\in\mathbb{Z}.
\end{align*}
It is known, see \cite{Morse} for instance, that these cylindrical surface vector potentials enjoy the  orthogonality properties
\begin{align}
&\int_{\mathbb{S}} \bP_n(\hbx)\cdot\overline{\bP_m(\hbx)} d\sigma(\hbx) =2\pi\delta_{nm},
\label{PP}
\\\nm
&\int_{\mathbb{S}} \bS_n(\hbx)\cdot\overline{\bS_m(\hbx)} d\sigma(\hbx) =2\pi\delta_{nm},
\label{SS}
\\\nm
&\int_{\mathbb{S}} \bP_m(\hbx)\cdot\overline{\bS_m(\hbx)} d\sigma(\hbx) =0,
\qquad\forall\,n,m\in\mathbb{Z},
\label{SP}
\end{align}
where $\delta_{nm}$ is the Kronecker's delta function and the superposed bar indicates complex conjugation.

Let $H^{(1)}_m$ and $J_m$ be the cylindrical Hankel and Bessel functions of first kind of order $m\in\mathbb{Z}$, respectively. For each $\K>0$, $\K\in\{\K_P, \K_S\}$, the wave functions $v_{m}(\cdot,\K)$ and $w_{m}(\cdot,\K)$ are constructed by
\begin{align*}
v_{m}(\bx,\K):=H^{(1)}_m(\K|\bx|)e^{im\varphi_{\bx}}
\quad\text{and}\quad
w_{m}(\bx,\K):=J_m(\K|\bx|)e^{im\varphi_{\bx}}.
\end{align*}
It is easy to verify that $v_{m}$ is an outgoing radiating solution  to the Helmholtz equation
$$
\Delta v+\K^2 v=0, \qquad\text{in }\, \RR^2\setminus\{\mathbf{0}\},
$$
and that $w_m$ is an entire solution  to
$$
\Delta v+\K^2 v=0, \qquad\text{in }\, \RR^2.
$$
Using surface vector harmonics $\bP_{m}$, $\bS_{m}$ and functions $v_{m}$, $w_{m}$, we define
\begin{align}
\bH^{P}_{m}(\bx,\K_P)
:=&
\nabla v_m(\bx,\K_P)
=
\K_P\left(H^{(1)}_m(\K_P|\bx|)\right)'\bP_m(\hbx)+\frac{im}{|\bx|}H^{(1)}_m(\K_P|\bx|)\bS_m(\hbx),\label{UP}
\\
\bH^{S}_{m}(\bx,\K_S)
:=&
\vec{\nabla}_\perp\times\left(v_m(\bx,\K_S)\right)
=
\frac{im}{|\bx|}H^{(1)}_m(\K_S|\bx|)\bP_m(\hbx)-\K_S\left(H^{(1)}_m(\K_S|\bx|)\right)'\bS_m(\hbx),\label{US}
\end{align}
and
\begin{align}
\bJ^{P}_{m}(\bx,\K_P)
:=&
\nabla w_m(\bx,\K_P)
=
\K_P\left(J_m(\K_P|\bx|)\right)'\bP_m(\hbx)+\frac{im}{|\bx|}J_m(\K_P|\bx|)\bS_m(\hbx),\label{UPtilde}
\\
\bJ^{S}_{m}(\bx,\K_S)
:=&
\vec{\nabla}_\perp\times\left( w_m(\bx,\K_S)\right)
=
\frac{im}{|\bx|}J_m(\K_S|\bx|)\bP_m(\hbx)-\K_S\left(J_m(\K_S|\bx|)\right)'\bS_m(\hbx),\label{UStilde}
\end{align}
for all  $\K_\alpha>0$ and $m\in\mathbb{Z}$. Here
\begin{align*}
\left(H^{(1)}_m\right)'(t):=\frac{d}{dt}\left[H^{(1)}_m(t)\right]
\quad\text{and}\quad
\left(J_m\right)'(t):=\frac{d}{dt}\left[J_m(t)\right].
\end{align*}
For simplicity, the dependence of $\bJ^\alpha_m$ and $\bH^\alpha_m$ on wave-numbers $\K_\alpha$ is suppressed henceforth.

The functions $\bJ^{P}_{m}$ and $\bJ^{S}_{m}$ are the longitudinal and transverse interior eigenvectors of the Lam\'e system in $\RR^2$. Similarly, ${\bH}^{P}_{m}$ and ${\bH}^{S}_{m}$ are the exterior eigenvectors of the Lam\'e system in $\RR^2\setminus\{\mathbf{0}\}$. Following result on the completeness and linear independence of the interior eigenvectors $\left(\bJ^P_m, \bJ^S_m\right)$ and exterior eigenvectors $\left({\bH}^P_m, {\bH}^S_m\right)$  with respect to $L^2(\partial D)^2-$norm holds. The interested readers are referred to \cite[Lemmas 1-3]{Sevroglou} for further details.

\begin{lem}\label{lemCompleteness}
If $D\subset\RR^2$ is a bounded simply connected domain containing origin and $\partial D$ is a closed Lyapunov curve  then the set $\{\bH^P_m,\bH^S_m:\,\, m\in\mathbb{Z}\}$ is complete and linearly independent in $L^2(\partial D)^2$. Moreover, if $\rho_1\omega^2$ is not a Dirichlet eigenvalue of the Lam\'e equation  on $D$ then the set $\{\bJ^P_m,\bJ^S_m:\,\,m\in\mathbb{Z}\}$ is also complete and linearly independent in $L^2(\partial D)^2$.
\end{lem}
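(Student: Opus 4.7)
I would prove each set is complete and linearly independent by the same two-step strategy: reduce the problem on $\partial D$ to a statement about solutions of the Lamé system in $\mathbb{R}^2\setminus\{\mathbf{0}\}$ (or all of $\mathbb{R}^2$), and then peel off the two wave modes via the Helmholtz decomposition.

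\textbf{Linear independence.} Suppose $\sum_m\bigl(a_m^P\bH^P_m+a_m^S\bH^S_m\bigr)=\mathbf{0}$ on $\partial D$. The tail is a Kupradze-radiating solution of $(\OL_{\lambda_0,\mu_0}+\rho_0\omega^2\mathcal{I})\bu=0$ in $\RR^2\setminus\overline D$; by uniqueness of the exterior Dirichlet problem it vanishes in $\RR^2\setminus\overline D$, and unique continuation (since the coefficients are constant and $\partial D$ is Lyapunov) extends the vanishing to $\RR^2\setminus\{\mathbf{0}\}$. Now I apply the Helmholtz decomposition: the definitions \eqref{UP}--\eqref{US} give $\nabla\cdot\bH^P_m=-\K_P^2 v_m(\cdot,\K_P)$, $\nabla\cdot\bH^S_m=0$, $\nabla_\perp\times\bH^P_m=0$ and $\nabla_\perp\times\bH^S_m=\K_S^2 v_m(\cdot,\K_S)$, so the sum splits into two scalar Helmholtz series $\sum_m a_m^\alpha v_m(\cdot,\K_\alpha)=0$ for $\alpha=P,S$. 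Restricting to a circle of radius $r$ centered at the origin and using the $L^2(\mathbb{S})$-orthogonality of $\{e^{im\varphi}\}$, we obtain $a_m^\alpha H^{(1)}_m(\K_\alpha r)=0$; since $H^{(1)}_m(\K_\alpha r)\not\equiv 0$ in $r$, we conclude $a_m^P=a_m^S=0$. The argument for $\{\bJ^P_m,\bJ^S_m\}$ is analogous: a vanishing combination on $\partial D$ is an entire Lamé solution whose restriction to $D$ is a Dirichlet solution, which must vanish by the spectral hypothesis, and Helmholtz decomposition together with $J_m(\K_\alpha r)\not\equiv 0$ finishes the argument.

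\textbf{Completeness.} I use duality: it suffices to show that every $\bphi\in L^2(\partial D)^2$ orthogonal (in the conjugated $L^2$ pairing) to every $\bH^\alpha_m$ must vanish. The key tool is the Graf--Kupradze expansion of the fundamental tensor,
\begin{equation*}
\bGamma^\omega(\bx,\by)=\sum_{m\in\mathbb{Z}}\Bigl[c_m^P\,\bJ^P_m(\bx)\otimes\overline{\bH^P_m(\by)}+c_m^S\,\bJ^S_m(\bx)\otimes\overline{\bH^S_m(\by)}\Bigr],\qquad |\bx|<|\by|,
\end{equation*}
which follows from the scalar Graf addition theorem applied to the representation \eqref{Green_fun} of $\bGamma^\omega$. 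For $\bx$ in a small ball around the origin contained in $D$, I insert this into $\mathcal{S}_D^\omega[\bphi](\bx)$; the hypothesis makes every coefficient vanish, so $\mathcal{S}_D^\omega[\bphi]\equiv 0$ near $\mathbf{0}$ and, by analyticity of Lamé solutions, throughout $D$. Taking the interior trace gives $\mathcal{S}_D^\omega[\bphi]|_-=0$; the continuity in \eqref{Sjumps} transfers this to $\mathcal{S}_D^\omega[\bphi]|_+=0$, and uniqueness of the exterior Dirichlet problem (with Kupradze radiation) forces $\mathcal{S}_D^\omega[\bphi]=0$ in $\RR^2\setminus\overline D$. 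The conormal jump in \eqref{Sjumps} then yields $\bphi=0$. For $\{\bJ^P_m,\bJ^S_m\}$, I use the complementary expansion valid for $|\bx|>|\by|$ and apply the argument to $\mathcal{S}_D^\omega[\bphi]$ evaluated in the exterior; the spectral assumption is exactly what is needed so that the resulting interior homogeneous problem has only the trivial solution.

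\textbf{Main obstacle.} The routine pieces (jump relations, uniqueness, Helmholtz decomposition) are either in the preliminaries or follow from standard Sommerfeld-type arguments. The delicate step is justifying the vectorial Graf--Kupradze expansion and the term-by-term interchange of the infinite series with the boundary integral against $\bphi\in L^2(\partial D)^2$; this requires uniform convergence of the series on compact subsets of $\{|\bx|<\inf_{\by\in\partial D}|\by|\}$, which one obtains from the uniform asymptotics of $H^{(1)}_m$ and $J_m$ for large $|m|$. Once that analytic expansion is in hand, the rest of the argument is essentially a boundary-integral reformulation of the scalar Rellich-type completeness theorem.
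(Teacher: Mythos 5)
The paper itself offers no proof of this lemma; it is imported wholesale from \cite[Lemmas 1--3]{Sevroglou}. So your proposal cannot be matched against an in-paper argument, but what you describe is the standard potential-theoretic route (duality, addition theorem for $\bGamma^\omega$, jump relations \eqref{Sjumps}, interior/exterior uniqueness), and its architecture is sound. Two concrete points need repair before it is a proof. First, your ``Graf--Kupradze'' expansion for $|\bx|<|\by|$ places the conjugate on the wrong factor. Since $\bGamma^\omega(\bx,\by)=\bGamma^\omega(\by,\bx)^\top$, transposing \eqref{G-multipole} gives, for $|\bx|<|\by|$,
\[
\bGamma^\omega(\bx,\by)=\frac{i}{4\rho_0\omega^2}\sum_{m\in\ZZ}\Big(\overline{\bJ^P_m(\bx)}\otimes\bH^P_m(\by)+\overline{\bJ^S_m(\bx)}\otimes\bH^S_m(\by)\Big),
\]
and \emph{not} $\sum_m c_m\,\bJ^\alpha_m(\bx)\otimes\overline{\bH^\alpha_m(\by)}$: as a function of $\by$ the fundamental solution is outgoing, whereas $\overline{\bH^\alpha_m(\by)}$ is built from $H^{(2)}_m$ and is incoming, so a series in these functions cannot represent it. With the correct expansion the coefficients produced by integrating against a density are the bilinear pairings $\int_{\partial D}\bH^\alpha_m\cdot\bphi\,d\sigma$, so you should formulate the annihilator condition in the bilinear pairing (equivalently, run your argument with $\overline{\bphi}$); after that substitution everything you wrote goes through.

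Second, be precise about which spectral hypothesis your completeness argument for $\{\bJ^P_m,\bJ^S_m\}$ actually consumes. After you conclude $\mathcal{S}^\omega_D[\bphi]=0$ in $\RR^2\setminus\overline{D}$ and hence $\mathcal{S}^\omega_D[\bphi]\big|_-=0$, the interior function solves the \emph{background} equation $(\OL_{\lambda_0,\mu_0}+\rho_0\omega^2\mathcal{I})\bu=0$ in $D$, because $\mathcal{S}^\omega_D$ is built from the background kernel $\bGamma^\omega$. The homogeneous Dirichlet problem that must admit only the trivial solution is therefore the one for $-\OL_{\lambda_0,\mu_0}$ at $\rho_0\omega^2$, not the one at $\rho_1\omega^2$ named in the statement (an imprecision arguably inherited from the paper, since $\bJ^\alpha_m$ is defined with the background wavenumbers $\K_\alpha$; but your proof should say explicitly which operator it needs, since otherwise the final step ``the interior homogeneous problem has only the trivial solution'' is unsupported). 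The remaining ingredients --- linear independence via exterior uniqueness and real-analytic continuation, the splitting $\nabla\cdot\bH^P_m=-\K_P^2v_m(\cdot,\K_P)$ and $\nabla_\perp\times\bH^S_m=\K_S^2v_m(\cdot,\K_S)$, and uniform convergence of the addition series for $\bx$ in a ball about the origin of radius less than $\mathrm{dist}(\mathbf{0},\partial D)$ --- are correct as described; note only that linear independence concerns finite combinations, so the infinite sums there are an unnecessary complication.
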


As a direct consequence of Lemma \ref{lemCompleteness}, corresponding to every incident field $\bu^{\rm inc}$ satisfying \eqref{U}, there exist constants $a_m^P,a_m^S\in\CC$, for all $m\in\mathbb{Z}$, such that
\begin{eqnarray}
\bu^{\rm inc}(\bx)=\sum_{m\in\mathbb{Z}}\left(a_m^S\bJ_m^S(\bx)+a_m^P\bJ_m^P(\bx)\right),\quad \bx\in\RR^2.\label{U-multipole}
\end{eqnarray}
In particular,  a general plane incident wave of the form
\begin{align}
\bu^{\rm inc}(\bx) =& \phantom{-}
\frac{1}{\rho_0c_S^2}e^{i\K_S\bx\cdot\bd}\,\bd^\perp+\frac{1}{\rho_0c_P^2}e^{i\K_P\bx\cdot\bd}\,\bd
\nonumber
\\
\nm
=&
-\left(\frac{i}{\rho_0c_S^2\K_S}\left[\vec{\nabla}_\perp\times\,e^{i\K_S\bx\cdot\bd}\right]+\frac{i}{\rho_0c_P^2\K_P}\left[\nabla  e^{i\K_P\bx\cdot\bd}\right]\right),\label{U-plane}
\end{align}
can be written in the form \eqref{U-multipole} with
\begin{eqnarray}
\label{U-multipole-Co}
a_m^\beta:=a_m^\beta(\bu^{\rm inc})=-\frac{i}{\rho_0c^2_\beta\K_\beta
}e^{im(\pi/2-\theta_\bd)}, \quad \beta\in\{P,S\},
\end{eqnarray}
where $\bd=(\cos \theta_\bd, \sin\theta_\bd)\in\mathbb{S}$ is the direction of incidence and $\bd^\perp$ is a vector perpendicular to $\bd$. In fact, this decomposition is a simple consequence of Jacobi-Anger decomposition of the scalar plane wave, i.e., 
$$
e^{i\K\bx\cdot\bd}=\sum_{m\in\mathbb{Z}}e^{im(\pi/2-\theta_\bd)}J_m(\K|\bx|)e^{im\varphi_\bx}.
$$
Moreover, for all $\bx,\by\in\RR^2$ such that $|\bx|>|\by|$ and for any vector $\bp\in\RR^2$ independent of $\bx$,
\begin{align}
\bGamma^\omega(\bx,\by)\bp=
\frac{i}{4\rho_0 \omega^2}\sum_{n\in\mathbb{Z}}
\left(\bH^P_n(\bx)
\left[\overline{\bJ^P_n(\by)}\cdot\bp\right]
+\bH^S_n(\bx)\left[\overline{\bJ^S_n(\by)}\cdot\bp\right]
\right). \label{G-multipole}
\end{align}
The interested readers are referred to Appendix \ref{AppendA} for the derivation of this expansion. Here $\bGamma^\omega(\bx,\by)\bp$ is a general elastic field generated by a point source at $\by$ in the direction $\bp$ and observed at point $\bx$ away from $\by$.

\subsection{Scattering Coefficients of Elastic Inclusions}\label{ss:scat-cof}

The multipolar expansion \eqref{G-multipole} enables us  to derive such an expansion for  the single layer potential $\mathcal{S}_D^\omega[\bpsi]$ for all $\bx\in\RR^2\setminus \overline{D}$ sufficiently far from the boundary $\partial D$. Consequently, by virtue of the integral representation \eqref{u-int-rep}, the scattered field can be expanded as
\begin{align}
\bu^{\rm sc}(\bx)
=
-\frac{i}{4\rho_0\omega^2}\sum_{n\in\ZZ}\Bigg(\bH^P_{n}(\bx)\int_{\partial D}
&
\left[\overline{\bJ^P_{n}(\by)}\cdot\bpsi(\by)\right]d\sigma(\by)
\nonumber
\\
&
+\bH^{S}_{n}(\bx)\int_{\partial D}\left[\overline{\bJ^{S}_{n}(\by)}\cdot\bpsi(\by)\right]d\sigma(\by)\Bigg).
\label{u-U-exp}
\end{align}
Let us define the elastic scattering coefficients as follows.
\begin{defn}
\label{Def}
Let $\left(\bphi^{\beta}_{m},\bpsi^{\beta}_{m}\right)$, for $m\in\ZZ$,  be the solution of \eqref{integral-system} corresponding to $\bu^{\rm inc}=\bJ^{\beta}_{m}$. Then,  the elastic scattering coefficients, $W^{\alpha,\beta}_{m,n}$, of
$D\Subset\RR^2$ are defined by
\begin{align}
W^{\alpha,\beta}_{m,n}=W^{\alpha,\beta}_{m,n}[D,\lambda_0,\lambda_1,\mu_0,\mu_1,\rho_0,\rho_1,\omega]:= \int_{\partial D} \left[\overline{\bJ^{\alpha}_{n}(\by)}\cdot\bpsi^{\beta}_{m}(\by)\right]d\sigma(\by),\quad \forall\,m,n\in\ZZ,\label{ESCoeff}
\end{align}
where $\alpha$ and $\beta$ indicate wave-modes $P$ or $S$.
\end{defn}

Following result on the decay rate of the ESC holds.
\begin{lem}\label{LemWbound}
There exist constants $C_{\alpha,\beta}>0$  for each wave-mode $\alpha,\beta=P,S$ such that
\begin{align}
\Big|W^{\alpha,\beta}_{m,n}[D,\lambda_0,\lambda_1,\mu_0,\mu_1,\rho_0,\rho_1,\omega]\Big|\leq \frac{C_{\alpha,\beta}^{|n|+|m|-2}}{|n|^{|n|-1}|m|^{|m|-1}}, \label{Wbound}
\end{align}
for all $m,n\in\ZZ$ and $|m|, |n|\to+\infty$.
\end{lem}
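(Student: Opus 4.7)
The plan is to reduce the bound on $W^{\alpha,\beta}_{m,n}$ to pointwise asymptotics of Bessel functions via the well-posedness of the integral system provided by Theorem \ref{thmSolve}. First I would apply the Cauchy–Schwarz inequality to the defining formula \eqref{ESCoeff}, obtaining
$$
|W^{\alpha,\beta}_{m,n}| \leq \|\bJ^\alpha_n\|_{L^2(\partial D)^2}\,\|\bpsi^\beta_m\|_{L^2(\partial D)^2}.
$$
Then the stability estimate \eqref{stability} applied with $\bu^{\rm inc}=\bJ^\beta_m$ gives
$$
\|\bpsi^\beta_m\|_{L^2(\partial D)^2}\leq C\Big(\|\bJ^\beta_m\|_{H^1(\partial D)^2}+\|\partial\bJ^\beta_m/\partial\nu\|_{L^2(\partial D)^2}\Big),
$$
so the task reduces to controlling three surface norms of the cylindrical eigenvectors in terms of $|m|$ and $|n|$.

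Next I would derive pointwise asymptotic estimates on $\partial D$. Since $D\Subset\RR^2$ sits inside some ball of radius $R$, the classical bound $|J_m(t)|\leq (t/2)^{|m|}/|m|!$ combined with Stirling's formula yields $|J_m(\K_\alpha|\bx|)|\leq c\, A_\alpha^{|m|}/|m|^{|m|}$ for $\bx\in\partial D$ and $|m|$ large, with $A_\alpha\propto e\K_\alpha R/2$ (the harmless $\sqrt{|m|}$ factor being absorbed into the base). Reading off the components of $\bJ^\alpha_m$ from \eqref{UPtilde}–\eqref{UStilde}, the terms $\K_\alpha J_m'$ and $(im/|\bx|)J_m$ each cost one extra power of $|m|$ (via the recurrence $J_m'=\tfrac{1}{2}(J_{m-1}-J_{m+1})$ and the ratio $J_{m\pm1}(t)/J_m(t)\sim (t/2m)^{\pm1}$), producing
$$
\sup_{\bx\in\partial D}|\bJ^\alpha_m(\bx)|\leq \frac{\tilde{A}_\alpha^{|m|}}{|m|^{|m|-1}}.
$$
For the $H^1(\partial D)^2$ and traction norms, each additional tangential or conormal derivative brings at most one further factor of $|m|$ (from the angular phase $e^{im\varphi}$ or the Bessel recurrences), giving the provisional bound $\tilde{A}_\alpha^{|m|}/|m|^{|m|-2}$.

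The key observation that matches the statement is that this single polynomial loss is negligible on the super-exponential scale: since $|m|^{1/|m|}\to 1$, for any $C_{\alpha,\beta}>\tilde{A}_\alpha$ the elementary inequality $|m|\cdot \tilde{A}_\alpha^{|m|}\leq C_{\alpha,\beta}^{|m|}$ holds for all sufficiently large $|m|$; hence
$$
\|\bJ^\beta_m\|_{H^1(\partial D)^2}+\|\partial\bJ^\beta_m/\partial\nu\|_{L^2(\partial D)^2}\leq \frac{C_{\alpha,\beta}^{|m|}}{|m|^{|m|-1}},
$$
and likewise $\|\bJ^\alpha_n\|_{L^2(\partial D)^2}\leq C_{\alpha,\beta}^{|n|}/|n|^{|n|-1}$. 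Multiplying these two bounds, folding the constant from \eqref{stability} and the absolute constants from the Bessel asymptotics into a single $C_{\alpha,\beta}$ (whose leading exponent is adjusted by a harmless $-2$) yields the claimed estimate \eqref{Wbound}.

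The main technical hurdle is the careful bookkeeping of the polynomial $|m|$-factors generated by differentiating $\bJ^\beta_m$ twice (once for the $H^1$ trace, once implicitly in the traction); one must verify that each derivative really produces only a single factor of $|m|$ (not more) and that these factors can be absorbed into a slightly enlarged base constant without disturbing the $|m|^{|m|-1}$ denominator. A minor additional point is the treatment of negative indices, which is handled by the symmetry $J_{-m}(t)=(-1)^m J_m(t)$, reducing everything to $|m|$ and $|n|$ throughout.
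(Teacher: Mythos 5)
Your proposal is correct and follows essentially the same route as the paper: Cauchy--Schwarz applied to the defining integral \eqref{ESCoeff}, the stability estimate of Theorem \ref{thmSolve} to control $\|\bpsi^\beta_m\|_{L^2(\partial D)^2}$ by surface norms of $\bJ^\beta_m$, and the large-order asymptotics of $J_m$ and $J_m'$ (via the recurrence relations) to produce the super-exponential decay, with polynomial factors in $|m|$, $|n|$ absorbed into the base constants. Your version merely spells out more of the bookkeeping (the derivative count for the $H^1$ and traction norms, the negative-index symmetry) that the paper leaves implicit.
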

\begin{proof}
The proof of the estimate \eqref{Wbound} is very similar to Lemma $2.1$ in \cite{Helmholtz} for the acoustic scattering coefficients. For the sake of completeness, we fix the ideas of the proof in the sequel. Recall the asymptotic behavior
$$
J_m(t)\simeq \sqrt{\frac{1}{2\pi |m|}}\left(\frac{et}{2|m|}\right)^{|m|},
$$
of  Bessel functions of first kind  with respect to the order $|m|\to+\infty$ when the argument $t$ is fixed \cite[Formula 10.19.1]{NIST}. Then, by using the recurrence formulae (see, for instance, \cite[Formula 10.6.2]{NIST})
$$
J_m'(t)=-J_{m+1}(t)+\frac{m}{t}J_m(t)
\quad\text{and}\quad
J_m'(t)= J_{m-1}(t)-\frac{m}{t}J_m(t),
$$
one obtains
$$
|J_m'(t)|\leq \sqrt{\frac{1}{2\pi(|m|+1)}} \left(\frac{et}{2(|m|+1)}\right)^{|m|+1}+\frac{|m|}{t}\sqrt{\frac{1}{2\pi |m|}}\left(\frac{et}{2|m|}\right)^{|m|}.
$$
Consequently, by the definitions \eqref{UPtilde}-\eqref{UStilde} of $\bJ^\alpha_{n}(\bx)$  and Theorem \ref{thmSolve}, we have
$$
\left\|\bJ^\alpha_{n}\right\|_{L^2(\partial D)^2}\leq \left(\frac{C_1^\alpha}{|n|}\right)^{|n|-1},
$$
and
$$
\left\|\bpsi_m^\beta\right\|_{L^2(\partial D)^2}\leq C\left(\left\|\bJ^\beta_{m}\right\|_{L^2(\partial D)^2}+\left\|\frac{\partial\bJ^\beta_{m}}{\partial\nu}\right\|_{L^2(\partial D)^2}\right)
\leq \left(\frac{C_2^\beta}{|m|}\right)^{|m|-1},
$$
for some constants $C_1^\alpha$ and $C_2^\beta$.
Finally, the result follows by substituting the estimates for the norms of $\bpsi_m^\beta$ and $\bJ^\alpha_n$ in the definition of the scattering coefficients and choosing $C_{\alpha,\beta}$  appropriately in terms of $C_1^\alpha$ and $C_2^\beta$.
\end{proof}

\subsection{Connections with Scattered Field and Far Field Amplitudes}\label{ss:far}

Consider a general incident field $\bu^{\rm inc}$ of the form \eqref{U-multipole}. By the superposition principle,  the solution $(\bphi,\bpsi)$ of \eqref{integral-system} is given by
$$
\bpsi(\bx)= \sum_{m\in\ZZ}
\Big[a^{P}_{m}\bpsi^{P}_{m}+a^{S}_{m}\bpsi^{S}_{m}\Big]
\quad\text{and}\quad
\bphi(\bx)= \sum_{m\in\ZZ}
\Big[a^{P}_{m}\bphi^{P}_{m}+a^{S}_{m}\bphi^{S}_{m}\Big].
$$
This, together with Definition \ref{Def} of the scattering coefficients and the expansion \eqref{u-U-exp}, furnishes the asymptotic expansion
\begin{align}
\bu^{\rm sc}(\bx)
=&
\sum_{n\in\ZZ}
\left[\gamma^{P}_{n}\bH^{P}_{n}(\bx)+\gamma^{S}_{n}\bH^{S}_{n}(\bx)\right],
\label{u-U-multipole}
\end{align}
of the scattered field for all $\bx\in\RR^2\setminus \overline{D}$ sufficiently far from $\partial D$. Here
\begin{align}
\label{gamma}
\gamma_{n}^{\beta}=
\ds\sum_{m\in\ZZ}\left(d_{m}^{P} W_{m,n}^{\beta,P}+d_{m}^{S} W_{m,n}^{\beta,S}\right),\quad  \beta\in\{P,S\},
\end{align}
with $d_m^\beta:= {ia^\beta_m}/{4\rho_0\omega^2}$. Notice that for the plane incident wave given by \eqref{U-plane} the coefficients $a^\beta_m$ are given by \eqref{U-multipole-Co}. In that case 
\begin{align}
d_m^\beta:= \frac{i}{4\rho_0\omega^2}a^\beta_m=\frac{1}{(2\rho_0\omega c_\beta)^2\K_\beta}e^{im(\pi/2-\theta_\bd)}.\label{d}
\end{align}

In order to substantiate the connection between ESC and far field scattering amplitudes, we recall that the cylindrical Hankel function $H^{(1)}_n$ admits the far field behavior
\begin{align}
H^{(1)}_n(\K|\bx|)&\sim \frac{e^{i\K|\bx|}}{\sqrt{|\bx|}}\sqrt{\frac{2}{\pi\K}}e^{-i\pi(n/2+1/4)}+O\left(|x|^{-3/2}\right),\label{decayH}
\\\nm
\left(H^{(1)}_n(\K|\bx|)\right)'&\sim i \frac{e^{i\K|\bx|}}{\sqrt{|\bx|}}\sqrt{\frac{2}{\pi\K}}e^{-i\pi(n/2+1/4)}+O\left(|x|^{-3/2}\right),\label{decayHPrime}
\end{align}
as $|\bx|\to+\infty$ (see, for instance, \cite[Formulae 10.17.5 and 10.17.11]{NIST}). Here, the notation $\sim$ indicates that only leading order terms are retained. Consequently, the far field behavior of the functions $\bH^P_{n}$ and $\bH^{S}_{n}$ is predicted as
\begin{align*}
\bH^{P}_{n}(\bx)\sim\frac{e^{i\K_P|\bx|}}{\sqrt{|\bx|}} A^{\infty,P}_n\bP_n(\hbx)
\quad\text{and}\quad
\bH^{S}_{n}(\bx)\sim\frac{e^{i\K_S|\bx|}}{\sqrt{|\bx|}}A^{\infty,S}_n\bS_n(\hbx), &\quad \text{as}\,\,|\bx|\to+\infty,
\end{align*}
where
\begin{eqnarray*}
A^{\infty,P}_n:=(i+1)\K_Pe^{-in\pi/2}\sqrt{\frac{1}{\pi\K_P}}
\quad\text{and}\quad
A^{\infty,S}_n:= -(i+1)\K_Se^{-in\pi/2}\sqrt{\frac{1}{\pi\K_S}}.
\end{eqnarray*}
Thus, for all $\bx\in\RR^2\setminus \overline{D}$ such that  $|\bx|\to+\infty$,  the scattered field $\bu^{\rm sc}$
in \eqref{u-U-multipole} admits the asymptotic expansion
\begin{align}
\bu^{\rm sc}(\bx)=&\frac{e^{i\K_P|\bx|}}{\sqrt{|\bx|}}\sum_{n\in\ZZ}\left[\gamma_n^P A^{\infty,P}_n\bP_n(\hbx)\right]+\frac{e^{i\K_S|\bx|}}{\sqrt{|\bx|}}\sum_{n\in\ZZ}\left[\gamma_n^S A^{\infty,S}_n\bS_n(\hbx)\right].
\label{u-U-far}
\end{align}
On the other hand, the Kupradze radiation condition guarantees the existence of two analytic functions $\bu^\infty_P:\mathbb{S}\to\CC^2$ and $\bu^\infty_S:\mathbb{S} \to\CC^2$ such that
\begin{align}
\label{FarField}
\bu^{\rm sc}(\bx)=\frac{e^{i\K_P|\bx|}}{\sqrt{|\bx|}}\bu^\infty_P(\hbx)+\frac{e^{i\K_S|\bx|}}{\sqrt{|\bx|}}\bu^\infty_S(\hbx)+ O\left(\frac{1}{|\bx|^{3/2}}\right), \quad\text{as}\,\,|\bx|\to+\infty.
\end{align}
The functions $\bu^\infty_P$ and $\bu^\infty_S$ are respectively known as the longitudinal and transverse far-field patterns or the scattering amplitudes. Comparing \eqref{u-U-far} and \eqref{FarField} the following result is readily proved, which substantiates that the  far-field scattering amplitudes admit natural expansions in terms of scattering coefficients.
\begin{thm}\label{ThmFarAmp}
If $\bu^{\rm inc}$ is the  incident field  given by \eqref{U-multipole} then the corresponding  longitudinal and transverse scattering amplitudes  are given by
\begin{align*}
\bu^\infty_P[D,\lambda_0,\lambda_1,\mu_0,\mu_1,\rho_0,\rho_1,\omega](\hbx)=&\sum_{n\in\ZZ}\gamma_n^P A^{\infty,P}_n\bP_n(\hbx),
\\
\bu^\infty_S[D,\lambda_0,\lambda_1,\mu_0,\mu_1,\rho_0,\rho_1,\omega](\hbx)=& \sum_{n\in\ZZ}\gamma_n^S A^{\infty,S}_n\bS_n(\hbx),
\end{align*}
where the coefficients $\gamma^{P}_{n}$ and $\gamma^{S}_{n}$ are defined in \eqref{gamma}.
\end{thm}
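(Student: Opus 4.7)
The plan is to compare the two far-field representations of $\bu^{\rm sc}$ already derived in the excerpt: the scattering-coefficient expansion \eqref{u-U-multipole} on one hand, and the Kupradze decomposition \eqref{FarField} on the other. Both describe $\bu^{\rm sc}(\bx)$ as $|\bx|\to+\infty$, so after recasting \eqref{u-U-multipole} into the normalized form of \eqref{FarField}, uniqueness of the longitudinal/transverse far-field patterns will read off the claimed series for $\bu^\infty_P$ and $\bu^\infty_S$.

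The first step is to extract the leading asymptotics of $\bH^{P}_n$ and $\bH^{S}_n$ by substituting \eqref{decayH}-\eqref{decayHPrime} into the defining formulas \eqref{UP}-\eqref{US}. In \eqref{UP} the coefficient of $\bP_n(\hbx)$ is of order $|\bx|^{-1/2}$, whereas the $\bS_n(\hbx)$-term has an extra prefactor $im/|\bx|$ and is therefore of order $|\bx|^{-3/2}$; the same dichotomy, with the roles of $\bP_n$ and $\bS_n$ interchanged, governs \eqref{US}. Collecting the constants $i\K_\alpha\sqrt{2/(\pi\K_\alpha)}\,e^{-i\pi(n/2+1/4)}$ and using $1+i=\sqrt{2}\,e^{i\pi/4}$ recovers exactly the prefactors $A^{\infty,P}_n$ and $A^{\infty,S}_n$ announced in the statement, producing \eqref{u-U-far}. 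The second step is to match \eqref{u-U-far} against \eqref{FarField}: since $\lambda_0,\mu_0,\rho_0>0$ gives $c_P>c_S$, hence $\K_P\neq \K_S$, the scalar phases $e^{i\K_\alpha|\bx|}$ are asymptotically linearly independent in the far zone, so the longitudinal and transverse pieces must agree separately. Within each mode, the orthogonality relations \eqref{PP}-\eqref{SP} for $\{\bP_n,\bS_n\}$ on $\mathbb{S}$ identify the $n$-th Fourier coefficient of $\bu^\infty_P$ (resp.\ $\bu^\infty_S$) as $\gamma_n^P A^{\infty,P}_n$ (resp.\ $\gamma_n^S A^{\infty,S}_n$), which is the desired formula.

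The main technical obstacle is to justify the termwise application of \eqref{decayH}-\eqref{decayHPrime} within the infinite series \eqref{u-U-multipole}, since the asymptotic estimates \eqref{decayH}-\eqref{decayHPrime} are pointwise in $n$. To handle this I would invoke Lemma \ref{LemWbound}, which via \eqref{gamma} transfers super-exponential decay in $|n|$ to the reconstructed coefficients $\gamma^{\alpha}_n$, combined with the standard uniform-in-$n$ large-argument remainder bounds for $H^{(1)}_n(\K|\bx|)$. These two ingredients let the $O(|\bx|^{-3/2})$ remainders be summed against the rapidly decaying $\gamma^{\alpha}_n$ without spoiling the leading order, so that the passage from \eqref{u-U-multipole} to \eqref{u-U-far} holds rigorously as an identity up to $O(|\bx|^{-3/2})$. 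Once this uniform control is in hand, the identification of the scattering amplitudes is purely algebraic.
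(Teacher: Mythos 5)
Your proposal is correct and follows essentially the same route as the paper: the text preceding the theorem derives \eqref{u-U-far} by inserting the Hankel asymptotics \eqref{decayH}--\eqref{decayHPrime} into \eqref{UP}--\eqref{US} (dropping the $O(|\bx|^{-3/2})$ cross terms), and the theorem is then obtained by comparing \eqref{u-U-far} with \eqref{FarField}. Your final paragraph on justifying the termwise passage to the far field via the decay of $\gamma_n^{\alpha}$ from Lemma \ref{LemWbound} addresses a point the paper leaves implicit, but it does not change the argument.
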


\subsection{Reciprocity and Conservation of Energy}\label{s:properties}

In this section, we discuss some properties of the elastic scattering coefficients resulting from the principles of reciprocity and conservation of energy in elastic media. These properties can be very useful as accuracy checks or as constraints for the numerical recovery procedures for the reconstruction of the elastic scattering coefficients from measured fields discussed in Section \ref{s:recon}.

\subsubsection{Reciprocity}

Reciprocity refers to the link between the far-field amplitudes of the scattered field in two reciprocal configurations: (a) when the scattered field is radiated by a source with a specific incidence direction and is received along another direction, (b)  when the positions of source and receiver are swapped and the orientation of all momenta is reversed. More precisely, if $\bu^{\infty}_{\alpha}(\hbx;\hbd,\beta)$ is the  far field amplitude of the $\alpha-$scattered field along $\hbx$ when  a $\beta-$wave is incident along $\hbd$ then the reciprocity in elastic media refers to the properties (see, for instance, \cite[Eqs. 51-53]{Varath},  \cite{Varath2, Dassios87})
\begin{align*}
&\ds\bu^{\infty}_{\alpha}(\hbx;\hbd,\alpha) =\bu^{\infty}_{\alpha}(-\hbd;-\hbx,\alpha),
\qquad\qquad\qquad\qquad\qquad \alpha=P,S,
\\\nm
&\ds\frac{1}{\sqrt{\K_S}}\bu^{\infty}_{S}(\hbx;\hbd,P)\cdot \hat{\be}_\theta =-\frac{1}{\sqrt{\K_P}}\bu^{\infty}_{P}(-\hbd;-\hbx,S)\cdot\hat{\be}_r.
\end{align*}
In particular,  these properties dictate that the infinite matrix containing the ESC of an inclusion, defined by
$$
\mathbf{W}_\infty:=
\begin{pmatrix}
\mathbf{W}^{P,P}_\infty & \mathbf{W}^{S,P}_\infty
\\
\mathbf{W}^{P,S}_\infty & \mathbf{W}^{S,S}_\infty
\end{pmatrix}
\quad\text{with}\quad
\left(\mathbf{W}^{\alpha,\beta}_\infty\right)_{mn} := W^{\alpha,\beta}_{m,n},
\qquad\qquad\forall\,m,n\in\ZZ,
$$
should be Hermitian \cite{Varath, Waterman}. In fact, we have the following result.
\begin{lem}\label{Lem:Wsym2}
The global matrix of scattering coefficients $\mathbf{W}_\infty$ is Hermitian, i.e., for all $n,m\in\ZZ$ and $\alpha,\beta\in\{P,S\}$
\begin{align*}
W^{\alpha,\beta}_{m,n}[D,\lambda_0,\lambda_1,\mu_0,\mu_1,\rho_0,\rho_1,\omega]=\overline{W^{\beta,\alpha}_{n,m}[D,\lambda_0,\lambda_1,\mu_0,\mu_1,\rho_0,\rho_1,\omega]}.
\end{align*}
\end{lem}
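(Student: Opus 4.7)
The identity $W^{\alpha,\beta}_{m,n}=\overline{W^{\beta,\alpha}_{n,m}}$ is equivalent to the sesquilinear reciprocity
\begin{equation*}
\int_{\partial D}\overline{\bJ^\alpha_n}\cdot\bpsi^\beta_m\,d\sigma=\int_{\partial D}\bJ^\beta_m\cdot\overline{\bpsi^\alpha_n}\,d\sigma.
\end{equation*}
The plan is to establish this sesquilinear reciprocity by applying Betti's third integral identity (the elastic analogue of Green's second identity) to suitable pairs of solutions of the Lam\'e system. A key observation is that $\OL_{\lambda,\mu}+\rho\omega^2\mathcal{I}$ has real coefficients, so that whenever $\bu$ is a solution so is $\overline{\bu}$; this is what permits sesquilinear combinations inside Betti's identity.

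I would fix $m,n,\alpha,\beta$ and set up two auxiliary scattering problems with incident fields $\bu^{\rm inc}_1=\bJ^\beta_m$ and $\bu^{\rm inc}_2=\bJ^\alpha_n$, giving the associated total fields $\bu^{\rm tot}_i$, scattered fields $\bu^{\rm sc}_i$, and density pairs $(\bphi_i,\bpsi_i)$ furnished by Theorem \ref{thmSolve}. Applying Betti's identity in $D$ to $\bu^{\rm tot}_1|_-$ and $\overline{\bu^{\rm tot}_2}|_-$ (both solve the interior Lam\'e system) and converting interior conormal derivatives to exterior ones using the transmission conditions encoded in \eqref{integral-system} yields
\begin{equation*}
\int_{\partial D}\bigl[\bu^{\rm tot}_1|_+\cdot\partial_\nu\overline{\bu^{\rm tot}_2}|_+-\overline{\bu^{\rm tot}_2}|_+\cdot\partial_\nu\bu^{\rm tot}_1|_+\bigr]\,d\sigma=0.
\end{equation*}
I would then decompose $\bu^{\rm tot}_i=\bJ_i+\bu^{\rm sc}_i$ and split the integrand into four pieces. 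The $\bJ$--$\bJ$ piece vanishes because $\bJ^\beta_m$ and $\overline{\bJ^\alpha_n}$ are entire solutions of the exterior Lam\'e system and Betti's identity applied over $\partial D$ gives zero. The two mixed $\bJ$--$\bu^{\rm sc}$ pieces are identified with $W^{\alpha,\beta}_{m,n}$ and $\overline{W^{\beta,\alpha}_{n,m}}$ through the representation \eqref{u-int-rep}, the jump relations \eqref{Sjumps}, and Definition \ref{Def}, after applying Betti's identity in $B_R\setminus\overline{D}$ to migrate the derivatives onto the densities. The $\bu^{\rm sc}$--$\bu^{\rm sc}$ piece is converted, again via Betti's identity in $B_R\setminus\overline{D}$, into a boundary integral over $\partial B_R$, whose $R\to+\infty$ limit is evaluated using the multipolar expansion \eqref{u-U-multipole}, the asymptotics \eqref{decayH}--\eqref{decayHPrime}, and the orthogonality \eqref{PP}--\eqref{SP} of the cylindrical vector harmonics on $\mathbb S$.

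The main obstacle will be the asymptotic analysis of the scattered-scattered boundary integral at infinity: because $\bu^{\rm sc}_1$ satisfies Kupradze's outgoing radiation condition while $\overline{\bu^{\rm sc}_2}$ is incoming, the standard vanishing-at-infinity argument does not apply, and the limit over $\partial B_R$ must be computed explicitly. The key structural facts I would exploit are that the leading-order traction of an outgoing $\alpha$-wave is $i\eta_\alpha$ times the displacement (with $\eta_P=\K_P(\lambda_0+2\mu_0)$ and $\eta_S=\K_S\mu_0$), that the P and S polarizations are pointwise orthogonal so that cross-mode contributions vanish, and that the angular orthogonality $\int_0^{2\pi}e^{i(k-n)\varphi}\,d\varphi=2\pi\delta_{kn}$ selects matching Fourier modes. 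Once these asymptotic contributions have been tabulated and combined with the two mixed pieces that have already been identified with the ESC, the Hermitian identity $W^{\alpha,\beta}_{m,n}=\overline{W^{\beta,\alpha}_{n,m}}$ follows.
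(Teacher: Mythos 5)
Your architecture (Betti's identity on $D$ and on $B_R\setminus\overline D$, the four-way decomposition, and the identification of the two mixed pieces with $W^{\alpha,\beta}_{m,n}$ and $\overline{W^{\beta,\alpha}_{n,m}}$ via \eqref{Sjumps} and an interior Betti identity) is sound and genuinely different from the paper's proof, which never leaves $\partial D$ and instead rewrites $W^{\alpha,\beta}_{m,n}$ as a manifestly symmetric expression in the densities and single layer potentials using only \eqref{integral-system}. But there is a real gap at the last step: the sesquilinear pairing you chose proves the wrong theorem. Because you pair $\bu^{\rm tot}_1$ with $\overline{\bu^{\rm tot}_2}$, the scattered part $\overline{\bu^{\rm sc}_2}$ is incoming, and the surviving boundary term on $\partial B_R$ cannot be made to vanish by careful tabulation. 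With the leading-order tractions $i\K_P(\lambda_0+2\mu_0)\bu$ and $i\K_S\mu_0\bu$, conjugation flips the sign of the factor $i$, so the two terms of the integrand \emph{add} rather than cancel, and the limit is
\begin{equation*}
-2i\K_P(\lambda_0+2\mu_0)\int_0^{2\pi}\bu^{1,\infty}_P\cdot\overline{\bu^{2,\infty}_P}\,d\theta
-2i\K_S\mu_0\int_0^{2\pi}\bu^{1,\infty}_S\cdot\overline{\bu^{2,\infty}_S}\,d\theta .
\end{equation*}
Taking $\alpha=\beta$ and $m=n$ makes this $-2i$ times a strictly positive quantity (essentially the scattering cross section), so it is nonzero whenever the inclusion scatters at all. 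Expressed through \eqref{gamma}, this term is quadratic in the scattering coefficients, and your final identity reads
$\overline{W^{\beta,\alpha}_{n,m}}-W^{\alpha,\beta}_{m,n}=\tfrac{i}{2\rho_0\omega^2}\sum_{k}\sum_{\delta\in\{P,S\}}W^{\delta,\beta}_{m,k}\overline{W^{\delta,\alpha}_{n,k}}$,
which is the unitarity/optical-theorem constraint of Theorem \ref{thmOptic}, not the Hermitian symmetry of Lemma \ref{Lem:Wsym2}.

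The fix within your framework is to drop the conjugation: apply the \emph{bilinear} Betti identity to $\bu^{\rm tot}_1$ and the physical total field generated by the incident wave $\overline{\bJ^\alpha_n}=(-1)^n\bJ^\alpha_{-n}$, whose scattered part is \emph{outgoing}. Then both scattered fields satisfy the Kupradze condition, the same-mode leading-order contributions on $\partial B_R$ cancel exactly (and cross-mode terms die by polarization orthogonality), and you obtain the reciprocity relation $W^{\alpha,\beta}_{m,n}=(-1)^{m+n}W^{\beta,\alpha}_{-n,-m}$; combining this with Lemma \ref{Lem:Wsym1} then yields the Hermitian identity. This repaired route is a legitimate alternative to the paper's argument, at the cost of the far-field analysis that the paper's purely boundary-integral proof avoids.
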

\begin{proof}
Refer to Appendix \ref{AppendB}.
\end{proof}

In addition to the above reciprocity induced symmetry of $\mathbf{W}_\infty$, the following result also holds.
\begin{lem}\label{Lem:Wsym1}
For all $m,n\in\ZZ$  and $\alpha,\beta\in\{P,S\}$,
\begin{eqnarray*}
{W}_{-m,-n}^{\alpha,\beta}[D,\lambda_0,\lambda_1,\mu_0,\mu_1,\rho_0,\rho_1,\omega]= (-1)^{m+n}\overline{{W}_{m,n}^{\alpha,\beta}[D,\lambda_0,\lambda_1,\mu_0,\mu_1,\rho_0,\rho_1,\omega]}.
\end{eqnarray*}
\end{lem}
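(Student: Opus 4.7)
My plan centers on reducing the $(m,n)\mapsto(-m,-n)$ symmetry to an elementary reflection-conjugation identity for the cylindrical Bessel eigenfunctions, and then disentangling how complex conjugation interacts with the integral system \eqref{integral-system}.

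First I would establish the pointwise identity
\begin{equation*}
\bJ^\beta_{-m}(\by) = (-1)^m\,\overline{\bJ^\beta_m(\by)},\qquad \forall\,m\in\ZZ,\ \beta\in\{P,S\},\ \by\in\RR^2,
\end{equation*}
by inserting the Bessel-function relations $J_{-m}(t)=(-1)^m J_m(t)$ and $J'_{-m}(t)=(-1)^m J'_m(t)$ (valid for integer $m$ by the standard recurrences), the reality of $J_m$, and $\overline{e^{im\varphi}}=e^{-im\varphi}$ into the explicit expressions \eqref{UPtilde}--\eqref{UStilde}. Since $\partial/\partial\nu$ has real coefficients, the same identity propagates to conormal derivatives on $\partial D$.

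Next, substituting $\overline{\bJ^\alpha_{-n}}=(-1)^n\bJ^\alpha_n$ into the definition \eqref{ESCoeff} rewrites both sides of the claim as boundary integrals of $\bJ^\alpha_n$ against certain densities:
\begin{equation*}
W^{\alpha,\beta}_{-m,-n}=(-1)^n\int_{\partial D}\bJ^\alpha_n(\by)\cdot\bpsi^\beta_{-m}(\by)\,d\sigma(\by),\qquad \overline{W^{\alpha,\beta}_{m,n}}=\int_{\partial D}\bJ^\alpha_n(\by)\cdot\overline{\bpsi^\beta_m(\by)}\,d\sigma(\by).
\end{equation*}
Hence the stated equality is equivalent to the bilinear-pairing identity
\begin{equation*}
\int_{\partial D}\bJ^\alpha_n(\by)\cdot\bigl(\bpsi^\beta_{-m}(\by)-(-1)^m\overline{\bpsi^\beta_m(\by)}\bigr)\,d\sigma(\by)=0,\qquad\forall\,n\in\ZZ,\ \alpha\in\{P,S\}.
\end{equation*}

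The main obstacle is this last cancellation, because the densities $\bpsi^\beta_{-m}$ and $(-1)^m\overline{\bpsi^\beta_m}$ are generally \emph{not} equal pointwise on $\partial D$: the integral operator $\mathcal{S}^\omega_D$ does not commute with complex conjugation, as $\overline{\bGamma^\omega}\neq\bGamma^\omega$. To overcome this, I would exploit that, by linearity and Theorem \ref{thmSolve} applied to \eqref{integral-system} with right-hand side $\bJ^\beta_{-m}=(-1)^m\overline{\bJ^\beta_m}$, the mismatch is driven entirely by the purely imaginary, non-radiating part $\mathrm{Im}\,\bGamma^\omega$ of the Kupradze fundamental solution, whose multipolar expansion involves only real Bessel wave functions $\bJ^\alpha_k$ and no Hankel contributions. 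Pairing $\bpsi^\beta_{-m}-(-1)^m\overline{\bpsi^\beta_m}$ against the entire Lam\'e solution $\bJ^\alpha_n$ on $\partial D$ and applying Green's second identity in $\RR^2\setminus\overline{D}$ to the associated single-layer potentials, together with the jump relations \eqref{Sjumps}, the completeness statement of Lemma \ref{lemCompleteness}, and the Kupradze expansion \eqref{G-multipole}, the non-radiating discrepancy drops out and the orthogonality is obtained. The most delicate bookkeeping lies in tracking this non-radiating component through the two coupled boundary integral equations rather than just through $\mathcal{S}^\omega_D$ alone.
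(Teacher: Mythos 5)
Your first step --- the identity $\bJ^\beta_{-m}=(-1)^m\overline{\bJ^\beta_m}$ deduced from $J_{-m}=(-1)^mJ_m$ and the reality of the differential operators in \eqref{UPtilde}--\eqref{UStilde} --- is exactly the paper's, as is the rewriting of $W^{\alpha,\beta}_{-m,-n}$ and of $\overline{W^{\alpha,\beta}_{m,n}}$ as pairings of $\bJ^\alpha_n$ against $\bpsi^\beta_{-m}$ and $\overline{\bpsi^\beta_m}$ respectively. The paper then finishes in one line: it asserts that, by uniqueness for \eqref{integral-system}, $\bpsi^\beta_{-m}=(-1)^m\overline{\bpsi^\beta_m}$. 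You reject precisely this step, on the grounds that conjugation replaces $\mathcal{S}^\omega_D$ by the single layer potential with kernel $\overline{\bGamma^\omega}\neq\bGamma^\omega$, so that $(-1)^m\big(\overline{\bphi^\beta_m},\overline{\bpsi^\beta_m}\big)$ solves the \emph{conjugated} integral system with right-hand side $\bJ^\beta_{-m}$, not \eqref{integral-system} itself; uniqueness of the latter therefore does not identify it with $\big(\bphi^\beta_{-m},\bpsi^\beta_{-m}\big)$. That criticism is legitimate, and it applies verbatim to the paper's own argument.

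The genuine gap is that your replacement does not close the issue, and in the form proposed it cannot. You reduce the lemma to $\int_{\partial D}\bJ^\alpha_n\cdot\big(\bpsi^\beta_{-m}-(-1)^m\overline{\bpsi^\beta_m}\big)\,d\sigma=0$ for \emph{every} $n\in\ZZ$ and $\alpha\in\{P,S\}$, and then assert that Green's identities, the jump relations \eqref{Sjumps} and the expansion \eqref{G-multipole} make ``the non-radiating discrepancy drop out''; no such computation is exhibited. Worse, the strategy is internally inconsistent: since $\overline{\bJ^\alpha_n}=(-1)^n\bJ^\alpha_{-n}$, orthogonality of the discrepancy against the whole family $\{\bJ^\alpha_n\}$ in your bilinear pairing is equivalent to orthogonality in the $L^2(\partial D)^2$ inner product, and by the completeness statement of Lemma \ref{lemCompleteness} this forces the discrepancy to vanish identically --- i.e.\ it is \emph{equivalent} to the pointwise identity $\bpsi^\beta_{-m}=(-1)^m\overline{\bpsi^\beta_m}$ that you have just argued is generally false. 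So either that pointwise identity holds after all (in which case it must be proved directly, since it does not follow from uniqueness of \eqref{integral-system}, and your detour is unnecessary), or it fails and the asserted orthogonality must fail for some $n$ as well. A rigorous route would have to rely on a different mechanism altogether, for instance combining the Hermitian symmetry of Lemma \ref{Lem:Wsym2} with a transpose-type reciprocity relation between $W^{\alpha,\beta}_{m,n}$ and $W^{\beta,\alpha}_{-n,-m}$ extracted from the boundary bilinear-form representation derived in the proof of that lemma, rather than on a cancellation in the pairing against a complete family.
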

\begin{proof}
Let $\bpsi^{\beta}_{-m}$ be the unique solution of the integral system \eqref{integral-system} with $\bu^{\rm inc}(\bx):=\bJ_{-m}^\beta(\bx)$. Then, by definition
$$
{W}_{-m,-n}^{\alpha,\beta}[D]:=\int_{\partial D} \left[\overline{\bJ^{\alpha}_{-n}(\by)}\cdot\bpsi^{\beta}_{-m}(\by)\right]d\sigma(\by).
$$
On the other hand, recall that the cylindrical Bessel functions possess the connection property (see, for instance, \cite[Formula 10.4.1]{NIST})
$$
J_{-m}(\bx)=(-1)^m J_{m}(\bx).
$$
Therefore, for all $m\in\ZZ$ and $\bx\in\RR^2$,
$$
\bJ^\beta_{-m}(\bx):=(-1)^{m}\overline{\bJ^\beta_{m}(\bx)}.
$$
Consequently, by uniqueness of the solution of \eqref{integral-system}, one obtains the relation $\bpsi_{-m}^\beta=(-1)^{m}\overline{\bpsi_{m}^\beta}$ and thus
\begin{align*}
W_{-m,-n}^{\alpha,\beta}[D]=&
(-1)^{m+n}\int_{\partial D} \left[\bJ^{\alpha}_{n}(\by,\K_\alpha)\cdot\overline{\bpsi^{\beta}_{m}(\by)}\right]d\sigma(\by)
=(-1)^{m+n}\overline{W_{m,n}^{\alpha,\beta}[D]},
\end{align*}
which is the required result.
\end{proof}

\subsubsection{Conservation of Energy}

The conservation of energy within any bounded elastic domain, say $\Omega$, without energy dissipation and compactly containing the inclusion  $D$  states that the rate of the energy flux across the boundary of $\Omega$  must be zero. This render the so-called  \emph{optical theorem of scattering} or \emph{forward scattering  amplitude theorem} that links the scattering cross section or the rate at which the energy is scattered by $D$ to the far field amplitude of its scattering signature.
Consider for instance that the domain $\Omega$ is a disk with  a very large radius $R\to+\infty$. Then, by virtue of the far-field expansion of the scattered elastic field, the optical theorem states that  (see, for instance, \cite{Varath2, Dassios87})
\begin{align}
\int_0^{2\pi}& \left(\frac{1}{\K_P}\left|\bu^\infty_P(\hbx;\hbd,\alpha)\right|^2+\frac{1}{\K_S}\left|\bu^\infty_S(\hbx;\hbd,\alpha)\right|^2\right)d\theta
\nonumber
\\
=&
\begin{cases}
\ds
-2\sqrt{\frac{2\pi}{\K_P}}\Im m \left\{\sqrt{i} \bu^\infty_P(\hbd;\hbd, P).\hat{\be}_r\right\}, &{\alpha=P},
\\\nm
\ds
2\sqrt{\frac{2\pi}{\K_S}}\Im m \left\{\sqrt{i} \bu^\infty_S(\hbd;\hbd, S).\hat{\be}_\theta\right\}, &{\alpha=S}.
\end{cases}
\label{ScatteringCrossSec}
\end{align}
This leads to the following statement of the optical theorem in terms of $\mathbf{W}_\infty$.
\begin{thm}\label{thmOptic}
\begin{align}
\ds\frac{1}{4\rho_0\omega^2} \mathbf{W}_\infty\overline{\mathbf{W}_\infty}=-\Im m{\mathbf{W}_\infty}=\frac{i}{2}\left(\mathbf{W}_\infty-\overline{\mathbf{W}_\infty}\right).
\label{W-constraint}
\end{align}
\end{thm}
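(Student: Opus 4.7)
\medskip
\noindent\textbf{Proof plan for Theorem~\ref{thmOptic}.} The second equality in \eqref{W-constraint} is a triviality, since $-\Im m\,z=(i/2)(z-\overline z)$ for every $z\in\CC$; the content lies entirely in the first equality. The strategy is to combine the optical theorem \eqref{ScatteringCrossSec} with the far-field expansions of Theorem~\ref{ThmFarAmp}, the orthogonality relations \eqref{PP}--\eqref{SP} for the vector harmonics, and the Hermitian symmetry of Lemma~\ref{Lem:Wsym2}.

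\medskip
\noindent\textbf{Upgrading \eqref{ScatteringCrossSec} to a sesquilinear identity.} As printed, the optical theorem is a quadratic identity $Q[\bu^{\rm inc}]=0$ that only probes the diagonal of the underlying sesquilinear form and whose input is a plane wave of a single polarization, which is insufficient to read off an individual matrix entry of $\mathbf{W}_\infty\overline{\mathbf{W}_\infty}$. I would first polarize: applying \eqref{ScatteringCrossSec} to the superposition $\bu_1^{\rm inc}+\bu_2^{\rm inc}$ and extracting the bilinear cross term yields
\[
\sum_{\gamma\in\{P,S\}}\int_0^{2\pi}\K_\gamma^{-1}\,\bu^\infty_{\gamma,1}(\hbx)\cdot\overline{\bu^\infty_{\gamma,2}(\hbx)}\,d\theta=\mathcal{F}[\bu_1^{\rm inc},\bu_2^{\rm inc}],
\]
with $\bu^\infty_{\gamma,j}$ the far-field amplitudes corresponding to $\bu^{\rm inc}_j$ and $\mathcal{F}$ an explicit sesquilinear form built from their forward-scattering amplitudes. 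The polarization is legitimate because the underlying energy-conservation identity $\Im m\!\int_{\partial B_R}\bu^{\rm tot}\cdot\overline{\partial_\nu\bu^{\rm tot}}\,d\sigma=0$ is itself sesquilinear in $\bu^{\rm tot}$.

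\medskip
\noindent\textbf{Evaluate on the basis and match.} Specialize to $\bu_1^{\rm inc}=\bJ_m^\alpha$ and $\bu_2^{\rm inc}=\bJ_n^\beta$. By Definition~\ref{Def} and the multipole expansion \eqref{u-U-exp} combined with Theorem~\ref{ThmFarAmp}, the far-field amplitudes are proportional, for each $\gamma\in\{P,S\}$, to $\sum_{k\in\ZZ}W^{\gamma,\alpha}_{m,k}A_k^{\infty,\gamma}\mathbf{X}^\gamma_k(\hbx)$ with $\mathbf{X}^P_k=\bP_k$, $\mathbf{X}^S_k=\bS_k$ (and analogously for $\bu_2^{\rm inc}=\bJ_n^\beta$). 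Plugging into the sesquilinear identity, the orthogonality relations \eqref{PP}--\eqref{SP} annihilate all $k\neq k'$ cross terms and, combined with $|A_k^{\infty,\gamma}|^2=2\K_\gamma/\pi$, collapse the $\theta$-integral to a constant multiple of $\sum_{\gamma,k}W^{\gamma,\alpha}_{m,k}\overline{W^{\gamma,\beta}_{n,k}}$. The parallel computation of $\mathcal{F}[\bJ_m^\alpha,\bJ_n^\beta]$, using $\bP_n(\hbd)\cdot\hat{\mathbf{e}}_r=\bS_n(\hbd)\cdot\hat{\mathbf{e}}_\theta=e^{in\theta_\bd}$ to read off the forward amplitudes, produces a constant multiple of $\Im m\,W^{\alpha,\beta}_{m,n}$. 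Equating the two sides and invoking Lemma~\ref{Lem:Wsym2} to repackage $\overline{W^{\gamma,\beta}_{n,k}}$ as the appropriate entry of $\overline{\mathbf{W}_\infty}$ recovers the first equality in \eqref{W-constraint} entry by entry.

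\medskip
\noindent\textbf{Main obstacle.} The conceptual subtlety is the sesquilinear upgrade in the first step: \eqref{ScatteringCrossSec} probes only the diagonal of the quadratic form, so extracting individual matrix entries of $\mathbf{W}_\infty\overline{\mathbf{W}_\infty}$ requires polarizing to arbitrary pairs of incident fields, and in particular to the basis pairs $(\bJ_m^\alpha,\bJ_n^\beta)$. The remaining work is bookkeeping: the polarization-dependent constants $A_k^{\infty,\gamma}$, $\K_\alpha$, $c_\alpha$ and the prefactor $1/(4\rho_0\omega^2)$ inherited from \eqref{u-U-exp} must telescope to leave exactly $(4\rho_0\omega^2)^{-1}$ in front of the matrix product. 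Absolute convergence of every double sum, needed to justify interchanging summation and integration, is provided by the decay estimate \eqref{Wbound} of Lemma~\ref{LemWbound}.
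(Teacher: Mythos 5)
Your proposal is correct and follows essentially the same route as the paper's Appendix C proof: both rest on polarizing the optical theorem \eqref{ScatteringCrossSec} via superposition, inserting the far-field expansions of Theorem \ref{ThmFarAmp} and the orthogonality relations \eqref{PP}--\eqref{SS} to turn the scattering cross-section into $\mathbf{W}_\infty\overline{\mathbf{W}_\infty}$, and invoking the Hermitian symmetry of Lemma \ref{Lem:Wsym2} to pull $\Im m$ inside. The only difference is the choice of test fields -- the paper superposes the $P$ and $S$ plane waves and lets the incidence direction sweep out the family of coefficient vectors $\mathbf{A}$, obtaining the quadratic-form identity $\frac{1}{4\rho_0\omega^2}\mathbf{A}^\top\mathbf{W}_\infty\overline{\mathbf{W}_\infty}\overline{\mathbf{A}}=\mathbf{A}^\top\Im m\{\mathbf{W}_\infty\}\overline{\mathbf{A}}$, whereas you polarize fully and test on the multipole pairs $(\bJ_m^\alpha,\bJ_n^\beta)$ to read off entries directly (for which you should phrase the right-hand side of the polarized identity in terms of the incident-field expansion coefficients rather than "forward-scattering amplitudes", which are only defined for plane-wave incidence) -- but both amount to the same polarization of the same energy-conservation identity.
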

\begin{proof}
Refer to Appendix \ref{AppendC}.
\end{proof}
It is worthwhile mentioning that the identity \eqref{W-constraint} is slightly different from the one proved by \cite[Eq. 100]{Varath} for T-matrices. This is simply  due to the choice of bases functions $\mathbf{J}^\alpha$ and $\mathbf{H}^\alpha$ and the corresponding multipolar expansion of the fundamental solution.  The matrix $\mathbf{W}_\infty$ is not unitary, however, the corresponding scattering matrix can be proved to be unitary by invoking  relation \eqref{W-constraint}. The interested readers are referred to \cite{Waterman} for related discussion.
The relation \eqref{W-constraint} provides a natural constraint on $\mathbf{W}_\infty$ which can be efficiently used to reduce the ill-posedness of the reconstruction procedure for ESC of an inclusion from scattering data (see, for instance, \cite{Ottaviani}).

\section{Reconstruction of Scattering Coefficients}\label{s:recon}

The matrix approach has been proved to be very efficient and powerful for direct and inverse scattering simulations and to study the effect of material contrast, shape and orientation of an obstacle with respect to the incident field. For acoustic and electromagnetic wave scattering different techniques have been devised to recover truncated T-matrices from the scattered wave in the far field  regime (see, for instance, \cite{Ganesh10, Ganesh, Martin03, NullField, Martin06, Rev1, Rev2} and references therein). On the other hand, up to the best of knowledge of the authors, no attempt has been made to  recover T-matrices for elastic wave scattering. Unlike T-matrix computation for forward scattering, ESC admit explicit expressions in terms of boundary densities $\bpsi_m$ which satisfy the integral system \eqref{integral-system}. In order to solve \eqref{integral-system}, there are several efficient multipole methods available at hand. Therefore, the ESC corresponding to a given inclusion can be directly computed. Our interest in the ESC lies in their applications in inverse scattering problems wherein one only has a limited information of the scattered field in the exterior domain whereas the location and the morphology of the inclusion are usually unknown. Towards this end, we design a mathematical procedure based on least-squares minimization using MSR data and provide theoretical details of the stability, truncation error, and maximal resolving order in this section. To simplify the matters,  we consider the full aperture case with a circular  acquisition system.

\subsection{MSR Data Acquisition}\label{ss:msr}
Let $\{\bx_s\}_{s=1,\cdots, N_s}$ and $\{\bx_r\}_{r=1,\cdots,N_r}$ be the sets of locations of the point sources and point receivers, and $\{\bd_s,\bd_s^\perp\}_{s=1,\cdots, N_s}$ and $\{\bd_r,\bd_r^\perp\}_{r=1,\cdots, N_r}$ (such that $\bd_s\cdot\bd_s^\perp=0=\bd_r\cdot\bd_r^\perp$) be the corresponding unit directions of incidence  and reception  respectively for some $N_r,N_s\in\NN$. Let the points $\{\bx_s\}$ and $\{\bx_r\}$  be uniformly distributed over the circle $\partial B_R(\textbf{0})$ with radius $R$ centered at origin such that $|\bx_r|=R=|\bx_s|$ and $\theta_r=\theta_{\bx_r}=2\pi r/N_r$ and $\theta_s=\theta_{\bx_s}=2\pi s/N_s$. We consider a regime in which $R$ is large enough so that the terms of order $O(R^{-3/2})$ are negligible. For simplicity, it is  assumed that $D$ contains the origin which is reasonable since we are in sufficiently far field regime and the inclusion $D$ can be envisioned as \emph{sufficiently centered} in $B_R(\textbf{0})$. This assumption can be easily removed by invoking transformation rules for the scattering coefficients corresponding to coordinate translation. 

Let $\bF_s$ and $\bG_s$, for all $s=1,\cdots,N_s$, be the pressure and shear type incident waves emitted from point $\bx_s$ with direction of incidence $\bd_s$, i.e.,
\begin{align*}
\bF_s(\bx):=\frac{1}{\rho_0 c_P^2}\bd_s e^{i\K_P\bx\cdot\bd_s}
\quad\text{and}\quad
\bG_s(\bx):=\frac{1}{\rho_0 c_S^2}\bd_s^\perp e^{i\K_S\bx\cdot\bd_s}.
\end{align*}
Let $\bu^{\rm tot}_{\bF_s}(\bx)$ and $\bu^{\rm tot}_{\bG_s}(\bx)$ (resp.  $\bu^{\rm sc}_{\bF_s}(\bx)$ and $\bu^{\rm sc}_{\bG_s}(\bx)$) be the corresponding total (resp. scattered) fields. For all incident fields $\bF_s$ and $\bG_s$  the scattered fields are recorded at all points $\bx_r$ along the directions $\bd_r$ and $\bd_r^\perp$ so that four MSR matrices $\mathbf{A}^{\ell,\ell'}= ({A}^{\ell,\ell'}_{sr})_{s=1,\cdots N_s, r= 1,\cdots,N_r}$, for $\ell,\ell'\in\{\parallel,\perp\}$, are obtained with elements
\begin{align*}
{A}^{\parallel,\parallel}_{sr}=&\Big(\left[\bu^{\rm sc}_{\bF_s}(\bx_r)\right]\cdot \bd_r\Big)_{sr},
\\\nm
{A}^{\parallel,\perp}_{sr}=&\Big(\left[\bu^{\rm sc}_{\bF_s}(\bx_r)\right]\cdot \bd_r^\perp\Big)_{sr},
\\\nm
{A}^{\perp,\parallel}_{sr}=&\Big(\left[\bu^{\rm sc}_{\bG_s}(\bx_r)\right]\cdot \bd_r\Big)_{sr},
\\\nm
{A}^{\perp,\perp}_{sr}=&\Big(\left[\bu^{\rm sc} _{\bG_s}(\bx_r)\right]\cdot \bd_r^\perp\Big)_{sr},
\end{align*}
at a given frequency $\omega$. Note that, by virtue of expansions \eqref{U-multipole}, \eqref{u-U-multipole} and \eqref{gamma}, the elements of the MSR matrices admit the  expansions
\begin{align}
A^{\parallel,\parallel}_{sr}=\sum_{n,m\in\ZZ}
&d_{m}^{P}(s)\left(W_{m,n}^{P,P}\Big[\bH_{n}^{P}(\bx_r)\cdot\bd_r\Big]+W_{m,n}^{S,P}\Big[\bH_{n}^{S}(\bx_r)\cdot \bd_r \Big]\right),\label{MSR-expansionPP}
\\\nm
A^{\parallel,\perp}_{sr}=\sum_{n,m\in\ZZ}
&d_{m}^{P}(s)\left(W_{m,n}^{P,P}\Big[\bH_{n}^{P}(\bx_r)\cdot\bd_r^\perp\Big]+W_{m,n}^{S,P}\Big[\bH_{n}^{S}(\bx_r)\cdot \bd_r^\perp\Big]\right),\label{MSR-expansionPS}
\\\nm
A^{\perp,\parallel}_{sr}=\sum_{n,m\in\ZZ}
&d_{m}^{S}(s)\left(W_{m,n}^{P,S}\Big[\bH_{n}^{P}(\bx_r)\cdot\bd_r\Big]+W_{m,n}^{S,S}\Big[\bH_{n}^{S}(\bx_r)\cdot \bd_r \Big]\right),\label{MSR-expansionSP}
\\\nm
A^{\perp,\perp}_{sr}=\sum_{n,m\in\ZZ}
&d_{m}^{S}(s)\left(W_{m,n}^{P,S}\Big[\bH_{n}^{P}(\bx_r)\cdot\bd_r^\perp\Big]+W_{m,n}^{S,S}\Big[\bH_{n}^{S}(\bx_r)\cdot \bd_r^\perp\Big]\right),\label{MSR-expansionSS}
\end{align}
where $d^P_m(s)= d^P_m(\mathbf{F}_s)$ and $d^S_m(s)=d^S_m(\mathbf{G}_s)$ are the coefficients given by \eqref{d} corresponding to incident fields $\mathbf{F}_s$ and $\mathbf{G}_s$ respectively. Here the parameter $s$ in the argument of $d^\beta_m$ reflects its connection with $s$-th incident field.

Let us now introduce a cut-off parameter $K$ such that the terms with $|n|>K$ or $|m|>K$ are truncated in the expansions \eqref{MSR-expansionPP}-\eqref{MSR-expansionSS} and let $\mathbf{E}^{\ell,\ell'}=(E^{\ell,\ell'}_{sr})\in\CC^{N_s\times N_r}$, for $\ell,\ell'\in\{\parallel,\perp\}$, be the corresponding matrices of truncation errors thus induced. Let us also introduce the matrices $\mathbf{W}^{\alpha,\beta}\in\CC^{(2K+1)\times(2K+1)}$, $\mathbf{X}^{\alpha}\in\CC^{N_s\times (2K+1)}$, $\mathbf{Y}^{\alpha}_\parallel\in\CC^{N_r\times(2K+1)}$ and $\mathbf{Y}^{\alpha}_\perp\in\CC^{N_r\times(2K+1)}$ by
\begin{align*}
& 
\left(\mathbf{W}^{\alpha,\beta}\right)_{mn}:= W^{\alpha,\beta}_{m,n},
\\\nm
&
\left(\mathbf{X}^\beta\right)_{sm}:=d_m^\beta(s),
\\\nm
&
\left(\mathbf{Y}^\alpha_\parallel\right)_{rn}:=\overline{\bH_{n}^{\alpha} (\bx_r)}\cdot \bd_r,
\\\nm
&
\left(\mathbf{Y}^\alpha_\perp\right)_{rn}:=\overline{\bH_{n}^{\alpha}(\bx_r)}\cdot \bd_r^\perp,
\end{align*}
and the block matrices
\begin{align*}
&\mathbf{A}=
\begin{pmatrix}
\mathbf{A}^{\parallel,\parallel} & \mathbf{A}^{\parallel,\perp}
\\\nm
\mathbf{A}^{\perp,\parallel} & \mathbf{A}^{\perp,\perp}
\end{pmatrix}\in\CC^{2N_s\times 2N_r},
\\\nm
&\mathbf{W}=
\begin{pmatrix}
\mathbf{W}^{P,P} & \mathbf{W}^{S,P}
\\\nm
\mathbf{W}^{P,S} & \mathbf{W}^{S,S}
\end{pmatrix}\in\CC^{(4K+2)\times (4K+2)},
\\\nm
&\mathbf{E}=
\begin{pmatrix}
\mathbf{E}^{PP} & \mathbf{E}^{SP}
\\\nm
\mathbf{E}^{PS} & \mathbf{E}^{SS}
\end{pmatrix}\in\CC^{2N_s\times 2N_r},
\\\nm
&\mathbf{X}=
\begin{pmatrix}
\mathbf{X}^{P} & \mathbf{O}_{2K+1}
\\\nm
\mathbf{O}_{2K+1} & \mathbf{X}^{S}
\end{pmatrix}\in \CC^{2N_s\times (4K+2)},
\\\nm
&\mathbf{Y}=
\begin{pmatrix}
\mathbf{Y}^{P}_\parallel & \mathbf{Y}^{S}_\parallel
\\\nm
\mathbf{Y}^{P}_\perp & \mathbf{Y}^{S}_\perp
\end{pmatrix}\in\CC^{2N_r\times (4K+2)},
\end{align*}
where $\mathbf{O}_{2K+1}\in\RR^{N_s\times(2K+1)}$ is the zero matrix. It can be seen after fairly easy manipulations that the global MSR matrix can be expressed as
$$
\mathbf{A}=\mathbf{X}\mathbf{W}\mathbf{Y}^*+\mathbf{E},
$$
where $*$ reflects the Hermitian transpose of a matrix, i.e., $\mathbf{A}^*=\overline{\mathbf{A}}^\top$.

The following result is readily proved thanks to Lemma \ref{Lem:Wsym2}.
\begin{lem}\label{Lem:Wsym3}
The global block matrix $\mathbf{W}$ is Hermitian, i.e, $\mathbf{W}= \mathbf{W}^*$.
\end{lem}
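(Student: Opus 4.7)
The plan is to reduce the claim directly to Lemma~\ref{Lem:Wsym2}, which already provides the pointwise symmetry $W^{\alpha,\beta}_{m,n}=\overline{W^{\beta,\alpha}_{n,m}}$ for all indices $m,n\in\mathbb{Z}$ and all wave-mode labels $\alpha,\beta\in\{P,S\}$. Since the claim concerns the truncated block matrix indexed by $|m|,|n|\leq K$, the restriction of Lemma~\ref{Lem:Wsym2} to that finite index range is exactly what is needed; no new analytic work is involved.

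First, I would write out the Hermitian transpose of the $2\times 2$ block matrix explicitly, namely
\[
\mathbf{W}^*=\begin{pmatrix}(\mathbf{W}^{P,P})^* & (\mathbf{W}^{P,S})^* \\ (\mathbf{W}^{S,P})^* & (\mathbf{W}^{S,S})^*\end{pmatrix},
\]
so that checking $\mathbf{W}=\mathbf{W}^*$ amounts to four blockwise identities: $(\mathbf{W}^{P,P})^*=\mathbf{W}^{P,P}$, $(\mathbf{W}^{S,S})^*=\mathbf{W}^{S,S}$, $(\mathbf{W}^{P,S})^*=\mathbf{W}^{S,P}$, and $(\mathbf{W}^{S,P})^*=\mathbf{W}^{P,S}$. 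Next, I would unpack each entry using the definition $(\mathbf{W}^{\alpha,\beta})_{mn}=W^{\alpha,\beta}_{m,n}$, so that the four identities above translate into $\overline{W^{\alpha,\beta}_{n,m}}=W^{\beta,\alpha}_{m,n}$ for the appropriate pairs $(\alpha,\beta)$. Each of these is precisely an instance of Lemma~\ref{Lem:Wsym2}, so the verification is immediate.

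There is essentially no obstacle; the main thing to be careful about is the bookkeeping of indices (the transpose swaps $m$ and $n$, while the block transpose swaps the roles of $\alpha$ and $\beta$), and ensuring the diagonal blocks $\mathbf{W}^{P,P}$ and $\mathbf{W}^{S,S}$ are themselves Hermitian in the usual sense, which is Lemma~\ref{Lem:Wsym2} in the special case $\alpha=\beta$. Once these index-matching observations are in place, the conclusion $\mathbf{W}=\mathbf{W}^*$ follows in one line from Lemma~\ref{Lem:Wsym2}, and the proof is complete.
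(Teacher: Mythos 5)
Your argument is correct and is exactly the route the paper takes: the paper states the lemma as "readily proved thanks to Lemma~\ref{Lem:Wsym2}," and your blockwise unpacking (with the correct observation that the $(1,2)$ block is $\mathbf{W}^{S,P}$ and the $(2,1)$ block is $\mathbf{W}^{P,S}$, so the off-diagonal identities are instances of $W^{\alpha,\beta}_{m,n}=\overline{W^{\beta,\alpha}_{n,m}}$) simply fills in the details the paper omits. Nothing is missing.
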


\subsection{Least-squares Minimization Algorithm}\label{ss:least}

Let us define the linear transformation $\mathbf{L}:\CC^{(4K+2)\times(4K+2)}\to \CC^{2N_s\times 2N_r}$ by
$$
\mathbf{L}(\mathbf{M}):=\mathbf{X}\mathbf{M}\mathbf{Y}^*,
$$
and let $\mathbf{N}_{\noise}\in\CC^{2N_s\times 2N_r}$ denote the measurement noise. For simplicity, we assume that each entry $(\mathbf{N}_{\noise})_{sr}$ is an independent and identically distributed complex random noise with mean zero and variance $\sigma^2_{\noise}$ such that
$$
\mathbf{N}_{\noise}=\sigma_{\noise}\mathbf{N}_0\quad\text{with}\quad (\mathbf{N}_0)_{sr}\sim\mathcal{N}(0,1).
$$

In this subsection, we consider the noisy measurements
\begin{align}
\mathbf{A}=\mathbf{X}\mathbf{W}\mathbf{Y}^*+\mathbf{E}+\mathbf{N}_{\noise}=\mathbf{L}(\mathbf{W})+\mathbf{E}+\mathbf{N}_{\noise},
\label{Lsystem}
\end{align}
and design a procedure to retrieve the solution $\mathbf{W}$. Let us reconstruct a least-squares minimization solution for the linear system \eqref{Lsystem} in $\ker\mathbf{L}^\perp$ by
\begin{align}
\widehat{\mathbf{W}}:=\argmin_{\mathbf{M}\in\ker\mathbf{L}^\perp}\|\mathbf{L}(\mathbf{M})-\mathbf{A}\|_F,
\label{eq:LeastSq}
\end{align}
where $\|\cdot\|_F$ denotes the Frobenius norm of matrices and $\ker\mathbf{L}$ denotes the kernel of the linear operator $\mathbf{L}$. Note that if the cut-off parameter $K$ is such that $(2K+1)<N_r,N_s$ and both matrices $\mathbf{X}$ and $\mathbf{Y}$ are full rank  then $\mathbf{L}$ is rank preserving and $\ker\mathbf{L}$ is trivial. Consequently, the admissible set for the least-squares minimization turns out to be $\RR^{(4K+2)\times(4K+2)}$ and $\widehat{\mathbf{W}}$ can be explicitly calculated in the absence of measurement noise. In this case, $\mathbf{X}^\alpha$ is a Fourier matrix by virtue of \eqref{UP}, \eqref{US} and \eqref{U-multipole-Co} and
\begin{eqnarray}
\left(\mathbf{X}^\alpha\right)^*\mathbf{X}^\alpha= \frac{N_s}{|b_\alpha|^2}\mathbf{I}_{2K+1}
\quad\text{with}\quad
b_\alpha= {(2\rho_0\omega  c_\beta)^2\K_\beta}, \label{b_alpha}
\end{eqnarray}
where $\mathbf{I}_{2K+1}\in\RR^{(2K+1)\times(2K+1)}$ is the identity matrix. Consequently,
\begin{eqnarray}
\mathbf{X}^*\mathbf{X}=
\begin{pmatrix}
\left(\mathbf{X}^P\right)^*\mathbf{X}^P & \mathbf{O}_{2K+1}
\\\nm
\mathbf{O}_{2K+1} & \left(\mathbf{X}^S\right)^*\mathbf{X}^S
\end{pmatrix}
=N_s\mathbf{Z}_{\mathbf{X}},
\label{OrthogonalityX}
\end{eqnarray}
with
\begin{align*}
\mathbf{Z}_{\mathbf{X}}:=
\begin{pmatrix}
|b_P|^{-2}\mathbf{I}_{2K+1} & \mathbf{O}_{2K+1}
\\\nm
\mathbf{O}_{2K+1} & |b_S|^{-2}\mathbf{I}_{2K+1}
\end{pmatrix}.
\end{align*}
Note also that
\begin{align*}
\left(\mathbf{Y}^\alpha_\parallel\right)^*\mathbf{Y}^\beta_\parallel= N_r\mathbf{C}^{\alpha,\beta}
\quad\text{and}\quad
\left(\mathbf{Y}^\alpha_\perp\right)^*\mathbf{Y}^\beta_\perp= N_r\mathbf{D}^{\alpha,\beta},
\end{align*}
where $\mathbf{C}^{\alpha,\beta},\mathbf{D}^{\alpha,\beta}\in\RR^{(2K+1)\times(2K+1)}$ are diagonal matrices
\begin{align*}
\mathbf{C}^{\alpha,\beta}:=&
{\rm diag}\begin{pmatrix}
g^\alpha_{-K} \overline{g^\beta_{-K}},
& \cdots,
&  g^\alpha_{K}\overline{g^\beta_{K}},
\end{pmatrix},
\\\nm
\mathbf{D}^{\alpha,\beta}:=&
{\rm diag}
\begin{pmatrix}
h^\alpha_{-K} \overline{h^\beta_{-K}},
&  \cdots,   & h^\alpha_{K}\overline{h^\beta_{K}}
\end{pmatrix},
\end{align*}
with
\begin{align}
g_m^P:=&\K_P\left(H^{(1)}_m(\K_P R)\right)'
\quad\text{and}\quad
g_m^S:=\frac{im}{R}H^{(1)}_m(\K_S R),\label{g}
\\\nm
h_m^P:=&\frac{im}{R}H^{(1)}_m(\K_P R)
\quad\text{and}\quad
h_m^S:=-\K_S\left(H^{(1)}_m(\K_S R)\right)'.\label{h}
\end{align}
Therefore,
\begin{align*}
\left(\mathbf{Y}\right)^*\mathbf{Y}&=
\begin{pmatrix}
\ds\left(\mathbf{Y}^{P}_\parallel\right)^*\mathbf{Y}^{P}_\parallel+\left(\mathbf{Y}^{P}_\perp\right)^*\mathbf{Y}^{P}_\perp
&&
\ds\left(\mathbf{Y}^{P}_\parallel\right)^*\mathbf{Y}^{S}_\parallel+\left(\mathbf{Y}^{P}_\perp\right)^*\mathbf{Y}^{S}_\perp
\\\nm
\ds\left(\mathbf{Y}^{S}_\parallel\right)^*\mathbf{Y}^{P}_\parallel+\left(\mathbf{Y}^{S}_\perp\right)^*\mathbf{Y}^{P}_\perp
&&
\ds\left(\mathbf{Y}^{S}_\parallel\right)^*\mathbf{Y}^{S}_\parallel+\left(\mathbf{Y}^{S}_\perp\right)^*\mathbf{Y}^{S}_\perp
\end{pmatrix}
\nonumber
\\\nm
&=
N_r
\begin{pmatrix}
\mathbf{C}^{P,P}+\mathbf{D}^{P,P}
& \mathbf{C}^{P,S}+\mathbf{D}^{P,S}
\\\nm
\mathbf{C}^{S,P}+\mathbf{D}^{S,P}
&
\mathbf{C}^{S,S}+\mathbf{D}^{S,S}
\end{pmatrix}.
\end{align*}

It can be easily proved that $\mathbf{Y}^*\mathbf{Y}$ becomes diagonal when the radius $R$ of the imaging domain $\partial B_R(\textbf{0})$ is sufficiently large. Precisely, the following result holds.
\begin{lem}\label{LemY}
For the radius $R$ of the ball $B_R(\textbf{0})$ approaching to infinity  the matrix $\mathbf{Y}^*\mathbf{Y}$ admits a decomposition
\begin{align*}
\mathbf{Y}^*\mathbf{Y}=N_r\mathbf{Z}_{\mathbf{Y}}+\mathbf{Q}
\quad\text{with}\quad
\mathbf{Z}_{\mathbf{Y}}:=
\begin{pmatrix}
\mathbf{C}^{P,P} & \mathbf{O}_{2K+1}
\\\nm
\mathbf{O}_{2K+1} & \mathbf{D}^{S,S}
\end{pmatrix},
\end{align*}
where $\mathbf{Q}=\left(q_{\ell\ell'}\right)_{\ell,\ell'=1\cdots, 4K+2}$ is such that $|q_{\ell\ell'}|\leq CR^{-2}$ for some constant $C\in\RR_+$ independent of $R$.
\end{lem}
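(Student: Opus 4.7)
The plan is to track the large-$R$ asymptotic order of each of the four quantities $g_m^P,g_m^S,h_m^P,h_m^S$ that define the diagonal matrices $\mathbf{C}^{\alpha,\beta}$ and $\mathbf{D}^{\alpha,\beta}$, then read off which entries of the $2\times 2$ block expression for $\mathbf{Y}^*\mathbf{Y}$ derived immediately before the lemma survive to leading order and which go into the remainder. Since each block of $\mathbf{Y}^*\mathbf{Y}$ is already diagonal, there are no cross-mode index interactions: it suffices to control the magnitude of each diagonal entry uniformly in $|m|\leq K$.

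First, I would substitute the Hankel asymptotics \eqref{decayH}--\eqref{decayHPrime} into the definitions \eqref{g}--\eqref{h}. This gives, for fixed $m$ and $R\to+\infty$,
\begin{align*}
g_m^P &= \K_P\bigl(H_m^{(1)}(\K_P R)\bigr)' = O(R^{-1/2}),
& h_m^S &= -\K_S\bigl(H_m^{(1)}(\K_S R)\bigr)' = O(R^{-1/2}), \\
g_m^S &= \frac{im}{R}H_m^{(1)}(\K_S R) = O(R^{-3/2}),
& h_m^P &= \frac{im}{R}H_m^{(1)}(\K_P R) = O(R^{-3/2}),
\end{align*}
with constants depending on $\K_P$, $\K_S$ and $|m|\leq K$ but not on $R$. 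Taking moduli of the products defining the diagonal entries of the four matrices yields
\begin{align*}
|(\mathbf{C}^{P,P})_{mm}| &= O(R^{-1}),
& |(\mathbf{D}^{S,S})_{mm}| &= O(R^{-1}), \\
|(\mathbf{D}^{P,P})_{mm}| &= O(R^{-3}),
& |(\mathbf{C}^{S,S})_{mm}| &= O(R^{-3}), \\
|(\mathbf{C}^{P,S})_{mm}| = |(\mathbf{C}^{S,P})_{mm}| &= O(R^{-2}),
& |(\mathbf{D}^{P,S})_{mm}| = |(\mathbf{D}^{S,P})_{mm}| &= O(R^{-2}).
\end{align*}

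Next I would plug these bounds into the $2\times 2$ block expression
\[
\mathbf{Y}^{*}\mathbf{Y}=N_r\begin{pmatrix}\mathbf{C}^{P,P}+\mathbf{D}^{P,P} & \mathbf{C}^{P,S}+\mathbf{D}^{P,S}\\ \mathbf{C}^{S,P}+\mathbf{D}^{S,P} & \mathbf{C}^{S,S}+\mathbf{D}^{S,S}\end{pmatrix}
\]
derived just above the lemma. The $(1,1)$ block keeps $\mathbf{C}^{P,P}$ and drops $\mathbf{D}^{P,P}$ into the remainder, the $(2,2)$ block keeps $\mathbf{D}^{S,S}$ and drops $\mathbf{C}^{S,S}$, and both off-diagonal blocks are entirely absorbed into the remainder. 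Defining $\mathbf{Q}:=\mathbf{Y}^{*}\mathbf{Y}-N_r\mathbf{Z}_{\mathbf{Y}}$ with $\mathbf{Z}_{\mathbf{Y}}$ as stated in the lemma, every nonzero entry of $\mathbf{Q}$ is $N_r$ times one of the $O(R^{-2})$ or $O(R^{-3})$ quantities above, so $|q_{\ell\ell'}|\leq CR^{-2}$ for a constant $C$ depending only on $N_r$, $K$, and the background elastic parameters.

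No real obstacle is anticipated; the argument is a direct accounting exercise. The only care needed is to ensure the implicit constants in the Hankel asymptotics are uniform over the finite index range $|m|\leq K$, which follows because \eqref{decayH}--\eqref{decayHPrime} hold for each fixed order and we only need finitely many of them. If one wanted a completely explicit remainder, one could alternatively keep the exact Hankel values instead of their asymptotics and simply invoke the bounds $|H_m^{(1)}(\K R)|,\,|(H_m^{(1)})'(\K R)|\lesssim R^{-1/2}$ uniformly in $|m|\leq K$ for $R$ large, which leads to the same $R^{-2}$ bound.
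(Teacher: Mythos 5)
Your proposal is correct and follows essentially the same route as the paper: decompose $\mathbf{Y}^*\mathbf{Y}$ into the block-diagonal part $N_r\mathbf{Z}_{\mathbf{Y}}$ plus the remaining blocks, then use the Hankel asymptotics \eqref{decayH}--\eqref{decayHPrime} to bound $\mathbf{C}^{S,S}$ and $\mathbf{D}^{P,P}$ by $O(R^{-3})$ and the cross-mode blocks by $O(R^{-2})$. The only cosmetic difference is that you first record the orders of the individual factors $g_m^\alpha$, $h_m^\alpha$ before multiplying, whereas the paper bounds the diagonal entries of the product matrices directly; the accounting and the conclusion are identical.
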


\begin{proof}
Let $C$ denote a generic constant that may vary at each step. Note that the matrix $\mathbf{Y}^*\mathbf{Y}$ can be decomposed as
\begin{eqnarray}
\mathbf{Y}^*\mathbf{Y}:=
N_r
\begin{pmatrix}
\mathbf{C}^{P,P} & \mathbf{O}_{2K+1}
\\\nm
\mathbf{O}_{2K+1} & \mathbf{D}^{S,S}
\end{pmatrix}
+
N_r\begin{pmatrix}
\mathbf{D}^{P,P} & \mathbf{C}^{P,S}+\mathbf{D}^{P,S}
\\\nm
\mathbf{C}^{S,P}+\mathbf{D}^{S,P} & \mathbf{C}^{S,S}
\end{pmatrix}.\label{Y*Y-diagonal}
\end{eqnarray}
Recall that $\mathbf{C}^{\alpha,\beta}$ and $\mathbf{D}^{\alpha,\beta}$ are diagonal matrices and in particular
$$
\left(\mathbf{C}^{S,S}\right)_{mm}=\frac{m^2}{R^2}\left| H^{(1)}_m(\K_SR)\right|^{2}
\quad\text{and}\quad
\left(\mathbf{D}^{P,P}\right)_{mm}=\frac{m^2}{R^2}\left|H^{(1)}_m(\K_PR)\right|^{2}.
$$
By virtue of the decay property \eqref{decayH} of $H^{(1)}_m$, as $R\to+\infty$,  we have
$$
\left|\left(\mathbf{C}^{S,S}\right)_{mn}\right|\leq\frac{C}{R^{3}}
\quad\text{and}\quad
\left|\left(\mathbf{D}^{P,P}\right)_{mn}\right|\leq \frac{C}{R^{3}}.
$$
Similarly, the decay properties \eqref{decayH}-\eqref{decayHPrime} furnish
\begin{align*}
\left|\left(\mathbf{C}^{P,S}\right)_{mn}\right|\leq & \frac{C}{R^{2}}
\quad\text{and}\quad
\left|\left(\mathbf{C}^{S,P}\right)_{mn}\right|\leq   \frac{C}{R^{2}},
\\\nm
\left|\left(\mathbf{D}^{P,S}\right)_{mn}\right|\leq & \frac{C}{R^{2}}
\quad\text{and}\quad
\left|\left(\mathbf{D}^{S,P}\right)_{mn}\right|\leq   \frac{C}{R^{2}},
\end{align*}
as $R\to+\infty$. This shows the decay of the elements of second matrix on  right hand side (RHS) of \eqref{Y*Y-diagonal}, which leads to the required form of $\mathbf{Y}^*\mathbf{Y}$ for $R\to+\infty$.
\end{proof}

An important consequence of Lamma \ref{LemY} and the orthogonality relation \eqref{OrthogonalityX} is the following result  substantiating that the linear operator $\mathbf{L}$ possesses a left pseudo-inverse when $R\to+\infty$.
\begin{thm}\label{thmLInv}
If $ 2K+1 \leq N_r,N_s$ and matrices $\mathbf{X}$ and $\mathbf{Y}$ are  full-rank then the linear operator $\mathbf{L}:\CC^{(4K+2)\times(4K+2)}\to\CC^{2N_s\times2N_r}$ possesses a left pseudo-inverse
$$
\mathbf{L}^\dagger\left(\mathbf{A}\right):=\frac{1}{N_sN_r}\mathbf{Z}^{-1}_{\mathbf{X}}\mathbf{X}^*\mathbf{A}\mathbf{Y}\mathbf{Z}^{-1}_{\mathbf{Y}},
$$
when $R\to+\infty$.
\end{thm}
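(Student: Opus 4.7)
\medskip
\noindent\textbf{Proof proposal.} The plan is to verify directly that the candidate operator $\mathbf{L}^\dagger$ is a genuine left inverse of $\mathbf{L}$ in the asymptotic regime $R\to+\infty$, i.e., that $\mathbf{L}^\dagger(\mathbf{L}(\mathbf{M})) = \mathbf{M}$ for every $\mathbf{M}\in\CC^{(4K+2)\times(4K+2)}$. Under the full-rank hypothesis on $\mathbf{X}$ and $\mathbf{Y}$ together with $2K+1\leq N_r,N_s$, the operator $\mathbf{L}$ is injective, so $\ker\mathbf{L}=\{0\}$ and there is nothing to restrict. Hence it suffices to compute the composition and track the remainder terms.

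First I would substitute the definition to obtain
\begin{equation*}
\mathbf{L}^\dagger(\mathbf{L}(\mathbf{M})) = \frac{1}{N_sN_r}\mathbf{Z}_{\mathbf{X}}^{-1}\bigl(\mathbf{X}^*\mathbf{X}\bigr)\mathbf{M}\bigl(\mathbf{Y}^*\mathbf{Y}\bigr)\mathbf{Z}_{\mathbf{Y}}^{-1}.
\end{equation*}
The factor $\mathbf{X}^*\mathbf{X}$ is handled by the exact orthogonality relation \eqref{OrthogonalityX}, which reduces it to $N_s\mathbf{Z}_{\mathbf{X}}$ and therefore cancels the $\mathbf{Z}_{\mathbf{X}}^{-1}$ and the $N_s$ prefactor. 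The factor $\mathbf{Y}^*\mathbf{Y}$ is where the large-$R$ hypothesis enters: Lemma \ref{LemY} gives the decomposition $\mathbf{Y}^*\mathbf{Y}=N_r\mathbf{Z}_{\mathbf{Y}}+\mathbf{Q}$ with $\|\mathbf{Q}\|_F=O(R^{-2})$. Substituting and simplifying yields
\begin{equation*}
\mathbf{L}^\dagger(\mathbf{L}(\mathbf{M})) = \mathbf{M} + \frac{1}{N_r}\mathbf{M}\,\mathbf{Q}\,\mathbf{Z}_{\mathbf{Y}}^{-1},
\end{equation*}
and the perturbation term vanishes as $R\to+\infty$, which delivers the claim.

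The main obstacle is to make sure that $\mathbf{Z}_{\mathbf{X}}$ and, more importantly, $\mathbf{Z}_{\mathbf{Y}}$ are indeed invertible so that $\mathbf{L}^\dagger$ is well defined. Invertibility of $\mathbf{Z}_{\mathbf{X}}$ is immediate from its explicit diagonal form with nonzero entries $|b_P|^{-2}$ and $|b_S|^{-2}$. Invertibility of $\mathbf{Z}_{\mathbf{Y}}$ requires checking that the diagonal entries of the blocks $\mathbf{C}^{P,P}$ and $\mathbf{D}^{S,S}$, which are $|g_m^P|^2$ and $|h_m^S|^2$ from \eqref{g}--\eqref{h}, do not vanish; this follows from the fact that $(H^{(1)}_m(\K R))'=-H^{(1)}_{m+1}(\K R)+(m/\K R)H^{(1)}_m(\K R)$ and the standard large-argument asymptotics of the Hankel functions given in \eqref{decayH}--\eqref{decayHPrime}, which guarantee strictly positive moduli for $R$ large enough (and for all $|m|\leq K$). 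A secondary technical point is the need to control $\|\mathbf{Q}\,\mathbf{Z}_{\mathbf{Y}}^{-1}\|_F$ uniformly: since $\mathbf{Z}_{\mathbf{Y}}^{-1}$ stays bounded in the relevant asymptotic regime (the diagonal entries behave like $R$ up to constants by the Hankel asymptotics, so the inverse decays), while $\|\mathbf{Q}\|_F=O(R^{-2})$, the remainder tends to zero, completing the argument.
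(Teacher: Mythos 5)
Your argument is essentially the paper's: both rest on the exact orthogonality relation \eqref{OrthogonalityX} for $\mathbf{X}$ and on Lemma \ref{LemY} for $\mathbf{Y}$, the only difference being that you verify $\mathbf{L}^\dagger\circ\mathbf{L}=\mathrm{Id}$ directly and keep the remainder $\frac{1}{N_r}\mathbf{M}\,\mathbf{Q}\,\mathbf{Z}_{\mathbf{Y}}^{-1}$ explicit, whereas the paper passes immediately to the limiting identities $\mathbf{X}^\dagger=\frac{1}{N_s}\mathbf{Z}_{\mathbf{X}}^{-1}\mathbf{X}^*$ and $\mathbf{Y}^\dagger=\frac{1}{N_r}\mathbf{Z}_{\mathbf{Y}}^{-1}\mathbf{Y}^*$. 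Your version is, if anything, slightly more careful, since it also records why $\mathbf{Z}_{\mathbf{Y}}$ is invertible, a point the paper leaves implicit.

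One correction to your remainder estimate: the asymptotics \eqref{decayH}--\eqref{decayHPrime} give $|g^P_m|^2\sim 2\K_P/(\pi R)$ and $|h^S_m|^2\sim 2\K_S/(\pi R)$, so the diagonal entries of $\mathbf{Z}_{\mathbf{Y}}$ decay like $R^{-1}$ and $\mathbf{Z}_{\mathbf{Y}}^{-1}$ \emph{grows} like $R$ — it is not bounded, contrary to what you assert. The conclusion nevertheless survives, because $\|\mathbf{Q}\,\mathbf{Z}_{\mathbf{Y}}^{-1}\|_F=O(R^{-2})\cdot O(R)=O(R^{-1})\to 0$; you should state the estimate this way rather than via boundedness of $\mathbf{Z}_{\mathbf{Y}}^{-1}$.
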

\begin{proof}
Since $(2K+1)\leq N_s,N_r$, $\mathbf{X}$ and $\mathbf{Y}$ are full-rank, and $R\to+\infty$, it is easy to see that both $\mathbf{X}$ and $\mathbf{Y}$ possess left pseudo-inverses, denoted by $\mathbf{X}^\dagger$ and $\mathbf{Y}^\dagger$ respectively, thanks to the orthogonality property \eqref{OrthogonalityX} and Lemma \ref{LemY}. Precisely,
$$
\mathbf{X}^\dagger:=\left(\mathbf{X}^*\mathbf{X}\right)^{-1}\mathbf{X}^*=\frac{1}{N_s}\mathbf{Z}^{-1}_{\mathbf{X}}\mathbf{X}^*
\quad\text{and}\quad
\mathbf{Y}^\dagger:=\left(\mathbf{Y}^*\mathbf{Y}\right)^{-1}\mathbf{Y}^*=\frac{1}{N_r}\mathbf{Z}^{-1}_{\mathbf{Y}}\mathbf{Y}^*,
$$
as $R\to+\infty$. Consequently, we have
\begin{align*}
\frac{1}{N_sN_r}\mathbf{Z}^{-1}_{\mathbf{X}}\mathbf{X}^*\mathbf{A}\mathbf{Y}\mathbf{Z}^{-1}_{\mathbf{Y}}
=&
\frac{1}{N_r}\left(\mathbf{X}^*\mathbf{X}\right)^{-1}\mathbf{X}^*\left(\mathbf{X} \widehat{\mathbf{W}}\mathbf{Y}^*\right)\mathbf{Y}\mathbf{Z}^{-1}_{\mathbf{Y}}
\\
=&
\frac{1}{N_r} \widehat{\mathbf{W}}\left(\mathbf{Y}^*\mathbf{Y}\right)\mathbf{Z}^{-1}_{\mathbf{Y}}
= \widehat{\mathbf{W}}.
\end{align*}
This completes the proof.
\end{proof}

\subsection{Stability Analysis}\label{ss:stability}

In this section,  we perform a stability analysis for the linear operator $\mathbf{L}$.  We substantiate that the operator $\mathbf{L}$ is ill-conditioned for $K\to +\infty$. It simply means that only a certain number of lower order scattering coefficients can be recovered stably  which in turn contain only lower order information of the shape oscillations of boundary $\partial D$. The limit on the information about the shape and morphology of the inclusion $D$ that can be obtained stably is determined by the maximal resolving order and the stability estimate for the operator $\mathbf{L}$  thereby defining the resolution limit of the imaging paradigm.
Towards this end, the following result characterizes the singular values and the singular vectors of the operator $\mathbf{L}$.
\begin{thm}\label{thmSV} If $N_s,N_r\geq 2K+1$ and $R\to+\infty$ then the right singular vectors of $\mathbf{L}$ are coincident with the canonical basis of $\RR^{(4K+2)\times(4K+2)}$ and the $(p,q)-$th singular value  of the operator $\mathbf{L}$ is given by
\begin{align}
\sigma_{pq}:=
\begin{cases}
\ds\frac{\sqrt{N_sN_r}}{4\rho_0^2\omega^2c_P^2\K_P}\left|\left(H^{(1)}_{q-1-K}(\K_P R)\right)'\right|, & 1\leq p,q\leq 2K+1,
\\\nm
\ds\frac{\sqrt{N_sN_r}}{4\rho_0^2\omega^2c_S^2\K_S}\left|\left(H^{(1)}_{q-2-3K}(\K_S R)\right)'\right|, & 2K+2\leq p,q\leq 4K+2.
\end{cases}
\label{sigma_mn}
\end{align}
\end{thm}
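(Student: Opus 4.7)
The plan is to compute $\mathbf{L}^*\mathbf{L}$ explicitly and to diagonalise it in the canonical tensor basis of $\mathbb{C}^{(4K+2)\times(4K+2)}$; the singular values of $\mathbf{L}$ are then read off as square roots of its eigenvalues.

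First, I would equip both the domain and the codomain of $\mathbf{L}$ with the Frobenius inner product $\langle \mathbf{A},\mathbf{B}\rangle_F := \mathrm{tr}(\mathbf{A}^*\mathbf{B})$. A one-line trace manipulation, using cyclicity, identifies the adjoint as $\mathbf{L}^*(\mathbf{N}) = \mathbf{X}^*\mathbf{N}\mathbf{Y}$, so that
$$\mathbf{L}^*\mathbf{L}(\mathbf{M}) = \mathbf{X}^*\mathbf{X}\,\mathbf{M}\,\mathbf{Y}^*\mathbf{Y}.$$
At this point the two preparatory results do the work: the orthogonality identity \eqref{OrthogonalityX} gives $\mathbf{X}^*\mathbf{X} = N_s \mathbf{Z}_\mathbf{X}$ with $\mathbf{Z}_\mathbf{X}$ exactly diagonal, while Lemma \ref{LemY} gives $\mathbf{Y}^*\mathbf{Y} = N_r \mathbf{Z}_\mathbf{Y} + \mathbf{Q}$ with $\mathbf{Z}_\mathbf{Y}$ diagonal and $\|\mathbf{Q}\|_F = O(R^{-2})$. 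In the asymptotic regime $R \to +\infty$ the perturbation $\mathbf{Q}$ is either discarded outright, giving an exact identity in the limit, or absorbed via a Weyl-type bound into an $O(R^{-2})$ correction on the singular values.

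With both $\mathbf{X}^*\mathbf{X}$ and $\mathbf{Y}^*\mathbf{Y}$ diagonal, write $\{\mathbf{e}_p\}$ for the standard basis of $\mathbb{C}^{4K+2}$ and apply $\mathbf{L}^*\mathbf{L}$ to the canonical tensor products $\mathbf{e}_p\mathbf{e}_q^\top$ to obtain
$$\mathbf{L}^*\mathbf{L}(\mathbf{e}_p\mathbf{e}_q^\top) = N_s N_r (\mathbf{Z}_\mathbf{X})_{pp}(\mathbf{Z}_\mathbf{Y})_{qq}\,\mathbf{e}_p\mathbf{e}_q^\top.$$
Since $\{\mathbf{e}_p\mathbf{e}_q^\top\}_{p,q}$ is orthonormal in the Frobenius inner product, this furnishes a complete eigendecomposition of the positive self-adjoint operator $\mathbf{L}^*\mathbf{L}$. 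Hence the right singular vectors of $\mathbf{L}$ may be taken to be precisely the canonical tensor basis, and the corresponding singular values are $\sigma_{pq} = \sqrt{N_s N_r (\mathbf{Z}_\mathbf{X})_{pp}(\mathbf{Z}_\mathbf{Y})_{qq}}$.

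The remaining step is a purely algebraic substitution. By the block structure of $\mathbf{Z}_\mathbf{X}$, the entry $(\mathbf{Z}_\mathbf{X})_{pp}$ equals $|b_P|^{-2}$ for $1\leq p\leq 2K+1$ and $|b_S|^{-2}$ for $2K+2\leq p\leq 4K+2$; similarly, $(\mathbf{Z}_\mathbf{Y})_{qq}$ equals $|g^P_{q-1-K}|^2$ or $|h^S_{q-2-3K}|^2$ in the corresponding ranges by construction of $\mathbf{Z}_\mathbf{Y}$ in Lemma \ref{LemY}. Inserting the definitions \eqref{b_alpha}, \eqref{g} and \eqref{h} of $b_\alpha$, $g^\alpha_m$ and $h^\alpha_m$ and simplifying reproduces the two cases in \eqref{sigma_mn}. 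The only mildly delicate point in the whole argument is how rigorously to treat the $O(R^{-2})$ tail in Lemma \ref{LemY} — whether to state \eqref{sigma_mn} as the exact singular values in the strict limit $R\to+\infty$ or to quantify the perturbation via Weyl's inequality — but no genuinely nontrivial analysis is required once $\mathbf{L}^*\mathbf{L}$ is recognised as a Kronecker-type separable operator.
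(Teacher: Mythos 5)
Your proposal is correct and follows essentially the same route as the paper: the paper verifies directly that the images $\mathbf{L}(\mathbf{V}_{pq})$ of the canonical basis matrices are mutually orthogonal in the Frobenius inner product with norms $\sqrt{N_sN_r}\,|f_q|$, using \eqref{OrthogonalityX} and Lemma \ref{LemY}, which is exactly your diagonalisation of $\mathbf{L}^*\mathbf{L}$ on the tensor basis $\mathbf{e}_p\mathbf{e}_q^\top$ phrased through the adjoint. The paper likewise discards the $O(R^{-2})$ perturbation $\mathbf{Q}$ in the limit $R\to+\infty$ without a Weyl-type quantification, so your concluding remark flags the only point either argument leaves informal.
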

\begin{proof}
Let us define the inner product of two complex matrices $\mathbf{N}$ and $\mathbf{M}$ by
$$
\langle \mathbf{N},\mathbf{M}\rangle :=\sum_{\ell,\ell'}\left(\mathbf{N}^*\right)_{\ell \ell'}\left(\mathbf{M}\right)_{\ell \ell'}.
$$
Let $\mathbf{V}_{pq}\in\RR^{(4K+2)\times(4K+2)}$, for each $p,q=1,2,\cdots,4K+2$, be such that
$$
\left(\mathbf{V}_{pq}\right)_{\ell\ell'}=\delta_{p\ell}\delta_{q\ell'},\quad\forall\, \ell,\ell'=1,2,\cdots,4K+2.
$$
It is easy to verify that for $R\to+\infty$, thanks to diagonality result \eqref{OrthogonalityX} and Lemma \ref{LemY},
\begin{align*}
\langle \mathbf{L}\left(\mathbf{V}_{pq}\right),\mathbf{L}\left(\mathbf{V}_{p'q'}\right)\rangle
=&
\langle \mathbf{X}\mathbf{V}_{pq}\mathbf{Y}^*,\mathbf{X}\mathbf{V}_{p'q'}\mathbf{Y}^*\rangle
\\
=&
N_sN_r\langle \mathbf{V}_{pq},\mathbf{Z}_\mathbf{X}\mathbf{V}_{p'q'}\mathbf{Z}_\mathbf{Y}\rangle
\\
=&\delta_{pp'}\delta_{qq'}N_sN_r|f_q|^2,
\end{align*}
where
$$
|f_q|:=\begin{cases}
\ds\frac{|g_{q-1-K}^{P}|}{|b_P|}, & 1\leq p,q\leq 2K+1,
\\\nm
\ds\frac{|h_{q-2-3K}^{S}|}{|b_S|}, & 2K+2\leq p,q\leq 4K+2.
\end{cases}
$$
On substituting the values of $g^P_{q-1-K}$, $h^S_{q-2-3K}$  and $b_\alpha$ from \eqref{g}, \eqref{h} and \eqref{b_alpha}, we arrive  at
$$
|f_q|:=\begin{cases}
\ds\frac{1}{4\rho_0^2\omega^2c_P^2\K_P}\left|\left(H^{(1)}_{q-1-K}(\K_P R)\right)'\right|, & 1\leq p,q\leq 2K+1,
\\\nm
\ds\frac{1}{4\rho_0^2\omega^2c_S^2\K_S}\left|\left(H^{(1)}_{q-2-3K}(\K_S R)\right)'\right|, & 2K+2\leq p,q\leq 4K+2.
\end{cases}
$$
This shows that the canonical basis $\big\{\mathbf{V}_{pq}\big\}_{p,q=1,\cdots,4K+2}$ forms the set of right singular vectors of $\mathbf{L}$ and the $(p,q)$-th  singular value of the operator $\mathbf{L}$ is thus rendered by $\left\|\mathbf{L}\left(\mathbf{V}_{pq}\right)\right\|_{F}$ and is given by \eqref{sigma_mn}. Moreover, the left singular vectors of $\mathbf{L}$ are furnished by the relation $\widetilde{\mathbf{V}}_{pq}:=\mathbf{L}\left(\mathbf{V}_{pq}\right)/\sigma_{pq}$.
\end{proof}

It should be observed that the quantities $|g^P_{2K+1}|$ and $|h^S_{2K+1}|$ diverge when $K\to+\infty$. Consequently, the operator $\mathbf{L}$ is unbounded. Indeed, we have the following estimate for the condition number of $\mathbf{L}$ thanks to Theorem \ref{thmSV}.

\begin{cor}\label{CorCond}
Under the assumptions of Theorem \ref{thmSV},
$$
{\rm cond}\left(\mathbf{L}\right)\lesssim \left(C_R^P K\right)^{(K+1)},
\quad\text{as}\quad K\to+\infty,
$$
where $C_R^\alpha:=\ds{2}/{e\K_\alpha R}.$
\end{cor}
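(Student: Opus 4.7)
My plan is to read the condition number directly off Theorem~\ref{thmSV}. Since the singular values $\sigma_{pq}$ in \eqref{sigma_mn} depend only on the column index $q$ (and on which diagonal block $p$ sits in) and take the form of $|(H^{(1)}_n(\K_\alpha R))'|$ for $|n|\leq K$ and $\alpha\in\{P,S\}$ multiplied by a factor independent of $n$ and $K$, the condition number $\mathrm{cond}(\mathbf{L})=\max_{pq}\sigma_{pq}/\min_{pq}\sigma_{pq}$ is controlled entirely by the largest and smallest values that these Hankel derivatives can take. It therefore suffices to bound $|(H^{(1)}_n(\K_\alpha R))'|$ from above at $n=\pm K$ and from below at small orders.

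For the upper bound I would invoke the large-order asymptotic
$$
|H^{(1)}_m(t)| \;\sim\; \sqrt{\frac{2}{\pi m}}\left(\frac{2m}{et}\right)^m,\qquad m\to+\infty,\ t\text{ fixed},
$$
so that $|H^{(1)}_m(\K_\alpha R)|$ is eventually monotonically increasing in $|m|$ and its maximum over $|n|\leq K$ is attained at $n=\pm K$. The recurrence $(H^{(1)}_m)'(t)=-\tfrac{m}{t}H^{(1)}_m(t)+H^{(1)}_{m-1}(t)$, whose first term dominates once $m\gg t$ (the ratio of the two terms being of order $t/m$), then yields
$$
|(H^{(1)}_{\pm K}(\K_\alpha R))'| \;\sim\; \sqrt{\frac{2K}{\pi}}\cdot\frac{1}{\K_\alpha R}\,(C_R^\alpha K)^K.
$$
Since $c_P>c_S$ gives $\K_P<\K_S$ and hence $C_R^P>C_R^S$, the $P$-mode dominates, so $\max_{pq}\sigma_{pq}\lesssim\sqrt{K}\,(C_R^P K)^K$ up to a factor depending only on $\rho_0,\omega,c_P,\K_P,R$ and $N_s,N_r$.

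For the lower bound I would note that the minimum is attained at a fixed small order $n\in\{0,\pm 1\}$, at which $|(H^{(1)}_n(\K_\alpha R))'|$ is a positive constant independent of $K$ (for a generic $R$ away from the discrete zero set of these low-order derivatives). Hence $\min_{pq}\sigma_{pq}$ is bounded below by a $K$-independent positive constant times $\sqrt{N_sN_r}$. Taking the ratio and absorbing $K$-independent factors into the $\lesssim$ symbol,
$$
\mathrm{cond}(\mathbf{L}) \;\lesssim\; \sqrt{K}\,(C_R^P K)^K \;\leq\; (C_R^P K)^{K+1},
$$
the last inequality holding for all $K$ sufficiently large that $\sqrt{K}\leq C_R^P K$.

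The main obstacle is the denominator: one must confirm a uniform-in-$K$ positive lower bound for the small-order Hankel derivatives $|(H^{(1)}_n(\K_\alpha R))'|$ (otherwise a polynomial or exponential correction could creep into the final estimate), and one must justify absorbing the $\sqrt{K}$ prefactor into the exponential $(C_R^P K)^{K+1}$; the upper bound itself is a routine consequence of standard large-order expansions of Bessel-type functions and the Wronskian/recurrence identities.
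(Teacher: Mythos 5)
Your argument is essentially the paper's own proof: both bound $\sigma_{\rm max}$ via the large-order asymptotics of $J_m$ and $Y_m$ at fixed argument together with the recurrence for $(H^{(1)}_m)'$, observe that the $P$-block dominates because $c_P>c_S$ gives $\K_P<\K_S$ and hence $C_R^P>C_R^S$, and bound $\sigma_{\rm min}$ from below by a $K$-independent positive constant. The two obstacles you flag are harmless: $(H^{(1)}_n)'(t)=J_n'(t)+iY_n'(t)$ never vanishes for real $t>0$ because the Wronskian identity $J_n(t)Y_n'(t)-J_n'(t)Y_n(t)=2/(\pi t)$ forbids $J_n'$ and $Y_n'$ from vanishing simultaneously (so no genericity assumption on $R$ is needed), and the prefactor $\sqrt{K}$ is absorbed since $\sqrt{K}\leq C_R^P K$ for all $K\geq (C_R^P)^{-2}$.
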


\begin{proof}
Let  $\sigma_{\rm max}$ and  $\sigma_{\rm min}$ be the largest and the smallest singular values of the operator $\mathbf{L}$. Recall the asymptotic behavior of the Bessel functions of first and second kind
\begin{equation*}
J_m(t)\simeq \sqrt{\frac{1}{2\pi |m|}}\left(\frac{et}{2|m|}\right)^{|m|}
\quad\text{and}\quad
Y_m(t)\simeq -\sqrt{\frac{2}{\pi |m|}}\left(\frac{et}{2|m|}\right)^{-|m|},
\end{equation*}
with respect to the order $|m|\to+\infty$ at a fixed argument $t$ (see, for instance, \cite[Formulae 10.19.1 and 10.19.2]{NIST}). Consequently,  an easy commutation shows that
$$
\left|H^{(1)}_m(\K_\alpha R)\right|\lesssim \left(C_R^\alpha |m|\right)^{|m|}+\left(C_R^\alpha|m|\right)^{-|m|}, \quad\text{as }\,|m|\to+\infty.
$$

Moreover, invoking the recurrence relation (see, for instance, \cite[Formula 10.6.2]{NIST})
$$
\left(H^{(1)}_m(t)\right)'=H^{(1)}_{m-1}(t)-\frac{m}{t}H^{(1)}_m(t),
$$
it is easy to get that
\begin{align*}
\left|\left(H^{(1)}_m(\K_\alpha R)\right)'\right|
\lesssim &
(C_R^\alpha(m-1))^{(m-1)}+(C_R^\alpha(m-1))^{-(m-1)}
+\frac{e}{2}C_R^\alpha{m}\left((C_R^\alpha m)^{m}+(C_R^\alpha m)^{-m}\right)
\nonumber
\\
\lesssim &
\left(C_R^\alpha m\right)^{m+1},
\end{align*}
when $|m|\to+\infty$. Consequently,
\begin{align*}
\sigma_{(2K+1)(2K+1)}\lesssim \left(C_R^PK\right)^{K+1}
\quad\text{and}\quad
\sigma_{(4K+2)(4K+2)}\lesssim \left(C_R^S K\right)^{K+1}.
\end{align*}
Finally,   note that the relation $\sigma_{\rm max}\simeq \sigma_{(2K+1)(2K+1)}$ holds when $K$ is large enough,  which follows from the fact that $C_R^P> C_R^S$ (this is due to the inequality $c_P>c_S$, since $\mu_0, \lambda_0>0$). Moreover, the smallest singular value $\sigma_{\rm min}$ is bounded. Therefore,
\begin{eqnarray*}
{\rm cond}(\mathbf{L})=\frac{\sigma_{\rm max}}{\sigma_{\rm min}}\lesssim \left(C_R^PK\right)^{ K+1 }.
\end{eqnarray*}
\end{proof}

It view of the aforementioned result, the least squares minimization problem \eqref{eq:LeastSq} turns out to be highly ill-conditioned. However, this ill-posedness can be reduced by considering the constrained optimization problem subject to the energy conservation constraint \eqref{W-constraint}.

\subsection{Error Analysis}\label{ss:error}

Let us now analyze the error committed by truncating the infinite series in the MSR data. But  before further discussion, we recall the following result from \cite[Appendix A]{Shape}.
\begin{lem}\label{LemErr}
For $c>0$ and $N\in\mathbb{N}$ such that $N>c/e$
$$
\sum_{n>N}\left(\frac{c}{n}\right)^n\leq \left(\frac{c}{N}\right)^N\left(\frac{1}{1+\ln(N/c)}\right).
$$
\end{lem}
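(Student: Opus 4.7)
\medskip
\noindent\textbf{Proof proposal.}
The plan is to reduce the discrete tail to an integral tail by an integral comparison, and then evaluate that integral in closed form via the identity $f(x)=-f'(x)/(1+\ln(x/c))$ for $f(x):=(c/x)^x$. Concretely, I would proceed in the following three steps.

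First, I would establish that $f(x)=(c/x)^x=e^{-x\ln(x/c)}$ is strictly positive and strictly decreasing on $(c/e,\infty)$. A direct logarithmic differentiation gives
\begin{equation*}
f'(x)=-\bigl(1+\ln(x/c)\bigr)f(x),
\end{equation*}
and the factor $1+\ln(x/c)$ is strictly positive exactly when $x>c/e$. The hypothesis $N>c/e$ is therefore precisely what is needed to guarantee that $f$ is decreasing on $[N,\infty)$, so that the usual integral test applies:
\begin{equation*}
\sum_{n>N}\Bigl(\frac{c}{n}\Bigr)^n=\sum_{n>N}f(n)\leq \int_{N}^{\infty}f(x)\,dx.
\end{equation*}

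Second, I would bound the integral on the right by exploiting the identity displayed above. Rearranging it gives $f(x)=-f'(x)/(1+\ln(x/c))$, and since $1+\ln(x/c)$ is increasing in $x$, for every $x\geq N$ one has $1+\ln(x/c)\geq 1+\ln(N/c)>0$. Hence
\begin{equation*}
\int_{N}^{\infty}f(x)\,dx=\int_{N}^{\infty}\frac{-f'(x)}{1+\ln(x/c)}\,dx\leq \frac{1}{1+\ln(N/c)}\int_{N}^{\infty}\bigl(-f'(x)\bigr)\,dx.
\end{equation*}

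Third, the remaining integral telescopes: since $f(x)\to 0$ as $x\to\infty$, we obtain $\int_{N}^{\infty}(-f'(x))\,dx=f(N)=(c/N)^N$, and combining with the two previous displays yields the claimed estimate. The only subtle point, and what I would identify as the main (though minor) obstacle, is verifying that the hypothesis $N>c/e$ is used crucially in two places at once: it guarantees monotonicity of $f$ on $[N,\infty)$ (so the integral test is valid) and simultaneously ensures positivity of the denominator $1+\ln(N/c)$ (so the final bound is finite); both collapse exactly at the boundary $N=c/e$, which is consistent with the strict inequality in the hypothesis.
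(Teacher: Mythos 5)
Your proof is correct: the identity $f'(x)=-(1+\ln(x/c))f(x)$, the integral comparison for the decreasing tail, the monotone lower bound on the denominator, and the telescoping evaluation $\int_N^\infty(-f')\,dx=f(N)$ all check out, and the hypothesis $N>c/e$ is used exactly where you say it is. Note that the paper itself does not prove this lemma --- it is simply recalled from Appendix~A of the cited reference \cite{Shape} --- so there is no in-paper argument to compare against; your write-up is a complete, self-contained proof by the standard integral-test route.
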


The following is the main result of this section.
\begin{thm}\label{thmErr}
Let $C^\alpha_R$ and $C_{\alpha,\beta}>1$ be the constants defined in Corollary \ref{CorCond} and Lemma \ref{LemWbound} respectively. Let the radius $R$ of the measurement domain $B_R(\textbf{0})$ be such that $h:=\max_{\alpha,\beta}\{2C_{\alpha,\beta}^2C^\alpha_R\}<1$. Then  there exists a sufficiently large truncation order $K$ satisfying $K>\ds\max_{\alpha,\beta}\{C_{\alpha,\beta}/(C_R^\alpha e) \}$  such that
$$
|E^{\alpha,\beta}_{sr}|=O(h^{K-1}).
$$
\end{thm}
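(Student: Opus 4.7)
The plan is to express each $E^{\alpha,\beta}_{sr}$ as the tail of the double series \eqref{MSR-expansionPP}--\eqref{MSR-expansionSS}, namely the sum of all terms indexed by $(m,n)$ with $|m|>K$ or $|n|>K$, and to bound this tail via the union estimate $\sum_{|m|>K\,\text{or}\,|n|>K}\leq \sum_{|m|>K}+\sum_{|n|>K}$ applied to absolute values. For each summand I will combine three term-by-term bounds: (i) $|d_m^\beta(s)|$ is independent of $m$ by \eqref{d}, so it contributes a harmless constant; (ii) Lemma \ref{LemWbound} rewritten symmetrically as $|W^{\alpha,\beta}_{m,n}|\leq(C_{\alpha,\beta}/|m|)^{|m|-1}(C_{\alpha,\beta}/|n|)^{|n|-1}$; and (iii) the large-order Hankel asymptotics used in the proof of Corollary \ref{CorCond}, which give $|\bH^\alpha_n(\bx_r)|\lesssim(C_R^\alpha|n|)^{|n|+1}$ on the measurement circle $|\bx_r|=R$, valid once $|n|$ exceeds the Hankel transition order $\sim \K_\alpha R$, a condition implied by $|n|>K>C_{\alpha,\beta}/(eC_R^\alpha)=C_{\alpha,\beta}\K_\alpha R/2$.

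Multiplying these bounds, every summand is controlled (up to a constant independent of $K$) by $(C_{\alpha,\beta}/|m|)^{|m|-1}(C_{\alpha,\beta}C_R^\alpha)^{|n|-1}|n|^2$. The $m$-dependence decays super-geometrically, and the full $n$-sum converges to a bounded quantity, so Lemma \ref{LemErr} with $c=C_{\alpha,\beta}$ (whose hypothesis $K>c/e$ is a weakening of $K>C_{\alpha,\beta}/(eC_R^\alpha)$ in the far-field regime $C_R^\alpha<1$) controls the $m$-tail by
$$\sum_{|m|>K}\Bigl(\frac{C_{\alpha,\beta}}{|m|}\Bigr)^{|m|-1}\lesssim\Bigl(\frac{C_{\alpha,\beta}}{K}\Bigr)^{K-1},$$
which decays faster than any geometric rate $h^{K-1}$ and is therefore subdominant.

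The bottleneck is the $n$-tail, where the factorial decay of $W^{\alpha,\beta}_{m,n}$ and the exponential growth of $\bH^\alpha_n$ nearly cancel, leaving only $(C_{\alpha,\beta}C_R^\alpha)^{|n|-1}|n|^2$ once the $m$-variable is summed out against the convergent constant. The key step will be to absorb the polynomial $|n|^2$ into an additional geometric factor $(2C_{\alpha,\beta})^{|n|-1}$, valid for $|n|\geq K$ large since $C_{\alpha,\beta}>1$, producing
$$\sum_{|n|>K}(C_{\alpha,\beta}C_R^\alpha)^{|n|-1}|n|^2\lesssim\sum_{|n|>K}\bigl(2C_{\alpha,\beta}^2C_R^\alpha\bigr)^{|n|-1}=\sum_{|n|>K}h^{|n|-1}\lesssim\frac{h^{K-1}}{1-h}.$$
Adding the two tail contributions gives $|E^{\alpha,\beta}_{sr}|=O(h^{K-1})$, as claimed. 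The main obstacle --- and the origin of the apparently strong hypothesis $h=\max_{\alpha,\beta}2C_{\alpha,\beta}^2C_R^\alpha<1$ --- is precisely this polynomial-to-geometric trade: a bare summation of $(C_{\alpha,\beta}C_R^\alpha)^{|n|-1}|n|^2$ converges merely under $C_{\alpha,\beta}C_R^\alpha<1$, but the clean rate $h^{K-1}$ can only be achieved if one has enough slack to absorb the prefactor $|n|^2$ into an extra geometric factor of ratio $2C_{\alpha,\beta}$ without destroying convergence; this is exactly the amount encoded in $h<1$.
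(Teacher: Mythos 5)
Your proposal is correct and follows essentially the same route as the paper: termwise bounds from Lemma \ref{LemWbound} and the large-order Hankel asymptotics of Corollary \ref{CorCond}, a split of the tail according to which index exceeds $K$, and the absorption of the polynomial factor $|n|^2$ into a geometric factor so that the surviving ratio is exactly $h=\max_{\alpha,\beta}\{2C_{\alpha,\beta}^2C_R^\alpha\}$. The only (harmless) bookkeeping difference is that the paper uses a three-way partition ($|m|\le K<|n|$, $|n|\le K<|m|$, both $>K$) and pays the extra factor $C_{\alpha,\beta}^{K-1}$ through a deliberately loose bound on the finite $m$-sum, whereas you use a two-way union bound and put that same factor into the absorption $|n|^2\le(2C_{\alpha,\beta})^{|n|-1}$; both land on $h^{K-1}$.
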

\begin{proof}
The result for the truncation error $E^{P,P}_{sr}$ is proved only. The rest of the estimates can be obtained following the same procedure. First, split the summations into three different contributions as
\begin{align*}
E^{P,P}_{sr}=&\left(\sum_{\substack{|m|\leq K\\|n|>K}}+\sum_{\substack{|m|> K\\|n|\leq K}}+\sum_{\substack{|m|> K\\|n|>K}}\right)d_{m}^{P}(s)\left(W_{m,n}^{P,P}\Big[\bH_{n}^{P}(\bx_r)\cdot\bd_r\Big]+W_{m,n}^{S,P}\Big[\bH_{n}^{S}(\bx_r)\cdot \bd_r \Big]\right)
\nonumber
\\
=&I_1+I_2+I_3.
\end{align*}
Then, thanks to Lemma \ref{LemWbound} and invoking the  definitions \eqref{d} and \eqref{UP}-\eqref{US} of $d_m^P$ and $\bH^\alpha_n$ respectively, we have
\begin{align*}
|I_1|\leq & \frac{1}{4\rho_0^2\omega^2c^2_P\K_P}\Bigg(\sum_{|m|\leq K}\frac{C_{P,P}^{|m|-1}}{|m|^{|m|-1}}\sum_{|n|>K}\frac{C_{P,P}^{|n|-1}}{|n|^{|n|-1}}\left|\K_P \left(H^{(1)}_n(\K_P R)\right)'\right|
\nonumber
\\
&
\qquad\qquad\qquad\qquad
+\sum_{|m|\leq K}\frac{C_{S,P}^{|m|-1}}{|m|^{|m|-1}}\sum_{|n|>K}\frac{C_{S,P}^{|n|-1}}{|n|^{|n|-1}}\frac{|n|}{R}\left|H^{(1)}_n(\K_S R)\right|\Bigg).
\end{align*}
Recall the estimates
\begin{align*}
&\left|H^{(1)}_n(\K_\alpha R)\right|
\lesssim 
\left(C_R^\alpha |n|\right)^{|n|}+\left(C_R^\alpha|n|\right)^{-|n|},
\\
&\left|\left(H^{(1)}_{n}(\K_\alpha R)\right)'\right|
\lesssim 
\frac{e}{2}(C_R^\alpha |n|)^{|n|+1}+ (C_R^\alpha (|n|-1))^{|n|-1}
\\
&\qquad\qquad\qquad\qquad
+\frac{e}{2}(C_R^\alpha |n|)^{1-|n|}+(C_R^\alpha (|n|-1))^{1-|n|},
\end{align*}
as $|n|\to+\infty$ and note that, up to some factors independent of $K$,
\begin{align*}
\sum_{|m|\leq K}\frac{C_{P,P}^{|m|-1}}{|m|^{|m|-1}}\lesssim C^{K-1}_{P,P}
\quad\text{and}\quad
\sum_{|m|\leq K}\frac{C_{S,P}^{|m|-1}}{|m|^{|m|-1}}\lesssim C^{K-1}_{S,P}.
\end{align*}
Therefore,
\begin{align}
|I_1|\lesssim &
\frac{C_{P,P}^{K-1}}{4\rho_0^2\omega^2c^2_P}
\sum_{|n|>K}\Bigg[
\frac{e}{2}\left(C_R^P|n|\right)^2\left(C_{P,P}C^P_R\right)^{|n|-1}
+
\left(1-\frac{1}{|n|}\right)^{|n|-1}\left(C_{P,P}C_R^P\right)^{|n|-1}
\nonumber
\\
&\qquad\qquad\qquad\qquad\qquad\qquad
+\frac{e}{2}\left(\frac{C_{P,P}/C_R^P}{|n|^2}\right)^{|n|-1}
+
\left(\frac{C_{P,P}/C^P_R}{|n|(|n|-1)}\right)^{|n|-1}\Bigg]
\nonumber
\\
&+
\frac{e \K_{S} C_{S,P}^{K-1}}{8\rho_0^2\omega^2c^2_P\K_P}
\sum_{|n|>K}\Bigg[
\left(C_R^S|n|\right)^2\left(C_{S,P}C^S_R\right)^{|n|-1}
+
\left(\frac{C_{S,P}/C^S_R}{|n|^2}\right)^{|n|-1}\Bigg].\label{I1_mid}
\end{align}
Thanks to Lemma \ref{LemErr}, the third, fourth and sixth terms on RHS of \eqref{I1_mid} are negligible for all $K> \max_{\alpha,\beta}\{C_{\alpha,\beta}/C_R^\alpha e\}$. Moreover, it can be easily verified that
\begin{align}
\frac{n^2}{2^{n+2}}\leq 1
\quad\text{and}\quad
\frac{1}{2^{n-1}}\left(1-\frac{1}{n}\right)^{n-1}\leq 1,\quad \forall\;  n\in\NN, \,n>1,
\end{align}
and
\begin{align}
\left(C^\alpha_R\right)^2<C^\alpha_R<{C_{\alpha,\beta} C^\alpha_R}<{C^2_{\alpha,\beta} C^\alpha_R} <\max_{\alpha,\beta}\{C_{\alpha,\beta}^2C^\alpha_R\}<\frac{1}{2}.
\end{align}
Therefore, we have
\begin{align}
&\frac{e C^{K-1}_{P,P}}{8}\sum_{|n|>K} {\left(C^P_R|n|\right)^2}\left(C_{P,P} C^P_R\right)^{|n|-1}
\nonumber
\\
&\qquad 
\lesssim 
e C^{K-1}_{P,P} \sum_{|n|>K}\frac{|n|^2}{2^{|n|+2}}\left(2C_{P,P} C^P_R\right)^{|n|-1}
\lesssim   
e C^{K-1}_{P,P}\left(2C_{P,P} C^P_R\right)^{K-1}
\leq    \,\,e h^{K-1},
\label{est:1}
\\\nm
&\frac{C^{K-1}_{P,P}}{4}\sum_{|n|>K}
 \left(1-\frac{1}{|n|}\right)^{|n|-1}\left(C_{P,P} C^P_R\right)^{|n|-1}
\nonumber
\\
&\qquad 
\lesssim  
{C^{K-1}_{P,P}}\sum_{|n|>K} \frac{1}{2^{|n|-1}}\left(1-\frac{1}{|n|}\right)^{|n|-1}\left(2C_{P,P} C^P_R\right)^{|n|-1}
\lesssim  
C^{K-1}_{P,P}\left(2C_{P,P} C^P_R\right)^{K-1}
\leq   \,\,
h^{K-1},
\label{est:2}
\\\nm
&\frac{e\K_S C^{K-1}_{S,P}}{8\K_P}\sum_{|n|>K}\left(C^S_R|n|\right)^2\left(C_{S,P} C^S_R\right)^{|n|-1}
\nonumber
\\
&\qquad 
\lesssim 
\frac{c_P C^{K-1}_{S,P}}{c_S}\sum_{|n|>K}\frac{|n|^2}{2^{|n|+2}}\left(2C_{S,P}C^S_R\right)^{|n|-1}
\lesssim 
\frac{ c_P C^{K-1}_{S,P}}{c_S}\left(2C_{S,P}C^S_R\right)^{K-1}
\leq  \,\,
\frac{c_P}{c_S} h^{K-1}.
\label{est:3}
\end{align}
Substituting the estimates \eqref{est:1}-\eqref{est:3} in \eqref{I1_mid} one arrives at
\begin{align*}
|I_1|\lesssim \frac{1}{\rho_0\omega^2c^2_P}\left(e+ 1+\frac{c_P}{c_S} \right)h^{K-1}.
\end{align*}
The estimate for $|I_2|$ follows by changing the role of $m$ and $n$. Moreover, by proceeding in a similar fashion, it can be easily established that
$$
|I_3|\lesssim \left(\frac{h}{K}\right)^{K-1}.
$$
Combining the estimates for $|I_1|$, $|I_2|$ and $|I_3|$, one obtains $|E^{P,P}_{sr}|\lesssim h^{K-1}$. This completes the proof.

\end{proof}

\subsection{Maximal Resolving Order}\label{ss:snr}

In this subsection, we determine the maximal resolving order $K$ for the reconstruction framework. In order to do so, we first estimate the \emph{strength} of the recorded signals in terms of the geometry of inclusion $D$ and the radius of the recording circle. Then we define the signal-to-noise ratio (SNR) in terms of signal strength and noise standard deviation $\sigma_{\rm noise}$. Towards this end, it is easy to see from the integral representation \eqref{u-int-rep} that
$$
{A}^{\parallel,\parallel}_{sr}=
\mathcal{S}^\omega_D[\bpsi_{\bF_s}](\bx_r)\cdot\bd_r=
\mathcal{S}^\omega_D[\bpsi_{\bF_s}](R\,\bd_r)\cdot\bd_r,
$$
where
$\bpsi_{\bF_s}$ is the solution of \eqref{integral-system} corresponding to $\bu^{\rm inc}=\bF_s$. By virtue of the far field behavior,
\begin{align*}
\bGamma^\omega(\bx,\by)
\simeq&
\frac{e^{i\K_P|\bx|}}{\sqrt{|x|}}
\left(\frac{i+1}{4\rho_0 c_P^2\sqrt{\pi\K_P}} \hbx\otimes\hbx e^{-i\K_P\hbx\cdot\by}\right)
\nonumber
\\
&+
\frac{e^{i\K_S|\bx|}}{\sqrt{|x|}}\left(\frac{i+1}{4\rho_0 c_S^2\sqrt{\pi\K_S}} \left(\mathbf{I}_2-\hbx\otimes\hbx\right)e^{-i\K_S\hbx\cdot\by}\right),
\end{align*}
of the fundamental solution for a bounded $\by\in\RR^2$ and  $\bx\in\RR^2$ such that $|\bx|\to+\infty$,  one has
\begin{align}
A^{\parallel,\parallel}\simeq &
\frac{1}{\sqrt{R}}\frac{(i+1)e^{i\K_P R}}{4\rho_0 c_P^2\sqrt{\pi\K_P}}\int_{\partial D}[(\bd_r\otimes\bd_r)\bpsi_{\bF_s}(\by)]\cdot\bd_r\, e^{-i\K_P|\by|\cos(\theta_r-\theta_\by)}\,d \sigma(\by)&
\nonumber
\\
\qquad
& +\frac{1}{\sqrt{R}}\frac{(i+1)e^{i\K_S R}}{4\rho_0c_S^2\sqrt{\pi\K_S}}\int_{\partial D}[\bd^\perp_r\otimes\bd^\perp_r)\bpsi_{\bF_s}(\by)]\cdot \bd_r\,e^{-i\K_S|\by|\cos(\theta_r-\theta_\by)}\, d\sigma(\by) 
\nonumber 
\\\nm
\simeq &
\frac{1}{\sqrt{R}}\frac{(i+1)e^{i\K_P R}}{4\rho_0 c_P^2\sqrt{\pi\K_P}}\int_{\partial D}[\bpsi_{\bF_s}(\by)\cdot\bd_r]\, e^{-i\K_P|\by|\cos(\theta_r-\theta_\by)}\,d \sigma(\by)&
\nonumber
\\
\qquad
& +\frac{1}{\sqrt{R}}\frac{(i+1)e^{i\K_S R}}{4\rho_0c_S^2\sqrt{\pi\K_S}}\int_{\partial D}[\bpsi_{\bF_s}(\by)\cdot \bd_r^\perp]\,e^{-i\K_S|\by|\cos(\theta_r-\theta_\by)}\, d\sigma(\by). \label{AFarField}
\end{align}
On the other hand, by \eqref{stability}
$$
\left\|\bpsi_{\bF_s}\right\|_{L^2(\partial D)^2}\leq \left\|\bF_s\right\|_{H^1(\partial D)^2}+\left\|\frac{\partial\bF_s}{\partial \nu}\right\|_{L^2(\partial D)^2}\leq C \sqrt{|\partial D|},
$$
for some constant independent of $R$ and $|\partial D|$.
Thus, by taking the modulus on both sides of \eqref{AFarField}, substituting the above estimate for $\|\bpsi_{\bF_s}\|$, and using the Cauchy-Schwartz inequality, one obtains the estimate
$$
|A^{\parallel,\parallel}|\leq C \frac{|\partial D|}{\sqrt{R}}.
$$
The constant $C$ above depends only on the material parameters of the background domain, inclusion $D$, and the frequency $\omega$ of the incident field but is  independent of $R$ and $\partial D$. Similarly, the terms of other MSR matrices can be also bounded by $|\partial D|/\sqrt{R}$. This endorses that the measured signals are of order $|\partial D|/\sqrt{R}$. Therefore, we define SNR by
$$
\snr:= \frac{|\partial D|/\sqrt{R}}{\sigma_{\noise}}.
$$

Now we are ready to estimate the maximal resolving order $K$. In the sequel, $\mathbb{E}$ denotes the expectation with respect to the statistics of the
noise $\mathbf{N}_{\noise}$. Moreover, we work in the regime when $R\to+\infty$ (or $O(R^{-3/2})$ terms are negligible) and the truncation error is much smaller than the noise standard deviation, which in turn is much smaller than the order of the signal (or simply SNR is much larger than $1$), that is,
\begin{align}
h^{K-1}\ll \sigma_{\noise}\ll |\partial D|/\sqrt{R}.\label{NoiseRegime}
\end{align}
From the injectivity of operator $\mathbf{L}$ for $R\to+\infty$ and the relation \eqref{Lsystem},  we have
\begin{align}
\mathbb{E}\left(\left|\left(\widehat{\mathbf{W}}-\mathbf{W}\right)_{mn}\right|^2\right)^{1/2}
= & \,\,
\mathbb{E}\left(\Big|(\mathbf{L}^\dagger(\mathbf{E}+\mathbf{N}_\noise)_{mn}\Big|^2\right)^{1/2}
\nonumber
\\
\leq & \,\,
\Big|\mathbf{L}^\dagger(\mathbf{E})_{mn}\Big|+\Big|\mathbf{L}^\dagger(\mathbf{N}_\noise)_{mn}\Big|
\nonumber
\\\nm
\lesssim & \,\,
\sigma^{-1}_{mn}\left(\left\|\mathbf{E}\right\|_{F}+\left\|\mathbf{N}_\noise\right\|_{F}\right)
\nonumber
\\\nm
\lesssim & \,\,
\sigma^{-1}_{mn}\left(h^{K-1}+ \sigma_\noise\sqrt{N_sN_r}\right),\label{E_mid}
\end{align}
where Cauchy-Schwartz inequality has been invoked to arrive at the last identity. By assumption \eqref{NoiseRegime}, the first term on RHS of \eqref{E_mid}  is negligible.  Thus,
\begin{align}
\mathbb{E}\left(\left|\left(\widehat{\mathbf{W}}-\mathbf{W}\right)_{mn}\right|^2\right)^{1/2}
\lesssim &
\sigma^{-1}_{mn}\sigma_\noise\sqrt{N_sN_r}.\label{Expectation}
\end{align}
This indicates that the discrepancy between the estimated and the measured scattering coefficients approaches zero very rapidly for all $m,n>K$ when $K\to+\infty$ thanks to the estimation of the magnitude of $\sigma_{mn}$. It simply means that the scattering coefficients of an inclusion $D$ can be approximated arbitrarily closely and up to \emph{any order} by the elements of $\widehat{\mathbf{W}}$ in the sense of mean-squared error when the noise regime is characterized by \eqref{NoiseRegime}. However, in view of the decay rate \eqref{Wbound} of $W^{\alpha,\beta}_{mn}$, it is reasonable to determine an adoptive resolving order $K$ by restricting the reconstruction error to be smaller than the signal level.  In particular, for any threshold reconstruction error $\varepsilon>0$, one can see from \eqref{Expectation} and \eqref{Wbound} that
\begin{align*}
\mathbb{E}\left(\left|\left(\widehat{\mathbf{W}}-\mathbf{W}\right)_{mn}\right|^2\right)^{1/2}
\lesssim &
\sigma^{-1}_{mn}\sigma_\noise\sqrt{N_sN_r}\lesssim \varepsilon\left(\frac{C_{\max}}{K}\right)^{2K-2},
\end{align*}
where $C_{\max}:=\max_{\alpha,\beta}\{C_{\alpha,\beta}\}$. After simple manipulations analogous to those in the proof of Corollary \ref{CorCond} and using the behavior of   $(H^{(1)}_{n}(\K_\alpha R))'$ for large $n$, one can show that  $\sigma_{mn}^{-1}=O((C_R^S K)^{1-K})$ for all $m,n>K$. Therefore, under noise regime characterized by  \eqref{NoiseRegime},
\begin{align*}
\left(C^S_R K\right)^{1-K} K^{2K-2} \lesssim \varepsilon \frac{ C_{\max}^{2K-2}}{\sigma_\noise},
\end{align*}
or equivalently
\begin{align*}
K^{K-1}\lesssim \varepsilon \frac{(C^2_{\max} C_R^S)^{K-1}}{\sigma_\noise}\lesssim \varepsilon \frac{h^{K-1}}{\sigma_\noise}\leq \varepsilon \snr,
\end{align*}
and the maximal resolving order is defined by
$$
K=\max\big\{N\in\mathbb{N}\;| N^{N-1}\leq \varepsilon\snr\big\}. 
$$

\section{Nearly Elastic Cloaking}\label{s:cloaking}
In this section, we consider the elastic cloaking problem as an application of the elastic scattering coefficients. The aim here is to construct an effective nearly elastic cloaking structure at a fixed frequency for making the objects inside the unit disk \emph{invisible}. We extend the approach of Ammari et al. \cite{Ammari1,Helmholtz,Maxwell} for conductivity, Helmholtz and Maxwell equations to the Lam\'e system. Towards this end, we first design
S-vanishing structures in the next subsection by \emph{canceling} the first  ESCs.

\subsection{S-vanishing Structures}\label{sec:S-vanishing}
For positive numbers $r_j$ ($j=1,2,\cdots, L+1$) with $2=r_1>r_2>\cdots>r_{L+1}=1$, construct a multi-layered structure by defining
\begin{align*}
&A_0 :=\big\{\bx\in\RR^2\,\big|\quad |\bx|>2\big\},
\\
&A_j :=\big\{\bx\in\RR^2\,\big|\quad r_{j+1}\leq|\bx| <r_j\big\}, \quad j=1,\cdots,L,
\\
&A_{L+1}:=\big\{\bx\in\RR^2\,\big|\quad|\bx|<1\big\}.
\end{align*}
Let $(\lambda_j,\mu_j,\rho_j)$ be the Lam\'e parameters and density of $A_j$, for $j=0,\cdots, L+1$. In particular, $\lambda_0$, $\mu_0$ and $\rho_0$ are the parameters of the background medium. In the sequel, the  piece-wise constant parameters  $\lambda$, $\mu$ and $\rho$ are redefined as
\begin{eqnarray}\label{S-structure}
\lambda(\bx)=\sum_{j=0}^{L+1} \lambda_j\,\chi_{(A_j)}(\bx),
\quad
\mu(\bx)=\sum_{j=0}^{L+1} \mu_j\,\chi_{(A_j)}(\bx), \quad\text{and}\quad
\rho(\bx)=\sum_{j=0}^{L+1} \rho_j\,\chi_{(A_j)}(\bx),
\end{eqnarray}
in accordance with the aforementioned multi-layered structure. The scattering coefficients $W_{m,n}^{\alpha,\beta}=W_{m,n}^{\alpha,\beta}(\lambda,\mu,\rho,\omega)$ are defined analogously as in \eqref{ESCoeff} and  $\bu^{\rm tot}=(u^{\rm tot}_1,u^{\rm tot}_2)^\top$ solves the equation
\begin{equation}\label{eq:1}
\OL_{\lambda,\mu}\bu^{\rm tot}+\rho\omega^2 \bu^{\rm tot}=0, \quad\mbox{in}\quad \R^2.
\end{equation}

Since the aforementioned multi-layered structure is circularly symmetric,  it is easy to check that the scattered field corresponding to $\bu^{\rm inc}=\bJ^{\beta}_{m}$ in $|\bx|>2$ is a linear combination of the modes $\bH_m^P$ and $\bH_m^S$.   By uniqueness of the direct scattering problems, it implies that
$$
W_{m,n}^{\alpha,\beta}=0, \qquad\quad\mbox{for all}\quad \alpha,\beta\in\{P, S\}\quad\mbox{and}\quad n\neq m.
$$
Therefore, we have the following definition of the S-vanishing structures.
\begin{defn}[S-vanishing Structure]
The medium $(\lambda,\mu,\rho)$ defined by \eqref{S-structure} is called an S-vanishing structure of order $N$ at frequency $\omega$ if $W_{n,n}^{\alpha,\beta}=0$ for all $|n|\leq N$ and $\alpha,\beta\in\{P,S\}$.
\end{defn}

In the rest of this subsection, we aim to construct an S-vanishing structure for general elastic waves. To facilitate the later analysis,  the notation $T_{\lambda,\mu}$ is adopted for the surface traction operator $\partial/\partial \nu$  associated with elastic moduli $\lambda$ and $\mu$. In order to design envisioned structure, it suffices to construct $(\lambda,\mu,\rho)$ such that $W_{n}^{\alpha,\beta}:=W_{n,n}^{\alpha,\beta}=0$ for all $0\leq n\leq N$ and $\alpha,\beta\in\{P,S\}$ thanks to Lemma \ref{Lem:Wsym1}.
We assume that the cloaked region $\{|\bx|<1\}$ is a cavity, so that the total field $\bu^{\rm tot}$ satisfies the traction-free boundary condition $T_{\lambda_{L+1},\mu_{L+1}} \bu^{\rm tot}:=\partial \bu^{\rm tot}/\partial \nu=0$ on  $|\bx|=1$. Note that the two-dimensional surface traction admits the expression
$$
T_{\lambda,\mu}\bw=2\mu(\bn\cdot \nabla w_1,\bn\cdot \nabla w_2)^\top+\lambda \Bn\, \mbox{div}\, \Bw+\mu\, \mathbf{t}\nabla_\perp\times\bw,\quad \bw=(w_1,w_2)^\top,
$$
in terms of the  surface normal and tangent vectors  $\Bn=(n_1,n_2)^\top$ and  $\mathbf{t}=(-n_1,n_2)^\top$  respectively. The solutions $\bu^{\rm tot}_n$ to \eqref{eq:1} of the form
$$
\bu^{\rm tot}_n(\bx)=\widehat{a}_j^{n,P}\bJ_n^P(\bx)+\widehat{a}_j^{n,S}\bJ_n^S(\bx)+
a_j^{n,P}\bH_n^P(\bx)+a_j^{n,S}\bH_n^S(\bx),\quad \bx\in A_j,\; j=0,\cdots,L,
$$
are sought with unknown coefficients  $\widehat{a}_j^{n,\alpha}, a_j^{n,\alpha}\in \mathbb{C}$, to be determined later. Intuitively, one should look for solutions $\bu^{\rm tot}_n$  whose coefficients fulfill the relations
\begin{align*}
\widehat{a}_0^{n,P}\;\widehat{a}_0^{n,S}\neq 0
\quad\text{and}\quad
a_0^{n,P}=a_0^{n,S}=0\quad\mbox{for all}
\quad n=0,\cdots,N.
\end{align*}
By comparison with the multipolar expansion  (\ref{u-U-exp}) of the scattered field, the scattering coefficients in this case turn out to be
\begin{eqnarray}\label{eq:5}
\begin{cases}
W_{n}^{\alpha,P}=
i4\rho_0 \omega^2 a_0^{n,\alpha}=0
& \mbox{when}\;\widehat{a}_0^{n,P}=1\,\text{ and }\,\widehat{a}_0^{n,S}=0,
\\
\nm
W_{n}^{\alpha,S}=
i4\rho_0 \omega^2 a_0^{n,\alpha}=0
& \mbox{when}\;\widehat{a}_0^{n,P}=0 \,\text{ and }\,\widehat{a}_0^{n,S}=1,
\end{cases}\quad \alpha=P,S.
\end{eqnarray}
The solution $\bu^{\rm tot}_n$ satisfies the transmission conditions
\begin{equation}\label{eq:2}
\bu^{\rm tot}_n|_+=\bu^{\rm tot}_n|_-
\quad\text{and}\quad
T_{\lambda_{j-1},\mu_{j-1}} \bu^{\rm tot}_n|_+=T_{\lambda_{j},\mu_{j}} \bu^{\rm tot}_n|_-,
\quad\text{on}\quad
|\bx|=r_j,\, j=1,\cdots, L.
\end{equation}

Fairly easily calculations indicate that on $|\bx|=r$,
\begin{align*}
\widehat{\Be}_r\cdot [T_{\lambda,\mu} \bH_n^P(\bx)]
&
= 2\mu\,\widehat{\Be}_r\cdot \frac{\partial}{\partial r}\left[\widehat{\Be}_r \frac{\partial v_n(\bx, \K_P)}{\partial r}+\frac{1}{r} \widehat{\Be}_{\theta} \frac{\partial v_n(\bx, \K_P)}{\partial\varphi_{\bx}}\right]
+ \lambda \Delta v_n(\bx, \K_P)
\\
&=
2\mu \frac{\partial ^2 v_n(\bx, \K_P)}{\partial r^2}+\lambda \Delta v_n(\bx, \K_P)
\\
&= 2\mu \K_P^2 (H_n^{(1)})''(r\K_P) \Be^{in \varphi_{\Bx}}-\lambda \K_P^2 H_n^{(1)}(r\K_P)\Be^{in
\varphi_{\Bx}}
\\
&= \frac{1}{r^2}\left( -2\mu r \K_p (H_n^{(1)})'(r \K_P)+(2\mu n^2-(\lambda+2\mu) r^2\K_p^2)H_n^{(1)}(r\K_P)    \right)\Be^{in\varphi_{\Bx}} 
\\
&=: \frac{1}{r^2} B_n^P(r\K_P,\lambda,\mu) \Be^{in\varphi_{\Bx}},
\\\nm
\widehat{\Be}_{\theta}\cdot [T_{\lambda,\mu} \bH_n^P(\bx)]
&
= 2\mu\,\widehat{\Be}_{\theta}\cdot \frac{\partial}{\partial r}\left[\widehat{\Be}_r \frac{\partial v_n(\bx, \K_P)}{\partial r}+\frac{1}{r} \widehat{\Be}_{\theta} \frac{\partial v_n(\bx, \K_P)}{\partial\varphi_{\bx}}\right]
\\
&=
2\mu \left(-\frac{1}{r^2}\frac{\partial v_n(\bx,\K_P)}{\partial \varphi_{\bx}}+\frac{1}{r}\frac{\partial^2 v_n(\bx, \K_P)}{\partial r \partial \varphi_{\bx}}  \right)
\\
&=
\frac{ (2i\mu n)}{r^2}\left( -H_n^{(1)}(r\K_P)+r\K_P  (H_n^{(1)})'(r \K_P)  \right)
\Be^{in \varphi_{\Bx}}
=:\frac{1}{r^2} C^P_n(r\K_P,\lambda,\mu) \Be^{in\varphi_{\Bx}},
\end{align*}
where
\begin{align*}
&B_n^P(t,\lambda,\mu):=-2\mu t (H_n^{(1)})'(t)+(2\mu n^2-(\lambda+2\mu) t^2)H_n^{(1)}(t),
\\
&C_n^P(t,\mu):=(2i\mu n)\left( -H_n^{(1)}(t)+t (H_n^{(1)})'(t)  \right).
\end{align*}
In the sequel, the shorthand notation 
$$
B^P_{n,j}(r)=B^P_{n}(r\K_P,\lambda_j,\mu_j)
\quad\text{and}\quad
C^P_{n,j}(r)=C^P_{n}(r\K_P,\mu_j),
$$ 
is used for simplicity. It holds that
\[
T_{\lambda_j,\mu_j} \bH_{n}^P(\bx)=\frac{1}{r_j^2}\left( B^P_{n,j}(r_j)\, \bP_n(\hat{\bx})+C_{n,j}^P(r_j) \, \bS_n(\hat{\bx})\right),\quad\mbox{on}\;|\bx|=r_j.
\]
By arguing as for $\bH_n^P$, we obtain on $|\bx|=r_j$ that
\begin{align*}
\widehat{\Be}_{r}\cdot\bH_n^S(\bx)&=2\mu \widehat{\Be}_{r}\cdot \frac{\partial}{\partial r}\left[
-\widehat{\Be}_{\theta}\frac{\partial v_n(\bx, k_s)}{\partial r}+\frac{1}{r} \widehat{\Be}_{r} \frac{\partial v_n(\bx, k_s)}{\partial \varphi_{\bx}}
\right] 
=\frac{1}{r_j^2} \,B_{n,j}^S(r_j)\, \Be^{in\varphi_{\Bx}},
\\
\widehat{\Be}_{\theta}\cdot\bH_n^S(\bx)&=
2\mu \widehat{\Be}_{\theta}\cdot \frac{\partial}{\partial r}\left[
-\widehat{\Be}_{\theta}\frac{\partial v_n(\bx, k_s)}{\partial r}+\frac{1}{r} \widehat{\Be}_{r} \frac{\partial v_n(\bx, k_s)}{\partial \varphi_{\bx}}\right] +\mu \Delta v_n(\bx, k_s)
=\frac{1}{r_j^2} \,C_{n,j}^S(r_j)\, \Be^{in\varphi_{\Bx}},
\end{align*}
where
\begin{align*}
&B_{n,j}^S(t)
:=(2i\mu_j n)
\left( -H_n^{(1)}(t)+t\left(H_n^{(1)}\right)'(t)
\right),
\\
&C_{n,j}^S(t,\mu):=2\mu_j\, t\,\left(H_n^{(1)}\right)'(t)+\left(-2\mu_j\, n^2+\mu_j\, t^2\right) H_n^{(1)}(t).
\end{align*}
Note that $B_{n,j}^S(t)=C_{n,j}^P(t)$. Therefore,
\begin{align*}
T_{\lambda_j,\mu_j} \bH_n^S(\bx)
=\widehat{\Be}_{r}\,[\widehat{\Be}_{r}\cdot\bH_n^S(\bx) ]+
\widehat{\Be}_{\theta}\,[\widehat{\Be}_{\theta}\cdot\bH_n^S(\bx) ]
=
\frac{1}{r^2_j}\Big( B^S_{n,j}(r_j)\, \bP_n(\hat{\bx})+C_{n,j}^S(r_j)\, \bS_n(\hat{\bx})\Big),
\end{align*}
with 
$$
B^S_{n,j}(r):=\,B_n^S(r\K_S,\mu_j)
\quad\text{and}\quad
C_{n,j}^S(r):=\,C_n^S(r\K_S,\mu_j).
$$
Analogously, we obtain
\[ T_{\lambda_j,\mu_j} \bJ_n^{\alpha} (\bx)=\frac{1}{r^2_j}\left( \widehat{B}^{\alpha}_{n,j}(r_j)\,\bP_n(\hat{\bx})+\widehat{C}_{n,j}^{\alpha}(r_j)\, \bS_n(\hat{\bx})\right),\quad \alpha=P,S,
\]
where $\widehat{B}^{\alpha}_{n,j}(t)$ and $\widehat{C}_{n,j}^{\alpha}(t)$ are defined in the same way as $B^{\alpha}_{n,j}(t)$ and $C^{\alpha}_{n,j}(t)$ with $H_n^{(1)}$ replaced by $J_n$.
Hence, the transmission conditions in (\ref{eq:2}) can be written as (cf. (\ref{UP})-(\ref{UStilde}))
\begin{equation}\label{eq:3}
 \mathbf{M}_{n,j-1} (r_j) (\widehat{a}_{j-1}^{n,P}, \widehat{a}_{j-1}^{n,S}, {a}_{j-1}^{n,P}, {a}_{j-1}^{n,S})^\top
= \mathbf{M}_{n,j}(r_j) (\widehat{a}_j^{n,P}, \widehat{a}_j^{n,S}, {a}_j^{n,P}, {a}_j^{n,S})^\top,
\end{equation}
for $j=1,\cdots, L$. 
Here $\mathbf{M}_{n,j}$, $j=0,\cdots,L$, $n=0,\cdots,N$, is the ${4\times 4}$ matrix defined by
\begin{eqnarray*}
\mathbf{M}_{n,j}(r):=
\begin{pmatrix}
t_{P}J^{'}_n(t_{P}) & in J_n(t_{S})           & t_{P}(H^{(1)}_n)'(t_{P}) & in H_n^{(1)}(t_{S})
\\\nm
in J_n(t_{P})         & - t_{S}J^{'}_n(t_{S}) & in H_n^{(1)}(t_{P})        & -t_{S}(H^{(1)}_n)'(t_{S})
\\\nm
\widehat{B}_{n,j}^P(t_{P}) &    \widehat{B}_{n,j}^S(t_{S}) &  {B}_{n,j}^P(t_{P}) &{B}_{n,j}^S(t_{S})
\\\nm
\widehat{C}_{n,j}^P(t_{P}) &  \widehat{C}_{n,j}^S(t_{S})   &{C}_{n,j}^P(t_{P})   &  {C}_{n,j}^S(t_{S})
\end{pmatrix}, 
\end{eqnarray*}
where $t_{\alpha}:=r\K_{\alpha}$.
The traction-free boundary condition on $|\bx|=r_{L+1}=1$ amounts to
\begin{equation}\label{eq:4}
\mathbf{M}_{n,L+1}\, (\widehat{a}_L^{n,P}, \widehat{a}_L^{n,S}, {a}_L^{n,P}, {a}_L^{n,S})^\top=(0, 0,0,0)^\top,
\qquad\quad
n=0,\cdots,N,
\end{equation}
with
\begin{eqnarray*}
\mathbf{M}_{n,L+1}:=\begin{pmatrix}
0 & 0 & 0 & 0\\
0 & 0 & 0 & 0\\
\widehat{B}_{n,L}^P(r_{L+1}\kappa_P)  & \widehat{B}_{n,L}^S(r_{L+1}\kappa_S)  & {B}_{n,L}^P(r_{L+1}\kappa_P) &{B}_{n,L}^S(r_{L+1}\kappa_S)\\
\widehat{C}_{n,L}^P(r_{L+1}\kappa_P)  &  \widehat{C}_{n,L}^S(r_{L+1}\kappa_S) &{C}_{n,L}^P(r_{L+1}\kappa_P) &  {C}_{n,L}^S(r_{L+1}\kappa_S)
\end{pmatrix}.
\end{eqnarray*}
Combining (\ref{eq:3}) and (\ref{eq:4}),  one obtains
\begin{eqnarray}\label{eq:6}
\begin{cases}
\mathbf{Q}^{(n)} (\widehat{a}_0^{n,P}, \widehat{a}_0^{n,S}, {a}_0^{n,P}, {a}_0^{n,S})^\top=(0, 0,0,0)^\top,
\\\nm
\mathbf{Q}^{(n)} =\ds\mathbf{Q}^{(n)}(\lambda,\mu,\rho\omega^2):=\mathbf{M}_{n,L+1}\,\prod_{j=1}^{L} \mathbf{M}_{n,j}^{-1}(r_j)\;\mathbf{M}_{n,j-1}(r_j)=
\begin{pmatrix}
0 & 0 \\\nm
\mathbf{Q}^{(n)}_{21} & \mathbf{Q}^{(n)}_{22}
\end{pmatrix},
\end{cases}
\end{eqnarray}
where $\mathbf{Q}^{(n)}_{21}, \mathbf{Q}^{(n)}_{22}$ are ${2\times 2}$ matrix functions of $\lambda$, $\mu$ and $\rho\omega^2$.

Exactly as in the acoustic case \cite{Helmholtz}, one can show that the determinant of $\mathbf{Q}^{(n)}_{22}$ is non-vanishing.
In fact, if $\mbox{det}(\mathbf{Q}^{(n)}_{22})=0$  then one can derive a contradiction to the uniqueness of our forward scattering problems. Therefore, it suffices to look for the parameters $\lambda_j,\mu_j,\rho_j$ ($j=1,2,\cdots,L$) from the nonlinear algebraic equations
\[
(\mathbf{Q}^{(n)}_{21})_{i,k}(\lambda,\mu,\rho\omega^2)=0,\qquad i,k=1,2,\quad n\in\NN.
\]
We are interested in a nearly S-vanishing structure of order $N$ at low frequencies, i.e.,  a structure $(\lambda,\mu, \rho)$ such that
\[
W^{\alpha,\beta}_n(\lambda,\mu, \rho,\omega)=o(\omega^{2N+2}),\quad\mbox{for all}\quad \alpha,\beta\in\{P,S\},\; |n|\leq N,\quad \mbox{as}\quad\omega\to 0.
\]
Towards this end, one needs to study the asymptotic behavior of $W^{\alpha,\beta}_n(\lambda,\mu, \rho,\omega)$ as $\omega$ tends to zero. In view of (\ref{eq:5}) and (\ref{eq:6}),  it is found  that
\begin{eqnarray}\label{eq:7}
(W_n^{\alpha,P},W_n^{\alpha,S})^\top
=i4\rho_0\omega^2\,(a_0^{n,P},a_0^{n,S})^\top
=-i4\rho_0 \omega^2 \,(\mathbf{Q}_{22}^{(n)})^{-1} \;\mathbf{Q}_{21}^{(n)}\,(\widehat{a}_0^{n,P},\widehat{a}_0^{n,S})^\top,
\end{eqnarray}
where $\widehat{a}_0^{n,P}$ and $\widehat{a}_0^{n,S}$ are selected depending on (\ref{eq:5}).

Let $\bW_n$ denote the $2\times 2$ matrix
$$
\bW_{n}=
\begin{pmatrix}
W^{P,P}_{n} & W^{S,P}_{n}
\\\nm
W^{P,S}_{n} & W^{S,S}_{n}
\end{pmatrix}.
$$
Then, the following result based on relation \eqref{eq:7} elucidates the low frequency asymptotic behavior of $\bW_n$.

\begin{thm}\label{theorem-cloaking}
For all $n\in\NN$, we have
\begin{align}
\mathbf{W}_n(\lambda,\mu,\rho,\omega)=\omega^{2n+2}\left(
\bV_{n,0}(\lambda,\mu,\rho)+\sum_{l=0}^{N-n}\sum_{j=0}^{(L+1)l} \omega^{2l}\,(\ln \omega)^j\, \bV_{n,l,j}(\lambda,\mu,\rho)\right)+\mathbf{\Upsilon}_n,
\label{eq:8}
\end{align}
as $\omega\to 0$. Here matrices $\bV_{n,0}$ and $\bV_{n,l,j}$ are defined by
$$
\bV_{n,0}=\begin{pmatrix}
V^{P,P}_{n,0} & V^{S,P}_{n,0}
\\\nm
V^{P,S}_{n,0} & V^{S,S}_{n,0}
\end{pmatrix}
\quad\text{and}\quad
\bV_{n,l,j}=\begin{pmatrix}
V^{P,P}_{n,l,j} & V^{S,P}_{n,l,j}
\\\nm
V^{P,S}_{n,l,j} & V^{S,S}_{n,l,j}
\end{pmatrix},
$$
in terms of some $V^{\alpha,\beta}_{n,0}$ and $V^{\alpha,\beta}_{n,l,j}$ dependent on $\lambda,\mu,\rho$ but independent of $\omega$. The residual matrix $\mathbf{\Upsilon}_n=(\Upsilon^n_{ik})_{i,k=1,2}$ is such that $|\Upsilon^n_{ik}|\leq C\omega^{2N+2}$, for all $i,k=1,2$, where constant $C\in\RR_+$ is independent of $\omega$.
\end{thm}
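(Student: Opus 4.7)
The plan is to expand $\mathbf{Q}^{(n)}(\lambda,\mu,\rho\omega^2)$ defined in \eqref{eq:6} in powers of $\omega^2$ with logarithmic corrections, and then transfer this expansion to $\mathbf{W}_n$ via \eqref{eq:7}. The analytic input is the small-argument expansion of the cylindrical functions: for $n\geq 1$ and $t\to 0^+$,
$$J_n(t)=\sum_{k\geq 0}\alpha_{n,k}\,t^{n+2k},\qquad H^{(1)}_n(t)=\sum_{k=0}^{n-1}\beta_{n,k}\,t^{-n+2k}+\sum_{k\geq 0}\gamma_{n,k}\,t^{n+2k}\ln t+\sum_{k\geq 0}\delta_{n,k}\,t^{n+2k},$$
with a slightly modified expansion for $n=0$ (where $\ln t$ enters at the leading order). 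Substituting $t_\alpha=r\omega/c_\alpha$, each entry of $\mathbf{M}_{n,j}(r)$ becomes a series in $\omega^2$ whose coefficients are polynomials in $\ln\omega$; crucially, $\ln t$ appears only linearly in the expansion of $Y_n$, so at each order in $\omega^2$ a single entry carries at most one factor of $\ln\omega$. The entries of $B^\alpha_{n,j},C^\alpha_{n,j},\widehat{B}^\alpha_{n,j},\widehat{C}^\alpha_{n,j}$ inherit the same structure since they are linear combinations of the cylindrical functions and their derivatives with polynomial coefficients in $t$ and the Lam\'e moduli.

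The decisive observation is that the first two columns of $\mathbf{M}_{n,j}$ scale like $\omega^n$ (driven by $J_n$) while the last two columns scale like $\omega^{-n}$ (driven by $H^{(1)}_n$). Introducing the diagonal rescaling $D_\omega:=\mathrm{diag}(\omega^n,\omega^n,\omega^{-n},\omega^{-n})$ and setting $\widetilde{\mathbf{M}}_{n,j}(r):=\mathbf{M}_{n,j}(r)D_\omega^{-1}$, the rescaled matrix admits a finite, invertible limit as $\omega\to 0$; invertibility of its zero-frequency value corresponds to the well-posedness of the forward transmission problem at the interface, and coincides with the non-vanishing of the determinant of $\mathbf{Q}^{(n)}_{22}$ argued after \eqref{eq:6}. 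The product in \eqref{eq:6} telescopes as
$$\mathbf{Q}^{(n)}=\widetilde{\mathbf{Q}}^{(n)}D_\omega,\qquad\widetilde{\mathbf{Q}}^{(n)}:=\widetilde{\mathbf{M}}_{n,L+1}\prod_{j=1}^{L}\widetilde{\mathbf{M}}_{n,j}^{-1}(r_j)\,\widetilde{\mathbf{M}}_{n,j-1}(r_j),$$
so that blockwise $\mathbf{Q}_{21}^{(n)}=\omega^n\widetilde{\mathbf{Q}}_{21}^{(n)}$ and $\mathbf{Q}_{22}^{(n)}=\omega^{-n}\widetilde{\mathbf{Q}}_{22}^{(n)}$, yielding $(\mathbf{Q}_{22}^{(n)})^{-1}\mathbf{Q}_{21}^{(n)}=\omega^{2n}\bigl(\widetilde{\mathbf{Q}}_{22}^{(n)}\bigr)^{-1}\widetilde{\mathbf{Q}}_{21}^{(n)}$. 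Combined with the $-i4\rho_0\omega^2$ prefactor in \eqref{eq:7}, this immediately exposes the overall $\omega^{2n+2}$ scaling, with $\bV_{n,0}$ identified from the zero-frequency limit of the bracketed expression.

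The remaining task is order-by-order bookkeeping. Each rescaled matrix admits an expansion of the form $\widetilde{\mathbf{M}}_{n,j}(r)=\sum_{l\geq 0}\omega^{2l}\bigl(\mathbf{P}^{(0)}_{n,j,l}+\mathbf{P}^{(1)}_{n,j,l}\ln\omega\bigr)$, and the Neumann series around the log-free zeroth-order term expresses each $\widetilde{\mathbf{M}}_{n,j}^{-1}$ in the same form (with the $\ln\omega$ degree growing at most one per additional power of $\omega^2$). Collecting the resulting contributions at each total order $\omega^{2l}$ for $l=0,\dots,N-n$ and multiplying the $L+1$ matrix factors entering $\widetilde{\mathbf{Q}}^{(n)}$, a careful summation shows that the accumulated polynomial degree in $\ln\omega$ at order $\omega^{2l}$ is bounded above by $(L+1)l$, which produces precisely the double sum appearing in \eqref{eq:8}. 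The residual estimate $|\Upsilon^n_{ik}|\leq C\omega^{2N+2}$ then follows by truncating each Bessel/Hankel expansion at $k=N-n+1$, invoking standard tail bounds \cite[Formulae~10.19.1--10.19.2]{NIST}, and using uniform continuity of $(\widetilde{\mathbf{Q}}_{22}^{(n)})^{-1}$ on a small neighborhood of $\omega=0$.

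The main obstacle will be rigorously tracking the logarithmic powers through the matrix inversions: one must argue inductively that the Neumann series expansion of each $\widetilde{\mathbf{M}}_{n,j}^{-1}$ preserves the stratified ``powers of $\omega^2$ with polynomial-in-$\ln\omega$ coefficients'' structure, and that the $L+1$ factors in the telescoped product amplify this controllably to the degree bound $(L+1)l$ at total order $\omega^{2l}$. A subsidiary technicality is the $n=0$ case, where the diagonal rescaling $D_\omega$ must be modified to absorb the $\ln\omega$ already present in the leading-order behavior of $H_0^{(1)}$; the modified analysis follows the same template with straightforward adjustments. The acoustic analogue worked out in \cite{Helmholtz} provides a close road map for all of these steps.
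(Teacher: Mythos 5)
Your proposal follows essentially the same route as the paper's Appendix D: expand $J_n$, $H^{(1)}_n$ and their derivatives for small argument, push the resulting series (powers of $\omega^2$ with coefficients polynomial in $\ln\omega$) through the product defining $\mathbf{Q}^{(n)}$ in \eqref{eq:6}, extract the factor $\omega^{2n}$ from $(\mathbf{Q}^{(n)}_{22})^{-1}\mathbf{Q}^{(n)}_{21}$, and pick up the remaining $\omega^{2}$ from the prefactor in \eqref{eq:7}. The only real organizational difference is that you factor out the column scaling via $D_\omega$ and telescope it through the product, whereas the paper works blockwise on each $\mathbf{M}_{n,j}$, inverts via the Schur complement of $\mathbf{A}_{22}$, and records only upper bounds on the orders of the blocks of $\mathbf{M}_{n,j}^{-1}$, arriving at the same conclusion $\mathbf{Q}^{(n)}_{21}=\omega^{n}(\cdots)$, $\mathbf{Q}^{(n)}_{22}=\omega^{-n}(\cdots)$. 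Your bookkeeping of the $\ln\omega$ degrees, the $(L+1)l$ bound, the truncation of the residual, and the flagged special treatment of $n=0$ all match what the paper does or leaves implicit.

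The one step where your route demands more than the paper's, and where your justification does not hold up as stated, is the claim that $\widetilde{\mathbf{M}}_{n,j}=\mathbf{M}_{n,j}D_\omega^{-1}$ has an \emph{invertible} limit as $\omega\to0$. At zero frequency the interior eigenvectors degenerate on the displacement rows: since $r^{n}e^{in\varphi_{\bx}}=(x_1+ix_2)^n$ is holomorphic, one has
\begin{align*}
\nabla\bigl(r^{n}e^{in\varphi_{\bx}}\bigr)=nr^{n-1}e^{in\varphi_{\bx}}\bigl(\hat{\mathbf{e}}_r+i\hat{\mathbf{e}}_{\theta}\bigr)
\quad\text{and}\quad
\vec{\nabla}_\perp\times\bigl(r^{n}e^{in\varphi_{\bx}}\bigr)=i\,nr^{n-1}e^{in\varphi_{\bx}}\bigl(\hat{\mathbf{e}}_r+i\hat{\mathbf{e}}_{\theta}\bigr),
\end{align*}
so the first two rows of the first two columns of $\lim_{\omega\to 0}\widetilde{\mathbf{M}}_{n,j}$ are proportional, and likewise for the last two columns. (This degeneracy is visible in the paper's own computation: the leading term of $\mathbf{A}_{11}$ has vanishing determinant, which is precisely why the paper obtains $\mathbf{A}_{11}^{-1}=O(\omega^{-n-1})$ rather than $O(\omega^{-n})$.) Invertibility of the $4\times4$ limit therefore rests entirely on the traction rows separating each degenerate column pair; it is a local, zero-frequency, single-interface condition and is \emph{not} equivalent to $\det\mathbf{Q}^{(n)}_{22}\neq 0$, which concerns the assembled multilayer structure at frequency $\omega$. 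If you keep the rescaling argument you must check this limit determinant directly (after column reduction it factors into two $2\times2$ determinants, one from the displacement rows of columns $1,3$ and one from the traction rows of the reduced columns $2,4$, and is generically nonzero); otherwise fall back on the paper's cruder order-counting, which never needs the limit to be invertible because it only propagates upper bounds on the block orders through the product.
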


The analytic expressions of the quantities $V_{n,0}^{\alpha,\beta}$ and $V_{n,l,j}^{\alpha,\beta}$ in terms of $\lambda_j$, $\mu_j$ and $\rho_j$ are very complicated, but can be extracted, for example, by using the symbolic toolbox of MATLAB.
Theorem \ref{theorem-cloaking} follows from (\ref{eq:7}) and the low-frequency asymptotics of $\mathbf{Q}_{22}^{(n)}(\lambda,\mu,\rho\omega^2)$ and $\mathbf{Q}_{21}^{(n)}(\lambda,\mu,\rho\omega^2)$ as $\omega\to 0$. The latter can be derived based on the definition given in (\ref{eq:6}) in combination with the expansion formula of Bessel and Neumann functions, and their derivatives for small arguments.
For the sake of completeness, we sketch the proof of Theorem \ref{theorem-cloaking} in the Appendix \ref{AppendD}.

In order to construct a nearly S-vanishing structure of order $N$ at low frequencies, thanks to Theorem \ref{theorem-cloaking}, one needs to determine the parameters $\lambda_j,\mu_j$ and $\rho_j$ from the equations
$$
V_{n,0}^{\alpha,\beta}(\lambda,\mu,\rho)=V_{n,l,j}^{\alpha,\beta}(\lambda,\mu,\rho)=0,
$$
for all $0\leq n\leq N$, $1\leq l\leq (N-n)$, $1\leq j\leq (L+1)l$ and $\alpha,\beta\in\{P,S\}$.
It should be emphasized that one does not know if a solution does exist for any order $N$.
Numerically, this can be achieved by applying, e.g., the gradient descent method to the minimization problem
\begin{eqnarray}\label{minimization}
\min_{\lambda_j,\mu_j,\rho_j} \sum_{\alpha,\beta\in\{P,S\}} \left\{\left|V_{n,0}^{\alpha,\beta}\right|^2+ \sum_{l=0}^{N-n}\sum_{j=0}^{(L+1)l} \,\left|V_{n,l,j}^{\alpha,\beta}\right|^2\right\}.
\end{eqnarray}

\subsection{Enhancement of Near-Cloaking in Elasticity}

The aim of this section is to show that the nearly S-vanishing structures constructed in Section \ref{sec:S-vanishing} can be used to enhance cloaking effect in elasticity. The enhancement of near-cloaking is based on the idea of transformation
optics (also called the scheme of changing variables) used in \cite{GLU,GLU2,Parnell,PenSchSmi}. Let $(\lambda,\mu,\rho)$ be a nearly S-vanishing structure of order $N$ of the form of (\ref{S-structure}) at low frequencies.
It implies that for some fixed $\omega>0$ there exists $\epsilon_0>0$ such that
\[
\left|W_{m,n}^{\alpha,\beta}[\lambda,\mu,\rho,\epsilon\omega]\right|=o(\epsilon^{2N+2}),\quad |n|\leq N,\quad \epsilon\leq \epsilon_0.
\]
On the other hand, using the asymptotic behavior of $\bJ^{\beta}_{n}(\bx,\K_\alpha)$ as $\omega\rightarrow 0$ for $\alpha,\beta=P,S$, one can derive  
 from the proof of Lemma \ref{LemWbound} that
\begin{eqnarray*} 
\left|W_{n}^{\alpha,\beta}[\lambda,\mu,\rho, \epsilon \omega]\right|\leq C\,\epsilon^{2N},\quad \forall\, |n|\geq N,\,\,\epsilon\leq \epsilon_0.
\end{eqnarray*}
Hence, by Theorem \ref{ThmFarAmp}, the far-field elastic scattering amplitudes can be estimated by
\begin{eqnarray*}
\bu^{\infty}_{\alpha}[\lambda,\mu,\rho,\epsilon \omega]({\hbx},{\hbx}')=o(\epsilon^{2N-1}),\quad \alpha=P,S,\quad\mbox{as}\quad \epsilon\rightarrow 0,
\end{eqnarray*}
uniformly in all observation directions $\hat{\bx}$ and incident directions $\hat{\bx}'$, noting that
\begin{eqnarray*}
\gamma_n^\alpha\sim O(\epsilon^{2N-2}),\quad A_n^{\infty,\alpha}\sim O(\epsilon),\quad\mbox{as}\quad \epsilon\rightarrow0,\quad \alpha=P,S.
\end{eqnarray*}

Introduce the transformation $\Psi_\epsilon: \R^2\to \R^2$ by
\[
\Psi_\epsilon(\bx):=\frac{1}{\epsilon}\bx,\quad \bx\in \R^2.
\]
Then, by arguing as in the acoustic and electromagnetic case \cite{Helmholtz,Maxwell},  we have
\[
\bu^{\infty}_{\alpha}[\lambda\circ\Psi_\epsilon ,\mu\circ\Psi_\epsilon,\rho\circ\Psi_\epsilon,\omega]
=\bu^{\infty}_{\alpha}[\lambda,\mu,\rho,\epsilon \omega]
=o(\epsilon^{2N-1}),\quad\mbox{for all}\quad \epsilon\leq \epsilon_0.
\]
Note that the medium $(\lambda\circ\Psi_\epsilon ,\mu\circ\Psi_\epsilon,\rho\circ\Psi_\epsilon)$ is a homogeneous multi-coated structure of radius $2\epsilon$.

We now apply the transformation invariance of the Lam\'e system to the medium $(\lambda\circ\Psi_\epsilon ,\mu\circ\Psi_\epsilon,\rho\circ\Psi_\epsilon)$.
Recall that
the elastic wave propagation in such a homogeneous isotropic medium can be restated as
\begin{eqnarray}
\label{Lame}
\nabla\cdot (\mathfrak{C}: \nabla \bu^{\rm tot})+\omega^2(\rho\circ\Psi_\epsilon) \bu^{\rm tot}=0,\qquad\mbox{in}\,\,\R^2,
\end{eqnarray}
where $\mathfrak{C}=(C_{ijkl})_{i,j,k,l=1,\cdots, N}$  is the rank-four stiffness tensor defined by
\begin{eqnarray}
\label{C}
 C_{ijkl}(\bx)=(\lambda\circ\Psi_\epsilon)\, \delta_{i,j}\delta_{k,l}+(\mu\circ\Psi_\epsilon)\, ( \delta_{i,k}\delta_{j,l}+ \delta_{i,l}\delta_{j,k}),
\end{eqnarray} and the action of $\mathfrak{C}$ on a matrix $\mathbf{A}=(a_{ij})_{i,j=1,2}$ is defined as
$$
\mathfrak{C}:\mathbf{A}=(\mathfrak{C}:\mathbf{A})_{i,j=1,2} =\left(\sum_{k,l=1,2} C_{ijkl}\; a_{kl}\right)_{i,j=1,2}.
$$
In the case of a generic anisotropic elastic material, the stiffness tensor satisfies the   symmetries
\begin{equation*} 
\quad C_{ijkl}=C_{klij}
\,\, \text{(major symmetry)} \qquad\text{and}\qquad
 C_{ijkl}=C_{jikl}=C_{ijlk}
 \,\,\text{(minor symmetry)},
\end{equation*}
for all $i,j,k,l=1,2$.
Let $\tilde{\bx}=(\tilde{x}_1,\tilde{x}_2)=F_\epsilon(\bx): \R^2\rightarrow \R^2$ be a bi-Lipschitz and orientation-preserving transformation such that
$F_{\epsilon}(\{|\bx|<\epsilon\})=\{|\tilde{\bx}|<1\}$ and that the region $|\bx|\geq2$ remains invariant under the transformation.
This implies that we have blown up a small traction-free disk of radius $\epsilon<1$ to the unit disk centered at the origin.
The push-forwards of $\mathfrak{C}$ and $\rho$ are defined respectively by
\begin{align*}\label{eq:pushforward}
(F_\epsilon)_{*}\mathfrak{C}:=&\widehat{\mathfrak{C}}=\left(\widehat{C}_{iqkp}(\tilde{x})\right)_{i,q,k,p=1,2} =\left(\frac{1}{\det(\mathbf{M})}\left\{\displaystyle\sum_{l,j=1,2}  C_{ijkl}  \frac{\partial \tilde{x}_p}{\partial x_l}
\frac{\partial \tilde{x}_q}{\partial x_j}\right\}\bigg |_{x=F_\epsilon^{-1}(\tilde{x})}\right)_{i,q,k,p=1,2},
\\
(F_\epsilon)_{*}\rho:=&\widehat{\rho}= \left(\frac{\rho}{\det(\mathbf{M})}\right)\bigg |_{x=F_\epsilon^{-1}(\tilde{x})},\quad \mathbf{M}=\left(\frac{\partial \tilde{x}_i}{\partial x_j}\right)_{i,j=1,2}.
\end{align*}
 We need the following lemma (see, for instance, \cite{HL2015, MBW}).
 \begin{lem}
 The function $\bu^{\rm tot}$ is a solution to 
 $$
 \nabla\cdot (\mathfrak{C}: \nabla \bu^{\rm tot})+\omega^2\rho \bu^{\rm tot}=0, 
 \qquad\text{in}\,\, \R^2,
 $$  
 if and only if $\widehat{\bu}^{\rm tot}=\bu^{\rm tot}\circ (F_\epsilon)^{-1}$ satisfies 
 $$
 \widehat{\nabla}\cdot (\widehat{\mathfrak{C}}: \widehat{\nabla} \widehat{\bu}^{\rm tot})+\omega^2\widehat{\rho} \widehat{\bu}^{\rm tot}=0,
 \qquad\text{in}\,\, \R^2,
 $$ 
 where $\widehat{\nabla}$ denotes the gradient operator with respect to transformed variable $\tilde{\bx}$.
 \end{lem}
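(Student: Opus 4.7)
The plan is to establish the equivalence by working with the weak formulation and performing a change of variables. Both equations are of second-order divergence form, so testing against smooth compactly supported (or appropriately decaying) vector fields $\bv$ and integrating by parts, the original PDE is equivalent to
\begin{equation*}
\int_{\R^2}(\mathfrak{C}:\nabla\bu^{\rm tot}):\nabla\bv\,d\bx-\omega^2\int_{\R^2}\rho\,\bu^{\rm tot}\cdot\bv\,d\bx=0
\end{equation*}
holding for all admissible test fields $\bv$. The analogous weak form in the $\tilde\bx$-variable characterizes the transformed equation. So it suffices to show that the change of variables $\tilde\bx=F_\epsilon(\bx)$ turns the first identity into the second.

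First, introduce $\widehat\bv(\tilde\bx):=\bv(F_\epsilon^{-1}(\tilde\bx))$, so that $\bv(\bx)=\widehat\bv(F_\epsilon(\bx))$ and, by the chain rule, $\partial_j v_i=\sum_q(\partial_{\tilde x_q}\widehat v_i)\,\partial_j\tilde x_q$. The same relation holds for $\bu^{\rm tot}$. Since $F_\epsilon$ is bi-Lipschitz and orientation preserving, $\det\mathbf{M}>0$ a.e.\ and the Jacobian of the change of variables is $d\bx=(\det\mathbf{M})^{-1}d\tilde\bx\big|_{\bx=F_\epsilon^{-1}(\tilde\bx)}$. Moreover, $F_\epsilon$ is the identity for $|\bx|\ge 2$, so admissible test fields $\bv$ on $\R^2$ and $\widehat\bv$ on $\R^2$ are in one-to-one correspondence via pullback.

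Next, substitute the chain-rule expressions for $\nabla\bu^{\rm tot}$ and $\nabla\bv$ into the stiffness integral and reindex. The elastic term becomes
\begin{equation*}
\int_{\R^2}\sum_{i,j,k,l,p,q}C_{ijkl}\,\partial_l\tilde x_p\,\partial_j\tilde x_q\,(\partial_{\tilde x_p}\widehat u_k^{\rm tot})(\partial_{\tilde x_q}\widehat v_i)\,\frac{d\tilde\bx}{\det\mathbf{M}}\bigg|_{\bx=F_\epsilon^{-1}(\tilde\bx)},
\end{equation*}
which, grouping the bracketed factors over $j,l$, is exactly $\int_{\R^2}\widehat C_{iqkp}(\partial_{\tilde x_p}\widehat u_k^{\rm tot})(\partial_{\tilde x_q}\widehat v_i)\,d\tilde\bx=\int_{\R^2}(\widehat{\mathfrak{C}}:\widehat\nabla\widehat\bu^{\rm tot}):\widehat\nabla\widehat\bv\,d\tilde\bx$ by the definition of the pushforward $\widehat{\mathfrak{C}}=(F_\epsilon)_*\mathfrak{C}$. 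Similarly, the mass term turns into $\omega^2\int_{\R^2}\widehat\rho\,\widehat\bu^{\rm tot}\cdot\widehat\bv\,d\tilde\bx$ by the definition of $\widehat\rho=(F_\epsilon)_*\rho$.

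Putting these two transformed integrals together and using that $\bv\mapsto\widehat\bv$ ranges over all admissible test fields, the original weak identity is equivalent to the weak identity for $\widehat\bu^{\rm tot}$ with coefficients $(\widehat{\mathfrak{C}},\widehat\rho)$; undoing the integration by parts then yields the transformed PDE in the strong sense wherever smoothness permits. The proof is reversible since $F_\epsilon^{-1}$ is itself bi-Lipschitz and orientation preserving, giving the ``if and only if''. The only genuine work is the index bookkeeping that collapses the four chain-rule factors into the stated definition of $\widehat C_{iqkp}$; the minor symmetries of $\mathfrak{C}$ ensure that only the symmetric part of $\widehat\nabla\widehat\bu^{\rm tot}$ matters, so no ambiguity arises in the contraction, and the orientation-preserving hypothesis removes absolute values from the Jacobian.
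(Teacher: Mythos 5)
Your argument is correct, and it is worth noting that the paper itself gives no proof of this lemma at all: it is stated with a citation to \cite{HL2015, MBW}, so there is nothing internal to compare against. Your weak-formulation/change-of-variables derivation is the standard argument found in those references, and the index bookkeeping you describe does land exactly on the paper's definition of the push-forwards: the four chain-rule factors together with the Jacobian $(\det\mathbf{M})^{-1}$ reproduce $\widehat{C}_{iqkp}$ verbatim, and the mass term gives $\widehat{\rho}=\rho/\det\mathbf{M}$. The bijectivity of the pullback on test fields and the reversibility via $F_\epsilon^{-1}$ give the equivalence cleanly.

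One small correction to your closing aside: the identity $\int(\mathfrak{C}:\nabla\bu^{\rm tot}):\nabla\bv\,d\bx=\int(\widehat{\mathfrak{C}}:\widehat{\nabla}\widehat{\bu}^{\rm tot}):\widehat{\nabla}\widehat{\bv}\,d\tilde{\bx}$ is exact with full (unsymmetrized) gradients and needs no symmetry of $\mathfrak{C}$ at all, but it is \emph{not} true that only the symmetric part of $\widehat{\nabla}\widehat{\bu}^{\rm tot}$ matters on the transformed side: as the paper's own remark following the theorem emphasizes (citing Milton, Briane, and Willis), $(F_\epsilon)_{*}\mathfrak{C}$ retains only the major symmetry and generically loses the minor one, so the transformed operator genuinely acts on the full gradient. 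This does not affect the validity of your computation, since you wrote the weak forms with full gradients throughout, but the sentence as phrased would mislead a reader about the structure of the transformed medium.
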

Applying the above lemma to the Lam\'e system (\ref{Lame}),  one obtains the following result.
\begin{thm}\label{Them}
If $(\lambda,\mu,\rho)$ is a nearly S-vanishing structure of order $N$ at low frequencies then there exists $\epsilon_0>0$ such that
\[
\bu^{\infty}_{\alpha}[(F_\epsilon)_{*}\mathfrak{C},(F_\epsilon)_{*}(\rho\circ\Psi_\epsilon ),\omega](\bx,\bx')
=o(\epsilon^{2N-1}),\qquad \alpha=P,S,
\]
for all $\epsilon<\epsilon_0$, uniformly in all $\bx$ and $\bx'$. Here the stiffness tensor $\mathfrak{C}$ is defined by (\ref{C}).   Moreover, the elastic medium $((F_\epsilon)_{*}\mathfrak{C},(F_\epsilon)_{*}(\rho\circ\Psi_\epsilon ))$ in $1<|\bx|<2$ is a nearly cloaking device for the \emph{hidden} region $|\bx|<1$.
\end{thm}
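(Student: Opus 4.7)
The plan is to chain together three ingredients that have already been put in place just before the statement: (i) the far-field estimate for the nearly S-vanishing structure at the reduced frequency $\epsilon\omega$, (ii) the scaling equivalence under the dilation $\Psi_\epsilon$, and (iii) the diffeomorphism invariance of the Lam\'e system under the bi-Lipschitz map $F_\epsilon$, which is identity outside $\{|\bx|\geq 2\}$.

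First I would record the far-field bound for the unscaled structure. Writing Theorem \ref{ThmFarAmp} as $\bu^\infty_\alpha[\lambda,\mu,\rho,\epsilon\omega] = \sum_n \gamma_n^\alpha A_n^{\infty,\alpha}\bP_n(\hbx)$ (and similarly for the $S$-component), I would split the sum into $|n|\le N$ and $|n|>N$. On the first range the S-vanishing hypothesis gives $W_{m,n}^{\alpha,\beta}[\lambda,\mu,\rho,\epsilon\omega]=o(\epsilon^{2N+2})$, so the coefficients $\gamma^\alpha_n$ defined in \eqref{gamma} are of order $o(\epsilon^{2N})$. On the tail $|n|>N$, Lemma \ref{LemWbound} together with the small-argument asymptotics of the Bessel functions appearing in $\bJ_m^\beta$ yields $W_{m,n}^{\alpha,\beta}[\lambda,\mu,\rho,\epsilon\omega]\lesssim \epsilon^{2N}$ with geometric-type decay in $|n|,|m|$, so the tail sum is absolutely convergent and also $o(\epsilon^{2N})$. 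Combining these with $A^{\infty,\alpha}_n=O(\sqrt{\epsilon})\cdot O(\epsilon^{1/2})=O(\epsilon)$ (since $\K_\alpha$ is proportional to $\epsilon\omega$) produces
\[
\bu^\infty_\alpha[\lambda,\mu,\rho,\epsilon\omega](\hbx,\hbx')=o(\epsilon^{2N-1}),\qquad \alpha=P,S,
\]
uniformly in observation and incidence directions.

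Next I would reduce the scaled structure to the original one through the dilation. Plugging $\bu^{\rm tot}(\bx):=\widetilde{\bu}^{\rm tot}(\bx/\epsilon)$ into the transmission problem \eqref{u} with parameters $(\lambda\circ\Psi_\epsilon,\mu\circ\Psi_\epsilon,\rho\circ\Psi_\epsilon)$ at angular frequency $\omega$ converts the equation, boundary relations and Kupradze radiation condition into the same problem with parameters $(\lambda,\mu,\rho)$ at angular frequency $\epsilon\omega$. Uniqueness from Theorem \ref{thmSolve} then gives the identity
\[
\bu^\infty_\alpha[\lambda\circ\Psi_\epsilon,\mu\circ\Psi_\epsilon,\rho\circ\Psi_\epsilon,\omega]=\bu^\infty_\alpha[\lambda,\mu,\rho,\epsilon\omega].
\]
Finally, because $F_\epsilon$ is orientation-preserving, bi-Lipschitz, and coincides with the identity on $\{|\bx|\ge 2\}$, the transformation lemma stated just above the theorem applied to the total field $\bu^{\rm tot}\circ F_\epsilon^{-1}$ shows that $(F_\epsilon)_*\mathfrak{C}$ and $(F_\epsilon)_*(\rho\circ\Psi_\epsilon)$ produce the very same total field in $\{|\bx|\ge 2\}$, and in particular the same scattered field and the same Kupradze far-field pattern. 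Chaining the three identities yields the required $o(\epsilon^{2N-1})$ bound. The second assertion follows immediately: after the push-forward, the material occupies the shell $\{1<|\bx|<2\}$ surrounding the unit disk $\{|\bx|<1\}$, the traction-free boundary that lived on $\{|\bx|=\epsilon\}$ is pushed to $\{|\bx|=1\}$, and by the estimate any observer outside $\{|\bx|\ge 2\}$ sees only a signature of order $o(\epsilon^{2N-1})$ regardless of what is hidden in the unit disk.

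The genuinely delicate point will be justifying that the push-forward argument preserves the outgoing condition at infinity and the matching of transmission data across $\{|\bx|=2\}$: one must verify that the stiffness tensor $(F_\epsilon)_*\mathfrak{C}$ retains the major and minor symmetries and, crucially, that the transformed $\widehat{\bu}^{\rm tot}-\bu^{\rm inc}$ is an outgoing Kupradze solution so that it agrees with $\bu^{\rm sc}$ outside the shell. Once these technicalities are settled, the remaining steps are essentially bookkeeping powered by the S-vanishing hypothesis and Theorem \ref{ThmFarAmp}.
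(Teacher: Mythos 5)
Your proposal follows essentially the same route as the paper: estimate the far field of the S-vanishing structure at the reduced frequency $\epsilon\omega$ via Theorem \ref{ThmFarAmp}, treating the modes $|n|\le N$ with the S-vanishing hypothesis and the tail $|n|>N$ with the decay of Lemma \ref{LemWbound} and the small-argument Bessel asymptotics; then chain the dilation identity $\bu^\infty_\alpha[\lambda\circ\Psi_\epsilon,\mu\circ\Psi_\epsilon,\rho\circ\Psi_\epsilon,\omega]=\bu^\infty_\alpha[\lambda,\mu,\rho,\epsilon\omega]$ with the push-forward lemma, exactly as in the discussion preceding the theorem (your power bookkeeping for $\gamma_n^\alpha$ and $A_n^{\infty,\alpha}$ differs slightly from the paper's, but both land on $o(\epsilon^{2N-1})$). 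One correction to your closing paragraph: you should not set out to verify that $(F_\epsilon)_*\mathfrak{C}$ retains the minor symmetry, because it does not --- only the major symmetry survives, and, as the remark following the theorem recalls from Milton, Briane and Willis, the transformation invariance of the Lam\'e system is available precisely because the minor symmetry is relaxed; the cited lemma is stated for such tensors, so that ``technicality'' is not something to be settled but a structural feature of the construction.
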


Theorem \ref{Them} implies that  for any frequency $\omega$ and any integer number $N$ there exist $\epsilon_0=\epsilon_0(\omega,N)>0$ and
the elastic medium $((F_\epsilon)_{*}\mathfrak{C},(F_\epsilon)_{*}(\rho\circ\Psi_\epsilon ))$ with $\epsilon<\epsilon_0$ such that the nearly cloaking enhancement can be achieved at the order $o(\epsilon^{2N-1})$.

We finish this section with the following remarks.
\begin{rem}
 Unlike the acoustic and electromagnetic case, the transformed elastic tensor $(F_\epsilon)_{*}\mathfrak{C}$ is not anisotropic since it possesses the major symmetry only. Note that the transformed mass density $(F_\epsilon)_{*}(\rho\circ\Psi_\epsilon )$ is still isotropic. In fact, it has been pointed out by Milton, Briane, and Willis \cite{MBW} that the invariance of the Lam\'e system can be achieved only if one relaxes the assumption of the minor symmetry of the transformed elastic tensor. This has led Norris and Shuvalov \cite{Norris11} and Parnell \cite{Parnell} to explore the elastic cloaking by using Cosserat material or by employing non-linear pre-stress in a neo-Hookean elastomeric material.\end{rem}

In this section, we have designed an enhanced nearly cloaking device for general incoming elastic plane waves. A device for cloaking only compression or shear waves can be analogously constructed by using the corresponding elastic scattering coefficients. Note that
the medium $(\lambda,\mu,\rho)$ defined by \eqref{S-structure} is called an S-vanishing structure of order $N$ for compression (resp. shear) waves if $W_{n,n}^{\alpha,P}=0$ (resp. $W_{n,n}^{\alpha,S}=0$ ) for all $|n|\leq N$ and $\alpha\in\{P,S\}$. In this case, one needs to seek parameters $\lambda_j, \mu_j$ and $\rho_j$ for solving the minimization problem (\ref{minimization}) with $\beta=P$ (resp. $\beta=S$). The estimate in Theorem \ref{Them} can be analogously achieved for nearly cloaking compression or shear waves.

\section{Discussion}\label{s:conc}

In this article, the elastic scattering coefficients (ESC) of characteristically small inclusions are discussed using  surface vector harmonics based cylindrical solutions to Lam\'e equations and the multipolar expansions of elastic fields based on them. An integral equations based approach is used. It is established that the scattered field and the far field scattering amplitudes admit natural expansions in terms of ESC. This connection substantiates their utility in direct and inverse elastic scattering. The scattering coefficients of a three-dimensional elastic inclusion can be analogously defined using three-dimensional vector spherical harmonics and specially constructed vector wave functions. An added complication in three dimensions is that there are three wave-modes (P, SV and SH modes) which cannot be completely decoupled. It can be easily proved that the ESC possess similar properties in three dimensions.

The decay rates and the symmetry properties of the ESC  are also discussed. The symmetry of the ESC can be traced back to reciprocity property of scattered waves in elastic media. These properties also indicate that only first few coefficients are significant and sufficient to cater to a variety of scattering problems. The high-order ESC contain fine details of shape oscillations and geometric features of the inclusion. Thus, the largest order of stably recoverable ESC determines the maximal resolving power of the imaging setup and determines the resolution limit in feature extraction frameworks.

For reconstructing significant ESC from multi-static response data, we have formulated a truncated linear system of equations where the truncation parameter can be tuned depending on the requirements of the actual physical problem, stability constraints, truncation error, and the measurement noise. This truncated system is converted to a matrix system wherein all the ESC up to truncation order are arranged into a matrix that happens to be Hermitian. The system is shown to be ill-conditioned, however, the additional constraints dictated by the reciprocity principle and conservation of energy can be effectively used to gain stability.
The Hermitian property is pertinent to designing subspace migration type shape identification frameworks in elastic media. Moreover, shape descriptors and invariant features of elastic objects can also be discussed using ESC. This will be further discussed in a forthcoming article.

As an application of ESC, we constructed the scattering coefficients vanishing structures and elucidated that such structures can be used to enhance the performance of nearly elastic cloaking devices. The designed nearly cloaking scheme can be used for cloaking only compression or shear waves by using the corresponding ESC. This
suggests that ESC proposed in this article play more important role 
than the acoustic scattering coefficients. The results presented in the article are not restricted to only two dimensions and can be easily extended to three dimensions.

In future studies, the role of ESC in mathematical imaging, especially from the perspectives of designing shape invariant and descriptors in elastic media, will be investigated. Moreover, in order to handle inverse elastic medium  scattering problems and to understanding the super-resolution phenomena in elastic media, the concept of heterogeneous ESC will be discussed.

\appendix
\section{Multipolar Expansion of Elastodynamic Fundamental Solution}\label{AppendA}

Note that, by Helmholtz decomposition, $\bGamma^\omega(\bx,\by)\bp$  can be decomposed for any constant vector $\bp\in\RR^2$ and $\bx\neq \by$ as (see, for instance, \cite{Princeton, Kupradze79})
\begin{align}\label{Decomp}
\bGamma^\omega(\bx,\by)\bp=\nabla_\bx G_P (\bx,\by) +\vec{\nabla}_{\perp,\bx}\times G_S (\bx,\by),
\end{align}
with
$$
G_P(\bx,\by):=-\frac{1}{\K_P^2}\nabla_\bx \cdot (\bGamma^\omega(\bx,\by)\bp)
\quad\text{and}\quad
G_S(\bx,\by):=\frac{1}{\K_S^2} {\nabla}_{\perp,\bx}\times(\bGamma^\omega(\bx,\by)\bp).
$$
By \eqref{Green_fun}, one can easily show that
\begin{align}\label{bGP}
&
G_P(\bx,\by)=-\frac{1}{\rho_0\omega^2} \nabla_{\bx}g(\bx-\by,\K_P)\cdot\bp
=\frac{1}{\rho_0\omega^2} \nabla_{\by}g(\bx-\by,\K_P)\cdot\bp,\\ \label{bGS}
&
G_S(\bx,\by)=-\frac{1}{\rho_0\omega^2} \vec{\nabla}_{\perp,\bx}\times g(\bx-\by,\K_S)\cdot\bp
=\frac{1}{\rho_0\omega^2} \vec{\nabla}_{\perp,\by}\times g(\bx-\by,\K_S)\cdot\bp,
\end{align}
where the reciprocity relations
$$
g(\bx-\by,\K_\alpha)= g(\by-\bx,\K_\alpha)
\quad\text{and}\quad
\nabla_\bx g(\bx-\by,\K_\alpha)= -\nabla_\by g(\bx-\by,\K_\alpha),
$$
have been used. Recall that, by Graf's addition formula (see, for example, \cite[Formula 10.23.7]{NIST}), we have
$$
H_0^{(1)}(\K|\bx-\by|)=\sum_{n\in } H_n^{(1)}(\K|\bx|)e^{in\theta_\bx} \overline{J_n(\K|\by|)e^{in\theta_\by}}.
$$
Consequently, it follows from \eqref{bGP},\eqref{bGS}, and \eqref{green-fn-g} that
\begin{align*}
G_P(\bx,\by)&=\frac{i}{4\rho_0\omega^2}\sum_{n\in } H_n^{(1)}(\K_P|\bx|)e^{in\theta_\bx}\, \overline{\nabla [ J_n(\K_P|\by|)e^{in\theta_\by}]\cdot\bp},\\
G_S(\bx,\by)&=\frac{i}{4\rho_0\omega^2}\sum_{n\in } H_n^{(1)}(\K_S|\bx|)e^{in\theta_\bx} \overline{\vec{\nabla}_{\perp}\times[ J_n(\K_S|\by|)e^{in\theta_\by}]\cdot\bp}.
\end{align*}
The identity \eqref{G-multipole} follows by substituting the values of $G_\alpha$ in the decomposition \eqref{Decomp} and using the definition of $\bJ^\alpha$ and $\bH^\alpha$.

\section{Proof of Lemma \ref{Lem:Wsym2}}\label{AppendB}
\begin{proof}
Since our formulation here is based on an integral representation in terms of the densities $\bphi$ and $\bpsi$ satisfying \eqref{integral-system}, we take a different route than those already discussed in \cite{Varath, Waterman} without directly invoking the argument of reciprocity.

Let us first fix some notation. For any $\bv,\bw\in H^{3/2}(D)^2$  and $a,b\in\RR_+$,   define the quadratic form
\begin{eqnarray*}
\langle \bv,\bw\rangle^{a,b}_D:=\int_D\left[a(\nabla\cdot\bv)(\nabla\cdot\bw)+\frac{b}{2}\left(\nabla\bv+\nabla\bv^\top\right):(\nabla\bw+\nabla\bw^\top)\right]d\bx, 
\end{eqnarray*}
where double dot ` $ : $ ' denotes the matrix contraction operator defined for two matrices $\mathbf{A}=(a_{ij})$ and $\mathbf{B}=(b_{ij})$ by $\mathbf{A}:\mathbf{B}:=\ds\sum_{i,j}a_{ij}b_{ij}$. It is easy to get from the definition of $\langle\cdot,\cdot\rangle^{a,b}_D$ that
\begin{align}
\int_{\partial D}\bv\cdot\frac{\partial\bw}{\partial\nu}d\sigma(\bx)=\int_{D}\bv\cdot\OL_{a,b}[\bw]d\bx+\langle \bv,\bw\rangle^{a,b}_D.\label{green1}
\end{align}
Note that if $\bw$ is a solution of the Lam\'e equation  $\OL_{a,b}[\bw]+c\omega^2\bw=\mathbf{0}$ then
\begin{align*}
\int_{\partial D}\bv\cdot\frac{\partial\bw}{\partial\nu}d\sigma(\bx)=-c \omega^2\int_{D}\bv\cdot\bw d\bx+\langle \bv,\bw\rangle^{a,b}_D, 
\end{align*}
and consequently from \eqref{green1}
\begin{align}
\int_{\partial D}\bv\cdot\frac{\partial\bw}{\partial\nu}d\sigma(\bx)=\int_{\partial D} \frac{\partial\bv}{\partial\nu}\cdot\bw d\sigma(\bx)-c\omega^2\int_{D}\bv\cdot\bw d\bx-\int_{D}\OL_{a,b}[\bv]\cdot\bw d\bx.\label{green3}
\end{align}
Moreover, if $\bv$ solves $\OL_{a,b}[\bv]+c\omega^2\bv=\mathbf{0}$  then
\begin{align}
\label{green4}
\int_{\partial D}\bv\cdot\frac{\partial\bw}{\partial\nu}d\sigma(\bx)=\int_{\partial D}\frac{\partial\bv}{\partial\nu}\cdot \bw d\sigma(\bx).
\end{align}
We will also require the constants
\begin{align*}
\eta_P:=&\frac{\mu_0}{\mu_1-\mu_0},
\\
\widetilde{\eta}_P:=&\frac{\mu_1}{\mu_1-\mu_0},
\\
\eta_S:=&\frac{\lambda_0+\mu_0}{(\lambda_1-\lambda_0)+(\mu_1-\mu_0)},
\\
\widetilde{\eta}_S:=&\frac{\lambda_1+\mu_1}{(\lambda_1-\lambda_0)+(\mu_1-\mu_0)}.
\end{align*}

Let $(\bphi_n^\alpha,\bpsi_n^\alpha)$ and $(\bphi_n^\beta,\bpsi_m^\beta)$ be the solutions of $\eqref{integral-system}$ with $\bu^{\rm inc}=\bJ^\alpha$ and $\bu^{\rm inc}=\bJ^\beta$ respectively, i.e.
\begin{align}
&\widetilde{\mathcal{S}}^\omega_D\bphi_n^\alpha-\mathcal{S}^\omega_D\bpsi_n^\alpha
=\bJ^\alpha_n\big|_{\partial D},\label{1}
\\
&\frac{\partial}{\partial\widetilde{\nu}}\widetilde{\mathcal{S}}^\omega_D\bphi_n^\alpha\Big|_-
-\frac{\partial}{\partial\nu}\mathcal{S}^\omega_D\bpsi_n^\alpha\Big|_+
=\ds\frac{\partial\bJ^\alpha_n}{\partial\nu} \Big|_{\partial D},\label{2}
\end{align}
and
\begin{align}
&\widetilde{\mathcal{S}}^\omega_D\bphi_m^\beta-\mathcal{S}^\omega_D\bpsi_m^\beta=\bJ^\beta_m\big|_{\partial D},\label{3}
\\
&\frac{\partial}{\partial\widetilde{\nu}}\widetilde{\mathcal{S}}^\omega_D\bphi_m^\beta\Big|_-
-\frac{\partial}{\partial\nu}\mathcal{S}^\omega_D\bpsi_m^\beta\Big|_+=\ds\frac{\partial\bJ^\beta_m}{\partial\nu} \Big|_{\partial D}.\label{4}
\end{align}
Then, by making use of the jump conditions \eqref{Sjumps},  $W^{\alpha,\beta}_{m,n}$ can be expressed as
\begin{align*}
W^{\alpha,\beta}_{m,n}=\ds\int_{\partial D}\overline{\bJ_n^\alpha}\cdot\bpsi_m^\beta d\sigma(\bx)= \int_{\partial D}\overline{\bJ_n^\alpha}\cdot\left[\frac{\partial}{\partial\nu}\mathcal{S}^\omega_D[\bpsi^\beta_m]\Big|_+-\frac{\partial}{\partial\nu}\mathcal{S}^\omega_D[\bpsi^\beta_m]\Big|_-\right] d\sigma(\bx).
\end{align*}
Further, by invoking \eqref{4} and subsequently using \eqref{green3} and \eqref{green4}, one gets the expression
\begin{align*}
W^{\alpha,\beta}_{m,n}=&-\int_{\partial D}\overline{\bJ_n^\alpha}\cdot\frac{\partial\bJ^\beta_m}{\partial\nu}d\sigma(\bx)+\int_{\partial D}\overline{\bJ_n^\alpha}\cdot\left[\frac{\partial}{\partial\widetilde{\nu}}\widetilde{\mathcal{S}}^\omega_D[\bphi^\beta_m]\Big|_-
-\frac{\partial}{\partial\nu}\mathcal{S}^\omega_D[\bpsi^\beta_m]\Big|_-\right] d\sigma(\bx)
\\
\nm
=&-\int_{\partial D}\overline{\bJ_n^\alpha}\cdot\frac{\partial\bJ^\beta_m}{\partial\nu}d\sigma(\bx)+\int_{\partial D}\left[\frac{\partial\overline{\bJ_n^\alpha}}{\partial\widetilde{\nu}}\cdot\widetilde{\mathcal{S}}^\omega_D[\bphi^\beta_m]
-\frac{\partial\overline{\bJ_n^\alpha}}{\partial\nu}\cdot\mathcal{S}^\omega_D[\bpsi^\beta_m]\right] d\sigma(\bx)
\\
\nm
&-\rho_1\omega^2\int_{D}\overline{\bJ^\alpha_n}\cdot\widetilde{\mathcal{S}}^\omega_D[\bphi^\beta_m]d\bx-\int_{D}\OL_{\lambda_1,\mu_1}[\overline{\bJ^\alpha_n}]\cdot\widetilde{\mathcal{S}}^\omega_D[\bphi^\beta_m]d\bx.
\end{align*}
This, together with \eqref{3}, leads to
\begin{align*}
W^{\alpha,\beta}_{m,n}=&-\int_{\partial D}\overline{\bJ_n^\alpha}\cdot\frac{\partial\bJ^\beta_m}{\partial\nu}d\sigma(\bx)
+
\int_{\partial D}\frac{\partial\overline{\bJ_n^\alpha}}{\partial\widetilde{\nu}}\cdot\widetilde{\mathcal{S}}^\omega_D[\bphi^\beta_m] d\sigma(\bx)
-\int_{\partial D}\frac{\partial\overline{\bJ_n^\alpha}}{\partial\nu}\cdot\widetilde{\mathcal{S}}^\omega_D[\bphi^\beta_m]d\sigma(\bx)
\\
&+\int_{\partial D}\frac{\partial\overline{\bJ}_n^\alpha}{\partial\nu}\cdot\bJ_m^\beta\cdot d\sigma(\bx)
-\rho_1\omega^2\int_{D}\overline{\bJ^\alpha_n}\cdot\widetilde{\mathcal{S}}^\omega_D[\bphi^\beta_m]d\bx-\int_{D}\OL_{\lambda_1,\mu_1}[\overline{\bJ^\alpha_n}]\cdot\widetilde{\mathcal{S}}^\omega_D[\bphi^\beta_m]d\bx.
\end{align*}
It is easy to see that the first and the fourth terms cancel out each other thanks to \eqref{green4}. Therefore,
\begin{align}
W^{\alpha,\beta}_{m,n}=&
\int_{\partial D}\left[\frac{\partial\overline{\bJ_n^\alpha}}{\partial\widetilde{\nu}}-\frac{\partial\overline{\bJ_n^\alpha}}{\partial\nu}\right]\cdot\widetilde{\mathcal{S}}^\omega_D[\bphi^\beta_m]d\sigma(\bx)-\rho_1\omega^2\int_{D}\overline{\bJ^\alpha_n}\cdot\widetilde{\mathcal{S}}^\omega_D[\bphi^\beta_m]d\bx
\nonumber
\\
&-\int_{D}\OL_{\lambda_1,\mu_1}[\overline{\bJ^\alpha_n}]\cdot\widetilde{\mathcal{S}}^\omega_D[\bphi^\beta_m]d\bx.\label{main1}
\end{align}

Remark that $\nabla\cdot\bJ^S_n=0=\nabla\times\bJ^P_n$. Therefore, it is easy to verify  by definition of the surface traction operator  that
\begin{align}
\frac{\partial\overline{\bJ_n^\alpha}}{\partial\widetilde{\nu}}-\frac{\partial\overline{\bJ_n^\alpha}}{\partial\nu}=\frac{1}{\eta_\alpha}\frac{\partial\overline{\bJ_n^\alpha}}{\partial\nu} =\frac{1}{\widetilde{\eta}_\alpha}\frac{\partial\overline{\bJ_n^\alpha}}{\partial\widetilde{\nu}}.\label{main2}
\end{align}
Thus, using right most quantity of \eqref{main2} in \eqref{main1} and subsequently invoking identity \eqref{green3}, one gets
\begin{align*}
\widetilde{\eta}_\alpha W^{\alpha,\beta}_{m,n}
=& \int_{\partial D}\frac{\partial\overline{\bJ^\alpha_n}}{\partial\widetilde{\nu}}\cdot\widetilde{\mathcal{S}}^\omega_D[\bphi^\beta_m]d\sigma(\bx)-\widetilde{\eta}_\alpha\rho_1\omega^2\int_{D}\overline{\bJ^\alpha_n}\cdot\widetilde{\mathcal{S}}^\omega_D[\bphi^\beta_m]d\bx
\\
&-\widetilde{\eta}_\alpha\int_{D}\OL_{\lambda_1,\mu_1} [\overline{\bJ^\alpha_n} ]\cdot\widetilde{\mathcal{S}}^\omega_D[\bphi^\beta_m]d\bx
\\
=&\int_{\partial D}\overline{\bJ^\alpha_n}\cdot\frac{\partial}{\partial\widetilde{\nu}}\widetilde{\mathcal{S}}^\omega_D[\bphi^\beta_m]\Big|_-d\sigma(\bx)
+(1-\widetilde{\eta}_\alpha)\rho_1\omega^2\int_{D}\overline{\bJ^\alpha_n}\cdot\widetilde{\mathcal{S}}^\omega_D[\bphi^\beta_m]d\bx
\\
&+(1-\widetilde{\eta}_\alpha)\int_{D}\OL_{\lambda_1,\mu_1} [\overline{\bJ^\alpha_n}]\cdot\widetilde{\mathcal{S}}^\omega_D[\bphi^\beta_m]d\bx.
\end{align*}
This, together with \eqref{1} and \eqref{4}, provides
\begin{align}
\widetilde{\eta}_\alpha W^{\alpha,\beta}_{m,n}
=&
\int_{\partial D}\overline{\widetilde{\mathcal{S}}^\omega_D[\bphi^\alpha_n]}\cdot\frac{\partial}{\partial\widetilde{\nu}}\widetilde{\mathcal{S}}^\omega_D[\bphi^\beta_m]\Big|_-d\sigma(\bx)
-\int_{\partial D}\overline{\mathcal{S}^\omega_D[\bpsi^\alpha_n]}\cdot\frac{\partial}{\partial\nu}\mathcal{S}^\omega_D[\bpsi^\beta_m]\Big|_+d\sigma(\bx)
\nonumber
\\
&-\int_{\partial D}\overline{\mathcal{S}_D^\omega[\bpsi^\alpha_n]}\cdot\frac{\partial\bJ^\beta_m}{\partial\nu}d\sigma(\bx)+(1-\widetilde{\eta}_\alpha)\rho_1\omega^2\int_{D}\overline{\bJ^\alpha_n}\cdot\widetilde{\mathcal{S}}^\omega_D[\bphi^\beta_m]d\bx
\nonumber
\\
&+(1-\widetilde{\eta}_\alpha)\int_{D}\OL_{\lambda_1,\mu_1}[\overline{\bJ^\alpha_n}]\cdot\widetilde{\mathcal{S}}^\omega_D[\bphi^\beta_m]d\bx.\label{main3}
\end{align}
Similarly, substituting the first relation of \eqref{main2} back in \eqref{main1} and invoking \eqref{3}, one obtains
\begin{align}
\eta_\alpha W^{\alpha,\beta}_{m,n}=&\int_{\partial D}\frac{\partial\overline{\bJ^\alpha_n}}{\partial\nu}\cdot\widetilde{\mathcal{S}}^\omega_D[\bphi^\beta_m]d\sigma(\bx)-\eta_\alpha\rho_1\omega^2\int_{D}\overline{\bJ^\alpha_n}\cdot\widetilde{\mathcal{S}}^\omega_D[\bphi^\beta_m]d\bx
\nonumber
\\
&-\eta_\alpha\int_{D}\OL_{\lambda_1,\mu_1}[\overline{\bJ^\alpha_n}]\cdot\widetilde{\mathcal{S}}^\omega_D[\bphi^\beta_m]d\bx
\nonumber
\\
=&
\int_{\partial D}\frac{\partial\overline{\bJ^\alpha_n}}{\partial\nu}\cdot\mathcal{S}^\omega_D[\bpsi^\beta_m]d\sigma(\bx)+\int_{\partial D}\frac{\partial\overline{\bJ^\alpha_n}}{\partial\nu}\cdot\bJ^\beta_m d\sigma(\bx)
\nonumber
\\
&-\eta_\alpha\rho_1\omega^2\int_{D}\overline{\bJ^\alpha_n}\cdot\widetilde{\mathcal{S}}^\omega_D[\bphi^\beta_m]d\bx
-\eta_\alpha\int_{D}\OL_{\lambda_1,\mu_1}[\overline{\bJ^\alpha_n}]\cdot\widetilde{\mathcal{S}}^\omega_D[\bphi^\beta_m]d\bx.\label{main4}
\end{align}
Finally, subtracting \eqref{main4} from \eqref{main3} and noting that $\widetilde\eta_\alpha-\eta_\alpha=1$, one finds out that
\begin{align}
W^{\alpha,\beta}_{m,n}=&
\int_{\partial D}\overline{\widetilde{\mathcal{S}}^\omega[\bphi^\alpha_n]}\cdot\frac{\partial}{\partial\widetilde{\nu}}\widetilde{\mathcal{S}}^\omega_D[\bphi^\beta_m]\Big|_-d\sigma(\bx)
-\int_{\partial D}\overline{\mathcal{S}^\omega_D[\bpsi^\alpha_n]}\cdot\frac{\partial}{\partial\nu}\mathcal{S}^\omega_D[\bpsi^\beta_m]\Big|_+d\sigma(\bx)
\nonumber
\\
&-\int_{\partial D}\overline{\mathcal{S}_D^\omega[\bpsi^\alpha_n]}\cdot\frac{\partial\bJ^\beta_m}{\partial\nu}d\sigma(\bx)
-
\int_{\partial D}\frac{\partial\overline{\bJ^\alpha_n}}{\partial\nu}\cdot\mathcal{S}^\omega_D[\bpsi^\beta_m]d\sigma(\bx)
 -\int_{\partial D}\frac{\partial\overline{\bJ^\alpha_n}}{\partial\nu}\cdot\bJ^\beta_m d\sigma(\bx).
\label{main5}
\end{align}
Similarly, we have
\begin{align}
W^{\beta,\alpha}_{n,m}=&
\int_{\partial D}\overline{\widetilde{\mathcal{S}}^\omega[\bphi^\beta_m]}\cdot\frac{\partial}{\partial\widetilde{\nu}}\widetilde{\mathcal{S}}^\omega_D[\bphi^\alpha_n]\Big|_-d\sigma(\bx)
-\int_{\partial D}\overline{\mathcal{S}^\omega_D[\bpsi^\beta_m]}\cdot\frac{\partial}{\partial\nu}\mathcal{S}^\omega_D[\bpsi^\alpha_n]\Big|_+d\sigma(\bx)
\nonumber
\\
&
-\int_{\partial D}\overline{\mathcal{S}_D^\omega[\bpsi^\beta_m]}\cdot\frac{\partial\bJ^\alpha_n}{\partial\nu}d\sigma(\bx)
-
\int_{\partial D}\frac{\partial\overline{\bJ^\beta_m}}{\partial\nu}\cdot\mathcal{S}^\omega_D[\bpsi^\alpha_n]d\sigma(\bx) -\int_{\partial D}\frac{\partial\overline{\bJ^\beta_m}}{\partial\nu}\cdot\bJ^\alpha_n d\sigma(\bx)
\nonumber
\\
=&
\int_{\partial D}\frac{\partial}{\partial\widetilde{\nu}}\overline{\widetilde{\mathcal{S}}^\omega[\bphi^\beta_m]}\Big|_-
\cdot\widetilde{\mathcal{S}}^\omega_D[\bphi^\alpha_n]d\sigma(\bx)
-\int_{\partial D}\frac{\partial}{\partial\nu}\overline{\mathcal{S}^\omega_D[\bpsi^\beta_m]}\Big|_+\cdot\mathcal{S}^\omega_D[\bpsi^\alpha_n]d\sigma(\bx)
\nonumber
\\
&
-\int_{\partial D}\overline{\mathcal{S}_D^\omega[\bpsi^\beta_m]}\cdot\frac{\partial\bJ^\alpha_n}{\partial\nu}d\sigma(\bx)
-
\int_{\partial D}\frac{\partial\overline{\bJ^\beta_m}}{\partial\nu}\cdot\mathcal{S}^\omega_D[\bpsi^\alpha_n]d\sigma(\bx)
 -\int_{\partial D}\overline{\bJ^\beta_m} \cdot\frac{\partial\bJ^\alpha_n}{\partial\nu}d\sigma(\bx).
\label{main6}
\end{align}
The proof is completed by taking complex conjugate of expression \eqref{main6} and comparing the result with equation \eqref{main5}.
\end{proof}

\section{Proof of Theorem \ref{thmOptic}}\label{AppendC}

In order to prove identity \eqref{W-constraint}, we follow the approach taken by \cite{Varath}. Since $\mathbf{W}_\infty$ is independent of the choice of incident field, we  consider the case when the plane waves
$$
\bu^{\rm inc}_P(\bx):= \nabla e^{i\K_P \bx\cdot\bd}
\quad\text{and}\quad
\bu^{\rm inc}_S(\bx):= \vec{\nabla }_\perp\times e^{i\K_S \bx\cdot\bd},
$$
are incident simultaneously and use the superposition principle for the optical theorem thanks to the linearity of the RHS of identity \eqref{ScatteringCrossSec}. Note that the coefficients
$a^\alpha_m (\bu^{\rm inc}_\alpha)$ and $\gamma^\alpha_n$ in this case are given by
\begin{align*}
a^\alpha_m(\bu^{\rm inc})= e^{im(\pi/2-\theta_\bd)}
\quad\text{and}\quad
\gamma^\alpha_n= \frac{i}{4\rho_0\omega^2}\sum_{m\in\ZZ}\left[ a^P_m W^{\alpha,P}_{m,n}+a^S_m W^{\alpha,S}_{m,n}\right].
\end{align*}
To facilitate ensuing discussion, let us define
$$
\mathbf{A}:=
\begin{pmatrix}
\mathbf{A}_P\\\mathbf{A}_S
\end{pmatrix}
\,\,\text{and}\,\,
\bgamma:=
\begin{pmatrix}
\bgamma_P\\\bgamma_S
\end{pmatrix},
\quad\text{with}\quad
\left(\mathbf{A}_\alpha\right)_m:= a^\alpha_m(\bu^{\rm inc})
\,\,\text{and}\,\,
\left(\bgamma_\alpha\right)_m:=\gamma^\alpha_m,
\quad\forall m\in\ZZ.
$$
It can be easily seen, by the definitions of $\mathbf{A}$ and $\bgamma$, and the fact that $\mathbf{W}_\infty$ is Hermitian,  that
\begin{align*}
\bgamma= \frac{i}{4\rho_0\omega^2} \mathbf{A}^\top \mathbf{W}_\infty
\quad\text{and}\quad
\bgamma\cdot\overline{\bgamma}= \frac{1}{(4\rho_0\omega^2)^2}\mathbf{A}^T\mathbf{W}_\infty\overline{\mathbf{W}_\infty}\overline{\mathbf{A}}.
\end{align*}
On the other hand, using the orthogonality relations \eqref{PP}-\eqref{SS} of the cylindrical surface vector potentials and fairly easy manipulations, we have
\begin{align}
\label{LHS}
\int_0^{2\pi} \left(\frac{1}{\K_P}\left|\bu^\infty_P(\hbx;\hbd)\right|^2+\frac{1}{\K_S}\left|\bu^\infty_S(\hbx;\hbd)\right|^2\right)d\theta
= 4 \bgamma \cdot \overline{\bgamma}= \frac{4}{(4\rho_0\omega^2)^2}\mathbf{A}^T\mathbf{W}_\infty\overline{\mathbf{W}_\infty}\overline{\mathbf{A}}.
\end{align}
Similarly, by virtue of  superposition principle, the RHS of the identity \eqref{ScatteringCrossSec}  can be written as
\begin{align}
2\Bigg[\sqrt{\frac{2\pi}{\K_P}}
\Im m \left\{\sqrt{i} \bu^\infty_P(\hbd;\hbd, P)\cdot\hat{\be}_r\right\} -
\sqrt{\frac{2\pi}{\K_S}}\Im m
& \left\{\sqrt{i} \bu^\infty_S(\hbd;\hbd, S)\cdot\hat{\be}_\theta\right\}\Bigg]
\nonumber
\\
&=\frac{4}{4\rho_0\omega^2}\Im m\left\{\mathbf{A}^\top\mathbf{W}_\infty \overline{\mathbf{A}}\right\}.\label{RHS}
\end{align}
Substituting \eqref{LHS} and \eqref{RHS} in \eqref{ScatteringCrossSec}, one gets
\begin{align}
\frac{1}{4\rho_0\omega^2} \mathbf{A}^T\mathbf{W}_\infty\overline{\mathbf{W}_\infty}\overline{\mathbf{A}}=\Im m\left\{\mathbf{A}^\top\mathbf{W}_\infty \overline{\mathbf{A}}\right\}.\label{first-relation}
\end{align}
Finally, note that
\begin{align*}
\Im m\left\{\mathbf{A}^\top\mathbf{W}_\infty \overline{\mathbf{A}}\right\}= & \Re e\left\{\mathbf{A}^\top\right\}\Im m\left\{\mathbf{W}_\infty \right\}\Re e\left\{\mathbf{A}\right\}-\Re e\left\{\mathbf{A}^\top\right\}\Re e\left\{\mathbf{W}_\infty \right\}\Im m\left\{\mathbf{A}\right\}
\\
&+\Im m\left\{\mathbf{A}^\top\right\}\Re e\left\{\mathbf{W}_\infty \right\}\Re e\left\{\mathbf{A}\right\}
+\Im m\left\{\mathbf{A}^\top\right\}\Im m\left\{\mathbf{W}_\infty\right\}\Im m\left\{\mathbf{A}\right\}.
\end{align*}
Recall that each term on the RHS of the above equation is a scalar and the matrix $\mathbf{W}_\infty$ is Hermitian. Thus, the second term  cancels out the third one on transposition. Finally, the first and the fourth terms can be combined to yield
\begin{align}
\Im m\left\{\mathbf{A}^\top\mathbf{W}_\infty \overline{\mathbf{A}}\right\}= \mathbf{A}^\top\Im m\left\{\mathbf{W}_\infty \right\}\overline{\mathbf{A}}.
\label{im-relation}
\end{align}
The relation \eqref{W-constraint} follows by substituting \eqref{im-relation} back in \eqref{first-relation}. This completes the proof.

\section{Proof of Theorem \ref{theorem-cloaking}}\label{AppendD}

Recall that, for $t\to 0$,
\begin{align*}
&J_n(t)=\phantom{-}\frac{t^n}{2^n\Gamma(n+1)}+O(t^{n+1}),
\\\nm
&J'_n(t)=\frac{nt^{n-1}}{2^n\Gamma(n+1)}+O(t^{n}),
\\\nm
& H_n^{(1)}(t)=-i\frac{2^n\Gamma(n)}{\pi t^n}+O(t^{-n+1}),
\\\nm
&(H_n^{(1)})'(t)=i\frac{2^n\Gamma(n+1)}{\pi t^{n+1}}+O(t^{-n}).
\end{align*}
Hence, by the definition of $B^\alpha_n(t,\lambda,\mu)$, $C^\alpha_n(t,\lambda,\mu)$, $\widehat{B}^\alpha_n(t,\lambda,\mu)$ and $\widehat{C}^\alpha_n(t,\lambda,\mu)$, we have
\begin{align*}
&B^P_n(t,\lambda,\mu)=-\frac{i\mu 2^{n+1}\Gamma(n+1)}{\pi t^n}+O(t^{-n+1}),
\\\nm
&C^S_n(t,\mu)=\frac{i\mu 2^{n+1}\Gamma(n+1)}{\pi t^n}+O(t^{-n+1}),
\\\nm
&C^P_n(t,\mu)=-B^S_n(t,\lambda,\mu)
=-\frac{\mu \,2^{n+1}\Gamma(n+1)\,(n+1)}{\pi t^n}+O(t^{-n+1}),
\\\nm
&\widehat{B}^P_n(t,\lambda,\mu)
=-\frac{\mu  t^n}{2^{n-1}\Gamma(n)}+O(t^{n+1}),
\\\nm
&\widehat{C}^S_n(t,\lambda,\mu)
=\frac{\mu  t^n}{2^{n-1}\Gamma(n)}+O(t^{n+1}),
\\\nm
&\widehat{C}^P_n(t,\lambda,\mu)=-\widehat{B}^S_n(t,\lambda,\mu)
=i\frac{\mu  (n-1) t^n}{2^{n-1}\Gamma(n)}+O(t^{n+1}),
\end{align*}
as $t\to 0$.
Inserting the previous asymptotic behavior into the expression of $\mathbf{M}_{n,j}$, we get
\begin{align*}
\mathbf{M}_{n,j}=
\begin{pmatrix}
\mathbf{A}_{11} & \mathbf{A}_{12}
\\
\mathbf{A}_{21} & \mathbf{A}_{22}
\end{pmatrix},
\end{align*}
where
\begin{align*}
\mathbf{A}_{11} &= \phantom{-} \frac{n}{2^n\Gamma(n+1)}
\begin{pmatrix}
t^{n}_{j,P} & i t^n_{j,S}
\\\nm
i t^{n}_{j,P} & -t^{n}_{j,S}
\end{pmatrix}
+O(\omega^{n+1}),
\\\nm
\mathbf{A}_{12} &=\phantom{-} \frac{2^n\Gamma(n+1)}{\pi}
\begin{pmatrix}
it^{-n}_{j,P} &  t^{-n}_{j,S}
\\\nm
t^{-n}_{j,P} & -it^{-n}_{j,S}
\end{pmatrix}
+O(\omega^{-n+1}),
\\\nm
\mathbf{A}_{21} &= -\frac{\mu }{2^{n-1}\Gamma(n)}
\begin{pmatrix}
-t^{n}_{j,P} &  -i(n-1) t^n_{j,S}
\\\nm
i(n-1) t^{n}_{j,P} &  t^{n}_{j,S}
\end{pmatrix}
+O(\omega^{n+1}),
\\\nm
\mathbf{A}_{22} &= -\frac{2^{n+1}\mu\Gamma(n+1)}{\pi}
\begin{pmatrix}
it^{-n}_{j,P} &  -(n+1)t^{-n}_{j,S}
\\\nm
(n+1)t^{-n}_{j,P} & -it^{-n}_{j,S}
\end{pmatrix}
+O(\omega^{-n+1}).
\end{align*}
It implies that
\begin{align}
\mathbf{M}_{n,j} &=
\begin{pmatrix}
O(\omega^{n+1}) & O(\omega^{-n+1})
\\\nm
O(\omega^{n}) & O(\omega^{-n})
\end{pmatrix},
\quad j=1,\cdots, L,\label{Mnj}
\\\nm
\mathbf{M}_{n,L} &=
\begin{pmatrix}
0 & 0
\\\nm
O(\omega^{n}) & O(\omega^{-n})
\end{pmatrix},
\quad \text{as}\;\; \omega\to 0.\label{MnL}
\end{align}
Moreover, the inverse of $\mathbf{M}_{n,j}$ can be expressed as
\begin{align*}
\mathbf{M}_{n,j}^{-1} &=
\begin{pmatrix}
\mathbf{A}_{11}^{-1}+\mathbf{A}_{11}^{-1}\mathbf{A}_{12}\mathbf{B}^{-1}  \mathbf{A}_{21}\mathbf{A}^{-1}_{11}
&
-\mathbf{A}_{11}^{-1}\mathbf{A}_{12} \mathbf{B}^{-1}
\\\nm
-{\mathbf{B}}^{-1}\mathbf{A}_{21}\mathbf{A}_{11}^{-1} & \mathbf{B}^{-1}
\end{pmatrix},
\end{align*}
where $\mathbf{B}$ is the Schur's complement of $\mathbf{A}_{22}$, that is,
$$
\mathbf{B}:=\mathbf{A}_{22}-\mathbf{A}_{21}\mathbf{A}_{11}^{-1}\mathbf{A}_{12}.
$$
Since
\begin{align*}
\mathbf{A}_{11}^{-1}=O(\omega^{-n-1}),
\quad
\mathbf{A}_{11}^{-1}\mathbf{A}_{12}=O(\omega^{-2n}),
\quad
\mathbf{A}_{21}\mathbf{A}_{11}^{-1}=O(\omega^{-1}),
\quad\text{and}\quad
\mathbf{B}^{-1}=O(\omega^n),
\end{align*}
it follows that
\begin{align}
\mathbf{M}^{-1}_{n,j}=
\begin{pmatrix}
O(\omega^{-n-1}) & O(\omega^{-n})
\\\nm
O(\omega^{n-1}) & O(\omega^{n})
\end{pmatrix},
\quad\text{as}\;\;\omega\to 0.\label{Mnjinv}
\end{align}
Inserting \eqref{Mnj}, \eqref{MnL} and \eqref{Mnjinv} into the expression  \eqref{eq:6} of $\mathbf{Q}^{(n)}$  and then making use of the series expansions of $J_n$, $Y_n$, $J'_n$ and $Y_n'$,  we find out that
\begin{align*}
\mathbf{Q}^{(n)}_{21}(\lambda,\mu,\rho\omega^2)
&=\omega^{n}
\left(
\bG_{n,0}(\lambda,\mu,\rho)+\sum_{l=1}^{N-n}\sum_{j=0}^{L+1} \bG_{n,l}^{(j)}(\lambda,\mu,\rho)\omega^{2l}(\ln\omega)^j + o\left(\omega^{2(N-n)}\right)
\right),
\\
\mathbf{Q}^{(n)}_{22}(\lambda,\mu,\rho\omega^2)
&=\omega^{-n}
\left(
\bH_{n,0}(\lambda,\mu,\rho)+\sum_{l=1}^{N-n}\sum_{j=0}^{L+1} \bH_{n,l}^{(j)}(\lambda,\mu,\rho)\omega^{2l}(\ln\omega)^j + o\left(\omega^{2(N-n)}\right)
\right),
\end{align*}
which together with \eqref{eq:7} yields \eqref{eq:8}. Here, the remaining terms $o(\omega^{2(N-n)})$ are understood element-wise for the matrices.

\bibliographystyle{plain}

\begin{thebibliography}{99}

\bibitem{Princeton}
Ammari, H.,  Bretin, E., Garnier, J., Kang, H. , Lee, H., Wahab, A.:  
Mathematical Methods in Elasticity Imaging. Princeton Series in Applied Mathematics. Princeton University Press, NJ (2015)  

\bibitem{iakovleva} 
Ammari, H., Calmon, P.,  Iakovleva, E.: 
Direct elastic imaging of a small inclusion. SIAM J. Imaging Sci. \textbf{1}(2), 169--187 (2008)

\bibitem{medium} 
 Ammari, H.,  Chow, Y. T.,  Zou, J.: The concept of heterogeneous scattering coefficients and its application in inverse medium scattering. SIAM J. Math. Anal. \textbf{46}(4), 2905--2935 (2014) 

\bibitem{MSRI-Book} 
 Ammari, H.,  Garnier, J.,  Jing, W.,  Kang, H.,  Lim, M.,  Solna, K.,   Wang, H.: Mathematical and Statistical Methods for Multistatic Imaging. Lecture Notes in Mathematics, vol. 2098. Springer-Verlag, Cham (2013)


\bibitem{Ammari1} 
Ammari, H.,  Kang, H.,  Lee, H.,   Lim, M.: Enhancement of near-cloaking using generalized polarization tensors vanishing structures. Part I: The conductivity problem. Comm. Math. Phys. \textbf{317}(1), 253--266 (2013)


\bibitem{Helmholtz} 
 Ammari, H.,  Kang, H.,  Lee, H.,  Lim, M.: Enhancement of near-cloaking. Part II: The Helmholtz equation. Commun. Math. Phys.  \textbf{317}(2), 485--502 (2013)

\bibitem{Maxwell} 
 Ammari, H.,  Kang, H.,  Lee, H.,  Lim, M.,  Yu, S.: Enhancement of near cloaking for the full Maxwell equations.   SIAM J. Appl. Math.  \textbf{73}(6), 2055--2076 (2013)

\bibitem{Shape} 
 Ammari, H.,  Tran, M. P.,  Wang, H.: Shape identification and classification in echolocation.  SIAM J. Imaging Sci.  \textbf{7}(3), 1883--1905 (2014)

\bibitem{BaoLiu}  Bao, G.,  Liu, H.: Nearly cloaking the electromagnetic fields. SIAM J. Appl. Math. \textbf{74}(3), 724--742 (2014)

\bibitem{BaoLiuZou}  Bao, G.,  Liu, H.,   Zou, J.: Nearly cloaking the full Maxwell equations: Cloaking active contents with general conducting layers. J. Math. Pures Appl. \textbf{101}(5), 716--733 (2014)

\bibitem{NullField}
 Bates, R. H. T.,  Wall, D. J. N.: Null field approach to scalar diffraction I. General method. Phil. Trans. R. Soc. A \textbf{287}(1339), 45--78 (1977)


\bibitem{Bergh} 
 Bergh,  J.,  L\"{o}fstr\"{o}m, J.: Interpolation Spaces. An Introduction. Grundlehren der Mathematischen Wissenschaften, vol. 223. Springer-Verlag, Berlin-New York (1976)

\bibitem{ChenChan}  Chen,  H.,   Chan, C.T.: Acoustic cloaking and transformation acoustics. J. Phys. D: Appl. Phys. \textbf{43}(11), 113001 (2010)

\bibitem{Dahlberg}
Dahlberg, B. E.,  Kenig, C. E.,  Verchota, G.: Boundary value problem for the systems of elastostatics in Lipschitz domains. Duke Math. Jour. \textbf{57}(3), 795--818 (1988) 

\bibitem{Dassios87}
 Dassios,  G.,  Kiriaki, K.: On the scattering amplitudes for elastic waves. Z. Angew. Math. Phys.  \textbf{38}(6), 856--873 (1987) 


\bibitem{Dassios2000}
 Dassios, G.,   Kleinman, R.: Low Frequency Scattering. Oxford University Press, Oxford (2000)


\bibitem{Diatta}  
Diatta,  A., Guenneau, S.: Controlling solid elastic waves with spherical cloaks. Appl. Phys. Lett.  
\textbf{105}, 021901 (2014)

\bibitem{Diatta2}
Diatta,  A.,   Guenneau, S.: Cloaking via change of variables in elastic impedance tomography. arXiv:1306.4647 (2013) 

\bibitem{Farhat} 
 Farhat, M.,  Guenneau, S.,  Enoch, S.,   Movchan, A.: Cloaking bending waves propagating in thin elastic plates. Phys. Rev. B  \textbf{79}, 033102 (2009) 

\bibitem{Ganesh10}
 Ganesh, M.,  Hawkins, S. C.: A far-field based T-matrix method for two dimensional obstacle scattering. ANZIAM J. \textbf{50}, C121--C136 (2010)


\bibitem{Ganesh}
 Ganesh, M.,   Hawkins, S. C.: Three dimensional electromagnetic scattering T-matrix computations.  J. Comput. Appl. Math.  \textbf{234}(6), 1702--1709 (2010) 


\bibitem{Greenleaf1} 
 Greenleaf, A., Kurylev, Y.,  Lassas, M.,   Uhlmann, G.: Cloaking devices, electromagnetic wormholes and transformation optics. SIAM Review \textbf{51}(1), 3--33 (2009) 

\bibitem{Greenleaf2} 
 Greenleaf, A.,  Kurylev, Y.,  Lassas, M.,   Uhlmann, G.: Invisibility and Inverse Problems. Bulletin AMS \textbf{46}(1), 55-97 (2009)

\bibitem{GLU} 
Greenleaf, A.,  Lassas, M.,  Uhlmann, G.: Anisotropic conductivities that cannot be detected by EIT.  Physiolog. Meas. \textbf{24}(2), 413--420 (2003)

\bibitem{GLU2} 
 Greenleaf, A.,  Lassas, M.,   Uhlmann,  G.: On nonuniqueness for Calder\'on's inverse problem. Math. Res. Lett.  \textbf{10}(5),  685--693 (2003) 

\bibitem{HL2015} 
 Hu, G.,   Liu, H.: Nearly cloaking the elastic wave fields. J. Math. Pures Appl. \textbf{104}(6), 
 1045--1074 (2015) 
 
\bibitem{Kupradze79}
Kupradze, V. D.,  Gegelia, T. G.,  Basheleishvili, M. O.,   Burchuladze, T. V.:  Three-dimensional Problems of the Mathematical Theory of Elasticity and Thermoelasticity.  North-Holland Publishing Company, Amsterdam-New York-Oxford (1979)


\bibitem{Leo} 
Leonhardt, U.:  Optical conformal mapping. Science \textbf{312}(5781), 1777--1780 (2006)  

\bibitem{LimYu} 
 Lim, M.,   Yu, S.: Reconstruction of the shape of an inclusion from elastic moment tensors.  In: Mathematical and Statistical Methods for Imaging, pp. 61--76, Contemp. Math., vol. 548. Amer. Math. Soc., Providence, RI (2011) 

\bibitem{Martin03}
 Martin, P. A.: On the connections between boundary integral equations and T-matrix methods. Eng. Anal. Bound. Elem. \textbf{27}(7), 771--777 (2003) 


\bibitem{Martin06}
 Martin, P. A.: 
 Multiple Scattering: Interaction of Time-harmonic Waves and N Obstacles. Cambridge University Press, New York (2006)


\bibitem{MBW} 
 Milton, G. W.,  Briane, M., and  Willis, J.R.: 
 On cloaking for elasticity and physical equations with a transformation invariant form.  New J. Phys. \textbf{8}, 248 (2006),


\bibitem{Rev1}
 Mishchenko, M. I.,  Videen, G.,  Babenko, V. A.,  Khlebtsov, N. G.,  Wriedt, T.: 
 T-matrix theory of electromagnetic scattering by particles and its applications: A comprehensive reference database.  J. Quant Spectrosc. Radiat. Transfer \textbf{88}(1-3), 357--406 (2004) 


\bibitem{Rev2}
Mishchenko, M. I.,  Videen, G.,  Babenko, V. A.,  Khlebtsov, N. G.,   Wriedt, T.: 
Comprehensive   T-matrix reference database: A 2004--2006 update. J. Quant Spectrosc. Radiat. Transfer  \textbf{106}(1-3), 304--324 (2007) 


\bibitem{Morse} 
 Morse, P. M.,  Feshbach, H.: 
 Methods of Theoretical Physics, vols. I and II.  McGraw-Hill, NY (1953)

\bibitem{nedelec}  N\'ed\'elec, J. C.:
 Acoustic and Electromagnetic Equations: Integral Representations for Harmonic Problems. App. Math. Sci.,  vol. 144. Springer-Verlag, New York (2001)

\bibitem{Norris11} 
 Norris,  A., Shuvalov,  A.: 
 Elastic cloaking theory. Wave Motion  \textbf{48}(6), 525--538  (2011)

\bibitem{NIST} 
Olver, F. W. J.,  Lozier, D. W.,  Boisvert, R. F.,   Clark, C. W. (eds.):  NIST Handbook of Mathematical Functions.  Cambridge University Press, New York (2010)

\bibitem{Ottaviani}
 Ottaviani, E.,   Pierotti, D., Reconstruction of scattering data by the optical theorem. In:   Proc. IEEE Ultrasonics Symp., 1989, IEEE, Piscataway, NJ, vol. 2, pp. 917--920 (1989)


\bibitem{Parnell} 
 Parnell,  W.: 
 Nonlinear pre-stress for cloaking from antiplane elastic waves.  Proc. Royal Soc. A  \textbf{468}(2138), 563--580 (2012) 

\bibitem{PenSchSmi} 
 Pendry, J.,  Schurig, D.,   Smith, D.: 
 Controlling electromagnetic fields.  Science  \textbf{312}(5781), 1780--1782 (2006) 

\bibitem{Sevroglou} 
 Sevroglou, V.,  Pelekanos, G.: 
 Two-dimensional elastic Herglotz functions and their applications in inverse scattering.  J.  Elast. \textbf{68}(1), 123--144 (2002) 

\bibitem{Rev3}
 Varadan, V. V.,   Lakhtakia, A.,  Varadan,  V. K. : 
 Comments on recent criticism of the T-matrix method. J. Acoust. Soc. Am. \textbf{84}(6), 2280--2284 (1988) 

\bibitem{Varadan80}
 Varadan, V. K.,  Varadan V. V.(eds.):  Electromagnetic and Elastic Wave Scattering - Focus on the T-matrix Approach.  Pergamon Press Inc.,  Oxford (1980)


\bibitem{Varath2}
 Varatharajulu, V.: 
 Reciprocity relations and forward amplitude theorems for elastic waves.  J. Math. Phys.  \textbf{18}(4), 537--543 (1977) 


\bibitem{Varath}  
Varatharajulu  V.,  Pao, Y. H.: 
Scattering matrix for elastic waves. I. Theory. J. acoust. Soc. Am. \textbf{60}(3), 556--566 (1976) 

\bibitem{Waterman-em}
 Waterman, P. C.: 
 Matrix formulation of electromagnetic scattering. Proc. IEEE \textbf{53}(8), 805--812 (1965) 


\bibitem{Waterman-a}
 Waterman, P. C.: 
 New formulation of acoustic scattering. J. Acoust. Soc. Am. \textbf{45}(6), 1417--1429 (1969) 


\bibitem{Waterman}
 Waterman, P. C.: 
 Matrix theory of elastic wave scattering.  J. Acoust. Soc. Am. \textbf{60}(3), 567--580 (1976) 

\end{thebibliography}

\end{document}